\newif\ifllncs
\theoremstyle{plain}
\newtheorem{theorem}{Theorem}
\newtheorem{lemma}[theorem]{Lemma} 
\newtheorem{corollary}[theorem]{Corollary}
\newtheorem{claim}[theorem]{Claim}
\theoremstyle{definition}
\newtheorem{definition}{Definition}
\newtheorem{remark}{Remark} 
\newtheorem{assumption}{Assumption}
\newtheorem{conjecture}[theorem]{Conjecture}
\newtheorem{algorithm}[theorem]{Algorithm}
\newtheorem{problem}[theorem]{Problem}
\newcommand{\F}{{\mathbb F}} 
\newcommand{\Z}{\mathbb{Z}} 
\newcommand{\N}{\mathbb{N}} 
\newcommand{\R}{\mathbb{R}}
\newcommand{\bv}{{\mathbf{b}}}
\newcommand{\cv}{{\mathbf{c}}}
\newcommand{\ev}{{\mathbf{e}}}
\newcommand{\pv}{{\mathbf{p}}}
\newcommand{\rv}{{\mathbf{r}}}
\newcommand{\sv}{{\mathbf{s}}}
\newcommand{\tv}{{\mathbf{t}}}
\newcommand{\uv}{{\mathbf{u}}}
\newcommand{\vv}{{\mathbf{v}}}
\newcommand{\wv}{{\mathbf{w}}}
\newcommand{\xv}{{\mathbf{x}}}
\newcommand{\yv}{{\mathbf{y}}}
\newcommand{\zv}{{\mathbf{z}}}
\newcommand{\onev}{{\boldsymbol{1}}}
\newcommand{\Id}{{\mathbf{I}}}
\newcommand{\Am}{{\mathbf{A}}}
\newcommand{\Bm}{{\mathbf{B}}}
\newcommand{\Cm}{{\mathbf{C}}}
\newcommand{\Gm}{{\mathbf{G}}}
\newcommand{\Mm}{{\mathbf{M}}}
\newcommand{\Rm}{{\mathbf{R}}}
\newcommand{\Sm}{{\mathbf{S}}}
\newcommand{\Tm}{{\mathbf{T}}}
\newcommand{\Um}{{\mathbf{U}}}
\newcommand{\Wm}{{\mathbf{W}}}
\newcommand{\Bs}{{\mathcal{B}}}
\newcommand{\Is}{{\mathcal{I}}}
\newcommand{\Js}{{\mathcal{J}}}
\newcommand{\Ls}{{\mathcal{L}}}
\newcommand{\Rs}{{\mathcal{R}}}
\newcommand{\Ss}{{\mathcal{S}}}
\newcommand{\Ts}{{\mathcal{T}}}
\newcommand{\Us}{{\mathcal{U}}}
\newcommand{\Vs}{{\mathcal{V}}}
\newcommand{\Xs}{{\mathcal{X}}}
\newcommand{\Ys}{{\mathcal{Y}}}
\newcommand{\negl}{{\sf negl}}
\newcommand{\adv}{{\sf A}}
\newcommand{\setupprotocol}[2]{%
	\expandafter\newcommand\csname protocol#1\endcsname{{\sf \Pi_{#2}}}%
	\expandafter\newcommand\csname gen#1\endcsname{{\sf Gen_{#2}}}%
	\expandafter\newcommand\csname enc#1\endcsname{{\sf Enc_{#2}}}%
	\expandafter\newcommand\csname dec#1\endcsname{{\sf Dec_{#2}}}%
	\expandafter\newcommand\csname sign#1\endcsname{{\sf Sign_{#2}}}%
	\expandafter\newcommand\csname ver#1\endcsname{{\sf Ver_{#2}}}%
	\expandafter\newcommand\csname prov#1\endcsname{{\sf Prov_{#2}}}%
	\expandafter\newcommand\csname obf#1\endcsname{{\sf Obf_{#2}}}%
	\expandafter\newcommand\csname genlossy#1\endcsname{{\sf GenLossy_{#2}}}%
	\expandafter\newcommand\csname eval#1\endcsname{{\sf Eval_{#2}}}%
	
	\expandafter\newcommand\csname msk#1\endcsname{{\sf msk_{#2}}}%
	\expandafter\newcommand\csname sk#1\endcsname{\sf sk_{#2}}%
	\expandafter\newcommand\csname pk#1\endcsname{\sf pk_{#2}}%
	\expandafter\newcommand\csname vk#1\endcsname{{\sf vk_{#2}}}%
}
\newcommand{\aux}{{\sf aux}}
\newcommand{\fe}{{\sf FE}}
\newcommand{\setup}{{\sf Setup}}
\newcommand{\pp}{{\sf pp}}
\newcommand{\keygen}{{\sf keygen}}
\newcommand{\puncture}{{\sf puncture}}
\newcommand{\ct}{{\sf ct}}
\newcommand{\adversary}{\mathcal{A}}
\newcommand{\rerand}{{\sf ReRand}}
\newcommand{\invertre}{{\sf ReRand^{-1}}}
\newcommand{\recoverrand}{{\sf RecoverR}}
\newcommand{\cO}{\mathcal{O}}
\newcommand{\st}{{\sf st}}
\def\E{\mathop{\mathbb E}}
\newcommand{\ignore}[1]{}
\definecolor{block-gray}{gray}{0.95}
\newtcolorbox{aquote}[2][]{%
	colback=block-gray,
	grow to right by=-10mm,
	grow to left by=-10mm, 
	boxrule=0pt,
	boxsep=0pt,
	arc=0pt,
	outer arc=0pt,
	breakable,
	enhanced jigsaw,
	borderline west={4pt}{0pt}{gray},
	title={#2\par},
	colbacktitle={block-gray},
	coltitle={black},
	fonttitle={\large\bfseries},
	attach title to upper={},
	#1,
}
\begin{document}

\title{%
Another Round of Breaking and Making Quantum Money: \\
\large How to Not Build It from Lattices, and More}
\author{
Jiahui Liu\footnote{ University of Texas at Austin. Email: jiahui@utexas.edu}
\and
Hart Montgomery\footnote{ Linux Foundation \& Fujitsu Research. Email: hart.montgomery@gmail.com}  \and
Mark Zhandry \footnote{NTT Research. Email: mzhandry@gmail.com}}
\date{}

\maketitle

\begin{abstract}
Public verification of quantum money has been one of the central objects in quantum cryptography ever since Wiesner's pioneering idea of using quantum mechanics to construct banknotes against counterfeiting. So far, we do not know any publicly-verifiable quantum money scheme that is provably secure from standard assumptions.

In this work, we provide both negative and positive results for publicly verifiable quantum money.
\begin{itemize}
	\item In the first part, we give a general theorem, showing that a certain natural class of quantum money schemes from lattices cannot be secure. We use this theorem to break the recent quantum money scheme of Khesin, Lu, and Shor.
	\item In the second part, we propose a framework for building quantum money and quantum lightning we call \emph{invariant money} which abstracts some of the ideas of quantum money from knots by Farhi et al.(ITCS'12). In addition to formalizing this framework, we provide concrete hard computational problems loosely inspired by classical knowledge-of-exponent assumptions, whose hardness would imply the security of \emph{quantum lightning}, a strengthening of quantum money where not even the bank can duplicate banknotes. 

    \item We discuss potential instantiations of our framework, including an oracle construction using cryptographic group actions and instantiations from rerandomizable functional encryption, isogenies over elliptic curves, and knots.
\end{itemize}
\end{abstract}


\section{Introduction}\label{sec:intro}

\subsection{Motivation}

Quantum information promises to revolutionize cryptography. In particular, the no cloning theorem of quantum mechanics opens the door to \emph{quantum cryptography}: cryptographic applications that are simply impossible classically. The progenitor of this field, due to Wiesner~\cite{Wiesner83}, is quantum money: quantum digital currency that cannot be counterfeited due to the laws of physics. Since Wiesner's proposal, many applications of quantum information to cryptography have been proposed, including quantum key distribution (QKD) ~\cite{BenBra84}, randomness expansion~\cite{Colbeck09,STOC:CouYue14,FOCS:BCMVV18}, quantum copy protection~\cite{CCC:Aaronson09,EC:AnaLap21,C:ALLZZ21,C:CLLZ21}, quantum one-time programs~\cite{C:BroGutSte13}, and much more.

Throughout the development of quantum cryptography, quantum money has remained a central object, at least implicitly. Indeed, the techniques used for quantum money are closely related to those used in other applications. For example, the first message in the BB84 quantum QKD protocol~\cite{BenBra84} is exactly a banknote in Wiesner's scheme. The techniques used by~\cite{FOCS:BCMVV18} to prove quantumness using classical communication have been used to construct quantum money with classical communication~\cite{RadSat19}. The subspace states used by~\cite{STOC:AarChr12} to construct quantum money were recently used to build quantum copy protection~\cite{C:ALLZZ21}.

\paragraph{The Public Verification Barrier.} Wiesner's scheme is only privately verifiable, meaning that the mint is needed to verify. This results in numerous weaknesses. Improper verification opens the scheme to active attacks~\cite{Lutomirski10}. Moreover, private verification is not scalable, as the mint would be required to participate in every single transaction.  Wiesner's scheme also requires essentially perfect quantum storage, since otherwise banknotes in Wiesner's scheme will quickly decohere and be lost. 

All these problems are readily solved with \emph{publicly verifiable} quantum money\footnote{Sometimes it is also referred to as public-key quantum money. We may use the two terms interchangeably.}, where anyone can verify, despite the mint being the sole entity that can mint notes. Public verification immediately eliminates active attacks, and solves the scaling problem since the transacting users can verify the money for themselves. Aaronson and Christiano~\cite{STOC:AarChr12} also explain that public verifiability allows for also correcting any decoherance, so users can keep their banknotes alive indefinitely. 

Unfortunately, constructing convincing publicly verifiable quantum money has become a notoriously hard open question.
Firstly, some natural modifications to Wiesner's quantum money scheme will not give security under public verification \cite{farhi2010quantum}.
Aaronson~\cite{CCC:Aaronson09}, and later Aaronson and Christiano~\cite{STOC:AarChr12} gave publicly verifiable quantum money relative to quantum and classical oracles, respectively. Such oracle constructions have the advantage of provable security, but it is often unclear how to instantiate them in the real world\footnote{Quantum oracles are quantum circuits accessible only as a black-box unitary. They are generally considered as strong relativizing tools when used in proofs. Classical oracles are black-box classical circuits, a much weaker tool.}: in both~\cite{CCC:Aaronson09} and~\cite{STOC:AarChr12}, ``candidate'' instantiations were proposed, but were later broken~\cite{ITCS:LAFGKH10,PDFHP19}. Another candidate by Zhandry~\cite{EC:Zhandry19b} was broken by Roberts~\cite{EC:Roberts21}. Other candidates have been proposed~\cite{ITCS:FGHLS12,Kane18,EPRINT:KanShaSil21}, but they all rely on new, untested assumptions that have received little cryptanalysis effort. The one exception, suggested by~\cite{BenSat16} and proved by~\cite{EC:Zhandry19b}, uses indistinguishability obfuscation (iO) to instantiate Aaronson and Christiano's scheme~\cite{STOC:AarChr12}. Unfortunately, the post-quantum security of iO remains poorly understood, with all known constructions of post-quantum iO~\cite{TCC:GenGorHal15,TCC:BGMZ18,EPRINT:BDGM20b,EC:WeeWic21} being best labeled as candidates, lacking justification under widely studied assumptions.

Thus, it remains a major open question to construct publicly verifiable quantum money from standard cryptographic tools. Two such post-quantum tools we will investigate in this work are the two most influential and well-studied: lattices and isogenies over elliptic curves.

This public verification barrier is inherited by many proposed applications of quantum cryptography. For example, quantum copy protection for any function whose outputs can be verified immediately implies a publicly verifiable quantum money scheme. As such, all such constructions in the standard model~\cite{C:ALLZZ21,C:CLLZ21} require at a minimum a computational assumption that implies quantum money.\footnote{This holds true even for certain weaker versions such as copy \emph{detection}, also known as infinite term secure software leasing.}

\paragraph{Quantum Money Decentralized: Quantum Lightning}

An even more ambitious goal is a publicly verifiable quantum money 
where the bank/mint itself should \emph{not} be capable of duplicating money states. To guarantee unclonability, the scheme should have a "collision-resistant" flavor: no one can (efficiently) generate two valid money states with the same serial number.
This notion of quantum money appeared as early in \cite{ITCS:LAFGKH10}; the name "quantum lightning"  was given in \cite{EC:Zhandry19b}. 

Quantum lightning has broader and more exciting applications: as discussed in \cite{EC:Zhandry19b, coladangelo2019smart, coladangelo2020quantum,amos2020one}, it can be leveraged as verifiable min-entropy, useful building blocks to enhance blockchain/smart contract protocols and moreover, it could lead to decentralized cryptocurrency without a blockchain. 

 Quantum money has a provably secure construction from iO, a strong cryptographic hammer but still a widely used assumption. On the other hand, quantum lightning from even \emph{relatively standard-looking} assumptions remains open. Some existing constructions \cite{Kane18, EPRINT:KanShaSil21} use strong oracles such as quantum oracles, with conjectured instantiations that did not go through too much cryptanalysis.
 \cite{ITCS:FGHLS12} is another candidate built from conjectures in knot theory. But a correctness proof and security reduction are not provided in their paper.
 
 \paragraph{Collapsing vs. Non-Collapsing}
 With a close relationship to quantum money, collapsing functions \cite{EC:Unruh16} are a central concept in quantum cryptography. 
 A collapsing function $f$ says that one should not be able to distinguish a superposition of pre-images $\frac{|x_1\rangle + |x_2\rangle \cdots |x_k\rangle}{\sqrt{k}}$, from a measured pre-image $|x_i\rangle, i\in [k]$ for some image $y = f(x_i)$, for all $i \in [k]$.
 
 While collapsing functions give rise to secure post-quantum cryptography like commitment schemes, its precise opposite is necessary for quantum money: if no verification can distinguish a money state in a superposition of many supports from its measured state, a simple forgery comes ahead.
Hence, investigating the collapsing/non-collapsing properties of hash functions from lattices and isogenies will provide a win-win insight into quantum money and post-quantum security of existing cryptographic primitives.

\section{Our Results}

In this work, we give both negative and positive results for publicly verifiable quantum money. 

\paragraph{Breaking Quantum Money.} Very recent work by Khesin, Lu, and Shor~\cite{KLS22} claims to construct publicly verifiable quantum money from the hardness of worst-case lattice problems, a standard assumption. Our first contribution is to identify a fatal flaw in their security proof, and moreover show how to exploit this flaw to forge unlimited money. After communicating this flaw and attack, the authors of~\cite{KLS22} have retracted their paper.\footnote{We thank the authors of~\cite{KLS22} for patiently answering our numerous questions about their work, which was instrumental in helping us identify the flaw.}

More importantly, we show that a general class of \emph{natural} money schemes based on lattices \emph{cannot} be both secure and publicly verifiable. We consider protocols where the public key is a short wide matrix $\Am^T$, and a banknote with serial number $\uv$ is a superposition of ``short'' vectors $\yv$ such that $\Am^T\cdot\yv=\uv\bmod q$. Our attack works whenever $\Am^T$ is uniformly random. We also generalize this to handle the case where $\Am^T$ is uniform conditioned on having a few public short vectors in its kernel. This generalization includes the Khesin-Lu-Shor scheme as a special case. Our result provides a significant barrier to constructing quantum money from lattices.

Along the way, we prove that the SIS hash function is \emph{collapsing}~\cite{EC:Unruh16} for all moduli, resolving an important open question in the security of post-quantum hash functions.\footnote{Previously,~\cite{C:LiuZha19} showed that SIS was collapsing for a super-polynomial modulus.}

\paragraph{Invariant Money/Lightning.} To complement our negative result, we propose a new framework for building quantum money, based on invariants. Our framework abstracts some of the ideas behind the candidate quantum money from knots in~\cite{ITCS:FGHLS12} and behind \cite{ITCS:LAFGKH10}. Our main contributions here are two-fold:
\begin{itemize}
	\item We propose a (classical) oracle construction that implements our framework assuming the existence of a quantum-secure cryptographic group action and a relatively modest assumption about \emph{generic} cryptographic group actions.  We then give proposals for instantiating our invariant framework on more concrete assumptions. The first is based on isogenies over elliptic curves\footnote{The recent attacks   \cite{castryck2022efficient,maino2022attack,robert2022breaking} on SIDH do not apply to the isogeny building blocks we need. We will elaborate in the \ref{sec:ellipticcurve} and \ref{prelim:groupaction}}; the second is based on rerandomizable functional encryption with certain properties; 
	finally, we also discuss the quantum money from knots construction in~\cite{ITCS:FGHLS12} with some modifications.
	
	\item In order to gain confidence in our proposals, we for the first time formalize abstract properties of the invariant money under which security can be proved. Concretely, we prove that a certain mixing condition is sufficient to characterize the states accepted by the verifier, and in particular prove correctness\footnote{\cite{ITCS:FGHLS12} did not analyze correctness of their knot-based proposal, nor analyze the states accepted by their verifier and formalize the property needed for a security proof. \cite{ITCS:LAFGKH10} had informal correctness analysis on their proposal, but also did not analyze the security property needed.}. 
	We also propose ``knowledge of path'' security properties for abstract invariant structures which would be sufficient to justify security. These knowledge of path assumptions are analogs of the ``knowledge of exponent'' assumption on groups proposed by Damg\r{a}rd~\cite{C:Damgaard91}. Under these assumptions, we are even able to show that the invariants give quantum \emph{lightning}~\cite{ITCS:LAFGKH10,EC:Zhandry19b}, the aforementioned strengthening of quantum money that is known to have additional applications.
	
	Note that the knowledge of exponent assumption in groups is quantumly broken on groups due to the discrete logarithm being easy. However, for many of our assumptions, which are at least conjectured to be quantum-secure, the analogous knowledge of path assumption appears plausible, though certainly more cryptanalysis is needed to gain confidence. The main advantage of our proposed knowledge of path assumption is that it provides a concrete cryptographic property that cryptographers can study and analyze with a well-studied classical analog.
\end{itemize}


\section{Technical Overview}

\subsection{How to Not Build Quantum Money from Lattices} \label{sec:overviewlattice}

We first describe a natural attempt to construct quantum money from lattices, which was folklore but first outlined by Zhandry~\cite{EC:Zhandry19b}. The public key will contain a random tall matrix $\Am\in\Z_q^{m \times n},m\gg n$. To mint a banknote, first generate a superposition $|\psi\rangle=\sum_{\yv}\alpha_\yv|\yv\rangle$ of short vectors $\yv\in\Z^m$, such that $|\yv|\ll q$. A natural $|\psi\rangle$ is the discrete-Gaussian-weighted state, where $\alpha_\yv\propto \sqrt{e^{-\pi |\yv|^2/\sigma^2}}$ for a width parameter $\sigma$. Then compute in superposition and measure the output of the map $\yv\mapsto \Am^T\cdot\yv\bmod q$, obtaining $\uv\in\Z_q^n$. The state collapses to:
\[|\psi_\uv\rangle\propto\sum_{\yv:\Am^T\cdot\yv=\uv}\alpha_\yv|\yv\rangle\enspace .\]
This will be the money state, and $\uv$ will be the serial number. This state can presumably not be copied: if one could construct two copies of $|\psi_\uv\rangle$, then one could measure both, obtaining two short vectors $\yv,\yv'$ with the same coset $\uv$. As $|\psi_\uv\rangle$ is a superposition of many vectors (since $m\gg n$), with high probability $\yv\neq \yv'$. Subtracting gives a short vector $\yv-\yv'$ such that $\Am^T\cdot(\yv-\yv')=0$,  solving the Short Integer Solution (SIS) problem. SIS is presumably hard, and this hardness can be justified based on the hardness of worst-case lattice problems such as the approximate Shortest Vector Problem (SVP).

The challenge is: how to verify $|\psi_\uv\rangle$? Certainly, one can verify that the support of a state is only short vectors $\yv$ such that $\Am^T\cdot\yv=\uv$. But this alone is not enough: one can fool such a verification by any \emph{classical} $\yv$ in the support of $|\psi_\uv\rangle$. To forge then, an adversary simply measures $|\psi_\uv\rangle$ to obtain $\yv$, and then copies $\yv$ as many times as it likes.

To get the scheme to work, then, one needs a verifier that can distinguish classical $\yv$ from superpositions. This is a typical challenge in designing publicly verifiable money schemes. A typical approach is to perform the quantum Fourier transform (QFT): the QFT of $\yv$ will result in a uniform string, whereas the QFT of $|\psi_\uv\rangle$ will presumably have structure. Indeed, if $|\psi_{\uv}\rangle$ is the Gaussian superposition, following ideas of Regev~\cite{STOC:Regev05}, the QFT of $|\psi_\uv\rangle$ will be statistically close to a superposition of samples $\Am\cdot \rv+\ev$, where $\rv$ is uniform in $\Z_q^n$, and $\ev\in\Z_q^m$ is another discrete Gaussian of width $q/\sigma$. The goal then is to distinguish such samples from uniform.

Unfortunately, such distinguishing is likely hard, as this task is the famous (decisional) Learning with Errors (LWE) problem. LWE is presumably hard, which can be justified based on the hardness of the same worst-case lattice problems as with SIS, namely SVP. So either LWE is hard, or the quantum money scheme is insecure in the first place.

Nevertheless, this leaves open a number of possible strategies for designing quantum money from lattices, including:
\begin{enumerate}
	\item\label{enum:nongauss} What if non-Gaussian $|\psi\rangle$ is chosen?
	\item\label{enum:otherdist} What if distinguishing is not done via the QFT but some other quantum process?
	\item\label{enum:easylwe} What if we somehow make LWE easy?
\end{enumerate}

The first significant barrier beyond the hardness of LWE is due to Liu and Zhandry~\cite{C:LiuZha19}. They show that, if the modulus $q$ is super-polynomial, then the map $\yv\mapsto\Am^T\cdot\yv$ for a random $\Am$ is \emph{collapsing}~\cite{EC:Unruh16}: that is, for \emph{any} starting state $|\psi_{\uv} \rangle$ of short vectors, distinguishing $|\psi_\uv\rangle$ from $\yv$ is infeasible for \emph{any} efficient verification process. Collapsing is the preferred notion of post-quantum security for hash functions, as it is known that collision resistance is often not sufficient for applications when quantum adversaries are considered.

The result of~\cite{C:LiuZha19} follows from the hardness of LWE (which is quantumly equivalent to SIS~\cite{STOC:Regev05}), albeit with a noise rate super-polynomially smaller than $q/\sigma$ which is a stronger assumption than the hardness with rate $q/\sigma$. Moreover, their result requires $q$ to be super-polynomially larger than $\sigma$. In practice, one usually wants $q$ to be polynomial, and the result of~\cite{C:LiuZha19} leaves open the possibility of building quantum money in such a setting.

What about making LWE easy (while SIS remains hard)? The usual approach in the lattice literature to making decisional LWE easy is to output a short vector $\sv$ in the kernel of $\Am^T$. If $|\sv|\ll (q/\sigma)$, this allows for distinguishing LWE samples from uniform, since $\sv\cdot (\Am\cdot \rv+\ev)=\sv\cdot\ev$, which will be small relative to $q$, while $\sv\cdot\xv$ for uniform $\xv$ will be uniform in $\Z_q$. Unfortunately, adding such short vectors breaks the security proof, since $\sv$ is a SIS solution, solving SIS is trivially easy by outputting $\sv$. To revive the security, one can try reducing to the 1-SIS problem, which is to find a short SIS solution that is linearly independent of $\sv$. 1-SIS can be proved hard based on the same worst-case lattice problems as SIS~\cite{PKC:BonFre11}. However,  in the scheme above, it is not clear if measuring two forgeries and taking the difference should result in a vector linearly independent of $\sv$.

\paragraph{The Recent Work of~\cite{KLS22}.} Very recently, Khesin, Lu, and Shor~\cite{KLS22} attempt to provide a quantum money scheme based on lattices. Their scheme has some similarities to the blueprint discussed above, taking advantage of each of the strategies~\ref{enum:nongauss},~\ref{enum:otherdist} and~\ref{enum:easylwe}. But there are other differences as well: the state $|\psi\rangle$ is created as a superposition over a lattice rather than the integers, and the measurement of $\uv$ is replaced with a move complex general positive operator-value measurement (POVM).~\cite{KLS22} claims to prove security under the hardness of finding a second short vector in a random lattice when already given a short vector. This problem is closely related to 1-SIS, and follows also from the hardness of worst-case lattice problems.

\paragraph{Our Results.} First, we show an alternative view of~\cite{KLS22} which shows that it does, indeed, fall in the above framework. That is, there is a way to view their scheme as starting from $|\psi\rangle$ that is a non-Gaussian superposition of short integer vectors $\yv$. The minting process in our alternate view then measures $\Am^T\cdot\yv$, where $\Am$ is part of the public key, and is chosen to be uniform except that it is orthogonal to 3 short vectors $\sv_0,\sv_1,\sv_2$. These vectors play a role in verification, as they make the QFT non-uniform. Using this alternative view, we also demonstrate a flaw in the security proof of~\cite{KLS22}, showing that forged money states actually do not yield new short vectors in the lattice. See Section~\ref{sec:flaw} for details. 

We then go on (Section~\ref{sec:attack}) to show an explicit attack against their money scheme. More generally, we show an attack on a wide class of instantiations of the above framework. Our attack works in two steps:
\begin{itemize}
	\item First, we extend the collapsing result of~\cite{C:LiuZha19} to also handle the case of polynomial modulus, and in particular, we only need LWE to be hard for noise rate that is slightly smaller than $q/\sigma$. This resolves an important open by showing that SIS is collapsing for all moduli.
	
	Our proof requires a novel reduction that exploits a more delicate analysis of the quantum states produced in the proof of~\cite{C:LiuZha19}. We also extend the result in a meaningful way to the case where several short kernel vectors $\sv_0,\sv_1,\dots$ are provided. We show that instead of just using $\yv$ as a forgery (which can be distinguished using the short vectors $\sv_i$), a particular superposition over vectors of the form $\yv+\sum_i c_i \sv_i$ can fool any efficient verification. Fooling verification requires the hardness a certain ``$k$-LWE'' problem, which we show follows from worst-case lattice problems in many settings (see Section~\ref{sec:klwe}). This requires us to extend the known results on $k$-LWE hardness, which may be of independent interest.
	\item Then we show how to construct such a superposition efficiently given only $\yv$ and the $\sv_i$, in many natural settings. Our settings include as a special case the setting of~\cite{KLS22}. Along the way, we explain how to construct Gaussian superpositions over lattices, when given a short basis. The algorithm is a coherent version of the classical discrete Gaussian sampling algorithm~\cite{STOC:GenPeiVai08}. In general, it is not possible to take a classical distribution and run it on a superposition of random coins to get a superposition with weights determined by the distribution. This is because the random coins themselves will be left behind and entangled with the resulting state. We show how to implement the classical algorithm coherently in a way that does not leave the random coins behind or any other entangled bits. Such an algorithm was previously folklore (e.g. it was claimed to exist without justification by~\cite{KLS22}), but we take care to actually write out the algorithm.
\end{itemize}
After communicating this flaw and attack to the authors of~\cite{KLS22}, they have retracted their paper.\footnote{We once again want to emphasize that the authors of~\cite{KLS22} were exceptionally helpful and we thank them for their time spent helping us understand their work.}

\subsection{Quantum Money from Walkable Invariants}

In the second part of the paper, we describe a general framework for instantiating publicly verifiable quantum money from invariants satisfying certain conditions. This framework abstracts the ideas behind the construction of quantum money from knots~\cite{ITCS:FGHLS12} and its precedent \cite{ITCS:LAFGKH10}. 

At a high level, we start from a set $X$, which is partitioned into many disjoint sets $O\subseteq X$. There is a collection of efficiently computable (and efficiently invertible) permutations on $X$, such that for every permutation in the collection and every $O$ in the partition, the permutation maps elements of $O$ to $O$. Such a set of permutations allows one to take an element $x\in O$, and perform a walk through $O$. We additionally assume an invariant $I:X\rightarrow Y$ on $X$, such that $I$ is constant on each element $O$ of the partition. In other words, $I$ is invariant under action by the collection of permutations.

In the case of~\cite{ITCS:FGHLS12}, $X$ is essentially the set of knot diagrams\footnote{Due to certain concerns about security,~\cite{ITCS:FGHLS12} actually sets $X$ to contain extra information beyond a knot diagram.}, the permutations are Reidemeister moves, and the invariant is the Alexander polynomial. 

An honest quantum money state will essentially be a uniform superposition over $O$\enspace\footnote{Technically, it is a uniform superposition over the pre-images of some $y$ in the image of $I$. If multiple $O$ have the same $y$, then the superposition will be over all such $O$.}. Such a state is constructed by first constructing the uniform superposition over $X$, and then measuring the invariant $I$. Applying a permutation from the collection will not affect such a state. Thus, verification attempts to test whether the state is preserved under action by permutations in the collection by performing an analog of a swap test, and only accepts if the test passes.

In~\cite{ITCS:FGHLS12}, it is explained why certain attack strategies are likely to be incapable of duplicating banknotes. However, no security proof is given under widely believed hard computational assumptions. To make matters worse,~\cite{ITCS:FGHLS12} do not analyze what types of states are accepted by the verifier. It could be, for example, that duplicating a banknote perfectly is computationally infeasible, but there are fake banknotes that pass verification that can be duplicated; this is exactly what happens in the lattice-based schemes analyzed above in Section~\ref{sec:overviewlattice}. Given the complexities of their scheme, there have been limited efforts to understand the security of the scheme. This is problematic, since there have been many candidates for public key quantum money that were later found to be insecure. 

Generally, a fundamental issue with public key quantum money schemes is that, while quantum money schemes rely on the no-cloning principle, the no-cloning theorem is information-theoretic, whereas publicly verifiable quantum money is always information-theoretically clonable. So unclonability crucially relies on the adversary being computationally efficient. Such computational unclonability is far less understood than traditional computational tasks. Indeed, while there have been a number of candidate post-quantum hard computational tasks, there are very few quantum money schemes still standing. The challenge is in understanding if and how quantum information combines with computational bounds to give computational unclonability.

To overcome this challenge, the security analysis should be broken into two parts: one part that relies on \emph{information-theoretic} no-cloning, and another part that relies on a computational hardness assumption. Of course, the security of the scheme itself could be such an assumption, so we want to make the assumption have nothing to do with cloning. One way to accomplish this is to have the assumption have classical inputs and outputs (which we will call ``classically meaningful''), so that it could in principle be falsified by a classical algorithm, which are obviously not subject to quantum unclonability. Separating out the quantum information from the computational aspects would hopefully give a clearer understanding of why the scheme should be unclonable, hopefully allow for higher confidence in security. Moreover, as essentially all widely studied assumptions are classically meaningful, any attempt to prove security under a widely studied assumptions would have to follow this blueprint, and indeed the proof of quantum money from obfuscation~\cite{EC:Zhandry19b} is of this form.

\paragraph{Our Results.} In this work, we make progress towards justifying invariant-based quantum money.
\begin{itemize}
	\item First, we prove that if a random walk induced by the collection of permutations mixes, then we can completely characterize the states accepted by verification. The states are exactly the uniform superpositions over $O$\enspace\footnote{Or more generally, if multiple $O$ have the same $y$, then accepting states are exactly those that place equal weight on elements of each $O$, but the weights may be different across different $O$}. Unfortunately, it is unclear if the knot construction actually mixes, and any formal proof of mixing seems likely to advance knot theory\footnote{Nevertheless we provide a discussion on the knot money instantiation in \ref{sec:knot_instantiation}.}.
	\item Second, we provide concrete security properties under which we can prove security. These properties, while still not well-studied, at least have no obvious connection to cloning, and are meaningful even classically. Under these assumptions, we can even prove that the schemes are in fact quantum lightning,  the aforementioned strengthening of quantum money where not even the mint can create two banknotes of the same serial number. 
\end{itemize}

 \paragraph{Our Hardness Assumptions}
 We rely on two hardness assumptions in our invariant money scheme for a provably secure: the \emph{path-finding} assumption and \emph{knowledge of path finding} assumption.  

Informally speaking, the path-finding assumption states that, given some adversarially sampled $x$ from a set of elements $X$ and given a set of "permutations" $\Sigma$, it is hard for any efficient adversary, given a random $z \in X$, where there exists some $\sigma \in \Sigma$ such that $\sigma \left(x \right) = z$, to find such a $\sigma$. One can observe that it is similar to a ``discrete logarithm'' style of problem. Even though we cannot use discrete logarithm due to its quantum insecurity, we have similar hard problems in certain isogenies over ellitic curves, abstracted as "group action discrete logarithm" problems \cite{AC:ADMP20}.

\paragraph{Our Knowledge of Path Assumptions.} 
The main novel assumption we use is a ``knowledge of path'' assumption. This roughly says that if an algorithm outputs two elements $x,z$ in the same $O$, then it must ``know'' a path between them: a list of permutations from the collection that, when composed, would take $x$ to $z$. While such a knowledge of path assumption is undoubtedly a strong assumption, it seems plausible in a number of relevant contexts (e.g. elliptic curve isogenies that have no known non-trivial attacks or ``generic'' group actions).

Formalizing the knowledge of path assumption is non-trivial. The obvious \emph{classical} way to define knowledge of path is to say that for any adversary, there is an extractor that can compute the path between $x$ and $z$. Importantly, the extractor must be given the same random coins as the adversary, so that it can compute $x$ and $z$ for itself and moreover know what random choices the adversary made that lead to $x,z$. Essentially, by also giving the random coins, we would be effectively making the adversary deterministic, which is crucial for the extractor's output to be related to the adversary's output. 

Unfortunately, quantumly the above argument does not make much sense, as quantum algorithms can have randomness without having explicit random coins. In fact, there are quantum procedures that are \emph{inherently probabilistic}, in the sense that the process is efficient, but there is no way to run the process twice and get the same outcome both times. This is actually crucial to our setting: we are targeting the stronger quantum lightning, which means that even the mint cannot create two banknotes with the same serial number. This means that the minting process is inherently probabilistic. The adversary could, for example, run the minting process, but with its own minting key. Such an adversary would then be inherently probabilistic and we absolutely would need a definition that can handle such adversaries.

Our solution is to exploit the fact that quantum algorithms can always be implemented \emph{reversibly}. We then observe that with a classical reversible adversary, an equivalent way to define knowledge assumptions would be to just feed the entire \emph{final} state of the adversary (including output) into the extractor. By reversibility, this is equivalent to giving the input, coins included, to the extractor. But this alternate extraction notion actually \emph{does} make sense quantumly. Thus our knowledge of path assumption is defined as giving the extractor the entire final (quantum) state of the reversible adversary, and asking that the extractor can find a path between $x$ and $z$. This assumption allows us to bypass the issue of inherently probabilistic algorithms, and is sufficient for us to prove security.

\paragraph{Instantiations of Invariant Quantum Money and Lightning}
After we provide the characterization of security needed for invariant money, we discuss four candidate instantiations\footnote{Throughout the sections on invariant quantum money framework and construction \ref{sec:invariant_main}, \ref{sec:ellipticcurve}, \ref{sec:fe_instantiation}, \ref{sec:groupaction_instantiation} and \ref{sec:knot_instantiation}, we will sometimes interchangeably use   "money" or "lightning". But in fact the proposed candidates are all candidates for quantum lightning.}: 

\begin{itemize}
\item We show a construction from structured oracles and generic cryptographic group actions.  Notably, while we do not know how to instantiate these oracles, we can prove that this construction is secure assuming the existence of a cryptographic group action and the assumption that the knowledge of path assumption holds over a generic cryptographic group action.\footnote{This seems like a very plausible assumption to us:  classically, the knowledge of exponent would almost trivially hold over generic groups.}
\item We explain how re-randomizable functional encryption, a type of functional encryption with special properties that seem reasonable, can be used to build another candidate quantum lightning. We don't currently have a provably secure construction from standard cryptographic assumptions for this special re-randomizable functional encryption, but we provide a candidate construction based on some relatively well-studied primitives.

\item Elliptic curve isogenies are our final new candidate instantiation.  We outline how, given some assumptions about sampling certain superpositions of elliptic curves, it may be possible to build quantum lightning from isogeny-based assumptions.
\item Finally, we analyze the construction of quantum money from knots in~\cite{ITCS:FGHLS12} in our framework.  
\end{itemize}

For all these three constructions, we show that their corresponding \emph{path-finding} problem between two elements $x,z$ in the same $O$ is relatively straightforward to study (reducible to reasonably well-founded assumptions). Nevertheless, we need the knowledge of path assumptions to show that we can extract these paths from a (unitary) adversary. We believe that one may show a knowledge-of-path property when replacing some plain model components in the above candidates with (quantum accessible) \emph{classical} oracles, thus giving the possibility for a first quantum lightning scheme relative to only classical oracles and widely studied assumptions.

\section{Related Work and Discussion}
\label{sec:related_work}

\paragraph{Other Related Work.}  In addition to the related work we have discussed in the introduction, we would like to mention some other relevant work.

The schemes \cite{amos2020one} and \cite{shmueli2022public} can be viewed as quantum lightning with an interactive minting procedure: a multi-round interactive protocol between the bank and the user is needed to create a valid money state. \cite{amos2020one} is based on classical oracles and \cite{shmueli2022public} is based on (subexponential) iO. One may argue that ``compressing'' their interaction might lead to a standard quantum lightning protocol. However, after investigating the constructions, we believe that showing the security for such ``compressed'' protocols leads us back to the land of highly non-standard assumptions.

Semi-quantum money is a notion put forward in
\cite{RadSat19}. More precisely, the authors construct an interactive private-key quantum money scheme verifiable through classical interactions, built from LWE, with ideas from classically-verifiable proof of quantumness  \cite{FOCS:BCMVV18}. Later, the idea was extended to a notion in-between public and private key quantum money, which is called \emph{two-tier quantum lightning}
in \cite{kitagawa2021secure}. Due to the essential structure of the proof-of-quantumness protocol on which these two schemes are based, they cannot be made publicly verifiable.

Regarding non-collapsing functions: it has been shown that some "natural" classes of hash functions are collapsing \cite{EC:Unruh16,C:LiuZha19,zhandry2022new,cao2022gap}.
\cite{FOCS:ARU14} constructed non-collapsing functions using quantum oracles.

For the cryptanalysis on existing quantum money proposals: \cite{cryptanalysisqm2022bilyk} shows a quantum reduction from the conjecture in \cite{EPRINT:KanShaSil21} to a linear algebra problem, which is the only cryptanalysis work we know of on a quantum money scheme that is not broken yet.

\paragraph{Discussion and Open Problems.}
In this paper, we aimed to investigate the feasibility of quantum money and, in particular, quantum lightning.  While we hope that readers believe our work helps shed light on the subject, we still believe that this is a wide open area for study with important applications once we reach a world where quantum computers proliferate.

In particular, fully settling the question of whether or not it is possible to build quantum lightning from lattice-based assumptions would be a very exciting result.  Our paper rules out (arguably) the most natural class of schemes, but that does not mean that a less natural lattice-based quantum lightning scheme could exist.


\section{Preliminaries}\label{sec:prelim}
In this section we explain some background material needed for our work.  

For quantum notations,  we denote $| \cdot \rangle$ as the notation for a pure state and $| \cdot \rangle \langle \cdot |$ for its density matrix. $\rho$ denotes a general mixed state.  

We will go over some fundamental lattice facts and then move to quantum money definitions.
 Due to the restriction of space, we leave some additional lattice basics, hardness theorems and necessary quantum background (in particular related to lattices) to Appendix \ref{sec:appendix_prelim}.  


\subsection{Lattice Basics}
We say a distribution $\mathcal{D}$ is $\left(B, \delta \right)$-bounded if the probability that $\mathcal{D}$ outputs a value larger than $B$ is less than $\delta$.  We extend this to distributions that output vectors in an entry-by-entry way.  Given a set of vectors $\Bm = \left\{ \bv_{1}, ... , \bv_{n} \right\}$, we define the norm of $\Bm$, denoted $|| \Bm ||$, as the length of the longest vector in $\Bm$, so $|| \Bm || = \max_{i} || \bv_{i} ||$.  For any lattice $\Lambda$, we define the minimum distance (or first successive minimum) $\lambda_{1} \left( \Lambda \right)$ as the length of the shortest nonzero lattice vector in $\Lambda$.

\ifllncs
\else
We next define a the \emph{Gram-Schmidt basis} and the \emph{Gram-Schmidt norm} based on the definitions of~\cite{STOC:GenPeiVai08}.

\begin{definition} \label{def:GS} \textbf{Gram-Schmidt Basis.} For any (ordered) set $\Sm = \left\{ \sv_{1}, ... , \sv_{n} \right\} \subset \R^{n}$ of linearly independent vectors, let $\tilde{S} = \left\{ \tilde{\sv}_{1}, ... , \tilde{\sv}_{n} \right\}$ denote its Gram-Schmidt orthogonalization, defined iteratively in the following way: $\tilde{\sv}_{1} = \sv_{1}$, and for each $i \in \left[2, n \right]$, $\tilde{\sv}_{i}$ is the component of $\sv_{i}$ orthogonal to $span \left(\sv_{1}, ... , \sv_{i - 1} \right)$.
\end{definition}

\fi

We next define discrete Gaussians formally.  Since we later use their lemmas, our definition is loosely based on that of~\cite{STOC:BLPRS13}.
\begin{definition}~\label{def:discG}
For any $\sigma > 0$, the $n$-dimensional Gaussian function $\rho_{\sigma} : \R^{n} \rightarrow \left[0, 1 \right]$ is defined as 
\[
\rho_{r} \left( \xv \right) = e^{ - \pi \frac{\xv^2}{\sigma^2}}
\]
We define the \emph{discrete Gaussian function} with parameter $\sigma$ at point $\pv \in \R^{n}$, which we usually denote $\mathcal{D}_{\mathbf{\Psi}_{\sigma}}$ or just $\mathbf{\Psi}_{\sigma}$ when the context is clear, as the function over all of the integers $\yv \in \Z^{n}$ such that the probability mass of any $\yv$ is proportional to
\[
e^{ - \pi \frac{\left( \pv - \yv \right)^2}{\sigma^2}} .
\]
We can also define more complicated discrete Gaussians over lattices.  In this case, let $\mathbf{\Sigma}$ be a matrix in $\R^{n \times n}$.  The discrete Gaussian over a lattice $\Lambda$ with center $\pv$ and ``skew'' parameter $\Sigma$ is the function over all \emph{lattice points} in $\Lambda$ such that the probability mass of  any $\yv$ is proportional to
\[
e^{ - \pi \left( \pv - \yv \right)^{T} \left( \mathbf{ \Sigma \Sigma^{T}} \right)^{-1} \left(\pv - \yv \right)} ,
\]
very similar to as before.  We usually denote this type of discrete Gaussian as $\mathbf{\Psi}_{\Lambda, \mathbf{\Sigma}, \pv}$ or $\mathcal{D}_{\mathbf{\Psi}_{\Lambda, \mathbf{\Sigma}, \pv}}$, where we sometimes substitute $\sigma$ for $\mathbf{\Sigma}$ when $\mathbf{\Sigma} = \sigma \cdot \mathbf{I}_{n}$, where $\mathbf{I}_{n}$ is the $n \times n$ identity matrix.  We also sometimes omit parameters when they are obvious (e.g. $0$) in context.

\end{definition}

\ifllncs
We will explain how to efficiently sample discrete Gaussians quantumly later in \ref{sec:lattice_quantum_fact}.
\else
\subsection{(Lattice-relevant) Quantum Facts}
\label{sec:lattice_quantum_fact}

All quantum notations used in this work are relatively basic and standard; we therefore omit extensive preliminaries on them. For further quantum basics, we refer the readers to \cite{nielsen2002quantum}.

\subsubsection{Gaussian Superposition Preparation}
We now show that it is possible to efficiently sample a discrete Gaussian quantumly over a lattice basis.

\paragraph{Generating Gaussian Superpositions.} Let $\Ls$ be a lattice. Given a vector $\cv$ (not necessarily in $\Ls$) and covariance matrix $\Sigma$, define \[|\mathbf{\Psi}_{\Ls,\Sigma,\cv}\rangle\propto\sum_{\xv\in\Ls}\sqrt{e^{-\pi (\xv-\cv)^T\cdot\Sigma^{-1}\cdot(\xv-\cv)}}|\xv\rangle\]
This is the discrete-Gaussian-weighted superposition over lattice vectors.

\begin{theorem}\label{thm:gaussiansuper} There is a QPT algorithm which, given a basis $\Bm=(\bv_1,\dots,\bv_n)$ for a lattice $\Ls$, a center $\cv$, and covariance matrix $\Sigma$ such that $\bv_i\cdot\Sigma^{-1}\cdot\bv_i\leq 1/\omega(\log\lambda)$, constructs a state negligibly close to $|\mathbf{\Psi}_{\Ls,\Sigma,\cv}\rangle$.
\end{theorem}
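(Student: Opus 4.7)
The plan is to give a coherent quantum implementation of the classical GPV sampler~\cite{STOC:GenPeiVai08}, carefully arranging the computation so that every ancillary piece of ``randomness'' is uncomputed and only a superposition over lattice points remains. Recall that GPV processes the Gram--Schmidt orthogonalization $(\tilde{\bv}_1,\ldots,\tilde{\bv}_n)$ of $\Bm$ in reverse order: at step $i$, it samples an integer coefficient $z_i$ from a one-dimensional discrete Gaussian on $\Z$ whose center $t_i$ depends on $\cv$ and the previously sampled $z_{i+1},\ldots,z_n$, and whose width is determined by $\Sigma$ and $\tilde{\bv}_i$; the final output is $\sum_i z_i \bv_i$. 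Under the hypothesis $\bv_i\cdot\Sigma^{-1}\cdot\bv_i\leq 1/\omega(\log\lambda)$ (which implies the analogous bound on $\tilde{\bv}_i$), GPV shows that the resulting distribution is within negligible statistical distance of the discrete Gaussian $\mathcal{D}_{\mathbf{\Psi}_{\Ls,\Sigma,\cv}}$.

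The core quantum primitive I would isolate first is a reversible preparation of a one-dimensional discrete Gaussian on $\Z$ with a classically specified center $t$ and parameter $s$. Since the Gaussian weights $e^{-\pi(z-t)^2/s^2}$ decay super-exponentially in $|z-t|$, truncating to a window of polynomial radius around $t$ introduces only negligible error; within that window, both the unnormalized weights and a good approximation to the normalization constant can be computed classically to any desired inverse-polynomial precision. Feeding this into a standard state-preparation procedure (for instance Grover--Rudolph on the one-dimensional CDF, or direct amplitude-computation followed by controlled rotations) produces the superposition $\sum_z \sqrt{D_{\Z,s,t}(z)}\,|z\rangle$ on a single register initialized to $|0\rangle$, with all workspace cleanly uncomputed. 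Crucially, this primitive prepares the amplitudes \emph{directly} and does not simulate sampling by measuring random coins, so no coin register is left entangled with the output.

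Using this subroutine, I would construct the full state in three stages. Stage one: iterate $i=n,n-1,\ldots,1$, and at step $i$ first compute the required one-dimensional center $t_i$ as a classical function of $\cv$, $\Bm$, and the already-prepared coefficients $z_{i+1},\ldots,z_n$ into a scratch register; then apply the Gaussian-preparation subroutine to register $i$ controlled on $t_i$; then uncompute $t_i$. The state on the $n$ coefficient registers is now $\sum_\zv \alpha_\zv|\zv\rangle$ with $|\alpha_\zv|^2$ equal to the joint GPV probability of the sequence $(z_1,\ldots,z_n)$. Stage two: apply the reversible integer-linear map $\zv\mapsto \Bm\zv$ in place (implementable via a sequence of elementary reversible row operations since $\Bm$ is invertible), turning the state into $\sum_\zv\alpha_\zv|\Bm\zv\rangle$, i.e.\ a superposition over $\Ls$. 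Stage three: since $\zv\mapsto\Bm\zv$ is a bijection $\Z^n\to\Ls$, no collisions occur and no residual ancillas remain.

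The hard part, and essentially the content of the theorem, is the amplitude-level closeness analysis. The classical GPV analysis gives pointwise multiplicative closeness of the sampling distribution to the target discrete Gaussian whenever the smoothing-parameter condition holds; the hypothesis in the theorem is chosen precisely so that this applies. Pointwise $(1\pm\negl(\lambda))$ closeness of probability masses translates into $\ell_2$-closeness of the square-root amplitudes up to a negligible factor, which in turn gives negligible trace distance between the prepared state and $|\mathbf{\Psi}_{\Ls,\Sigma,\cv}\rangle$. The only remaining subtlety is precision: both the classical computation of the $t_i$ from $\Bm,\cv$ and the amplitude computations inside the Gaussian-preparation subroutine must be carried out with enough bits that the accumulated error over $n$ steps stays negligible. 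Since all operations are polynomial-time and the tolerable error is any inverse polynomial, this is routine bookkeeping rather than a conceptual obstacle; the genuine conceptual content sits in (a) refusing to simulate classical sampling via random coins and (b) uncomputing the coefficient register via the in-place basis change.
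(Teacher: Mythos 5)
Your proposal is correct and follows essentially the same route as the paper: a coherent implementation of the GPV sampler in which each one-dimensional discrete Gaussian is prepared as a clean superposition (no leftover coin registers), the coefficient information is uncomputed so that only the lattice-point register survives, and the hypothesis $\bv_i\cdot\Sigma^{-1}\cdot\bv_i\leq 1/\omega(\log\lambda)$ is used to import the GPV statistical-closeness analysis at the amplitude level. The only differences are implementation-level: the paper prepares the 1D Gaussian by a controlled rotation plus ancilla measurement with restart (rather than Grover--Rudolph-style direct amplitude preparation), uncomputes each coefficient $z_i$ on the fly from the running sum rather than via a final in-place map $\zv\mapsto\Bm\zv$, and handles general $\Sigma$ by reducing to the spherical case through the linear change of variables $\Um$ with $\Sigma^{-1}=\Um^T\Um$, which is equivalent to your folding of $\Sigma$ into the Gram--Schmidt widths.
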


We prove Theorem~\ref{thm:gaussiansuper} by gradually building up from special cases.

\begin{lemma}\label{lem:integersuper}There is a QPT algorithm which, given $c\in\Z$ and $\sigma\geq\omega(\sqrt{\log\lambda})$, constructs a state negligibly close to $|\mathbf{\Psi}_{\Z,\sigma^2,c}\rangle$. 
\end{lemma}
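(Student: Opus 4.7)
My plan is to prepare the Gaussian superposition by the standard combination of uniform-superposition setup, coherent amplitude loading via controlled rotations, and (if necessary) amplitude amplification. First, since $\sigma \geq \omega(\sqrt{\log \lambda})$, the tails of $\rho_\sigma(x-c)$ outside an interval $I = [c - R, c + R]$ for $R = \sigma \cdot \polylog(\lambda)$ carry only negligible mass, so the truncated state $|\mathbf{\Psi}_{\Z \cap I, \sigma^2, c}\rangle$ is negligibly close in trace distance to $|\mathbf{\Psi}_{\Z,\sigma^2,c}\rangle$; it therefore suffices to prepare the truncated state. Note that $|I \cap \Z| \leq 2R + 1$ is polynomial in $\lambda$ (when $\sigma$ is polynomial; otherwise we work with $O(\log \sigma)$-bit integers, still efficient).

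Next I would prepare the uniform superposition $\frac{1}{\sqrt{|I \cap \Z|}}\sum_{x \in I \cap \Z}|x\rangle$ by standard Hadamard plus rejection of out-of-range indices. Using reversible arithmetic, compute into an ancilla a $k$-bit fixed-point approximation of $\sqrt{e^{-\pi(x-c)^2/\sigma^2}}$, where $k = \polylog(\lambda)$ suffices to make every amplitude accurate to within $2^{-\omega(\log \lambda)}$. Then apply a controlled rotation that maps an ancilla qubit $|0\rangle$ to $\alpha(x)|1\rangle + \sqrt{1 - \alpha(x)^2}|0\rangle$, with $\alpha(x)$ equal to the loaded amplitude. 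Finally, uncompute the amplitude register so that only the index register and the single flag qubit remain entangled. Postselecting on the flag being $|1\rangle$ yields exactly (up to the precision error above) the truncated Gaussian superposition $|\mathbf{\Psi}_{\Z \cap I, \sigma^2, c}\rangle$.

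The success probability of the postselection equals $\frac{1}{|I \cap \Z|}\sum_{x \in I \cap \Z} \rho_\sigma(x-c)^2 / \text{(normalization)}$, which by a standard Poisson-summation-style bound (valid because $\sigma \geq \omega(\sqrt{\log\lambda})$) is $\Theta(\sigma / R) = 1/\polylog(\lambda)$. Since the flag is a single qubit, I would apply fixed-point amplitude amplification to boost this success probability to $1 - \negl(\lambda)$ using $O(\sqrt{\polylog(\lambda)})$ iterations, giving an overall QPT algorithm. Combined with the truncation and precision errors, the output state is negligibly close in trace distance to $|\mathbf{\Psi}_{\Z,\sigma^2,c}\rangle$.

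The main obstacle, and the only nontrivial step, is ensuring that every ancilla register used by the arithmetic (for computing $(x-c)^2$, the exponential, and the square root) is cleanly uncomputed, so that no auxiliary entanglement is left behind to pollute the final state. This is precisely the subtlety the paper emphasizes right before the lemma, so I would take care to write the subroutine explicitly as compute$\to$rotate$\to$uncompute, and verify via the principle of deferred measurement that after the amplitude-amplification loop the only remaining register is the one holding $x$, as required.
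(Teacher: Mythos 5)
Your proposal is correct and follows essentially the same route as the paper's proof: truncate the support to an interval of radius $\sigma\cdot\mathrm{polylog}(\lambda)$ (negligible tail mass since $\sigma\geq\omega(\sqrt{\log\lambda})$), prepare the uniform superposition over that interval, coherently load the amplitude $\sqrt{e^{-\pi(x-c)^2/\sigma^2}}$ onto a flag qubit via a compute--rotate--uncompute subroutine, and condition on the flag, with inverse-polylogarithmic success probability. The only cosmetic differences are that the paper simply repeats until the flag measurement succeeds rather than using amplitude amplification, and it cites the GPV tail lemma for the closeness of the truncated state; both choices are interchangeable with yours.
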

\begin{proof}This is a straightforward adaptation of the classical algorithm for sampling from the discrete Gaussian over integers~\cite{STOC:GenPeiVai08}. Let $t\geq\omega(\sqrt{\log\lambda})$. The algorithm proceeds in the following steps:
	\begin{enumerate}
		\item Let $\Is = \Z\cap[c-t\sigma,c+t\sigma]$. Let $x_{min}$ be the minimal element of $\Is$, and $w=|\Is|$.
		\item\label{step:begining} Initialize a register to $|0\rangle$. Then using the QFT on $w$ elements, construct the state $\propto\sum_{i=0}^{w-1}|i\rangle$. 
		\item By adding $x_{min}$ in superposition, construct the state $\propto\sum_{i\in \Is}|i\rangle$
		\item Now apply in superposition the map $|i\rangle\mapsto|i\rangle\otimes \left(\sqrt{e^{-\pi (i-c)^2/\sigma^2}}|0\rangle+\sqrt{1-e^{-\pi (i-c)^2/\sigma^2}}|1\rangle\right)$.
		\item Measure the second register, obtaining a bit $b$; the first register collapses to a state $|\mathbf{\Psi}\rangle$. If $b=0$, output $|\mathbf{\Psi}\rangle$. Otherwise, discard $|\mathbf{\Psi}\rangle$ and restart from Step~\ref{step:begining}.
	\end{enumerate}
Note that the state outputted by the above algorithm is $|\mathbf{\Psi}\rangle\propto\sum_{i\in\Is} \sqrt{e^{-\pi (i-c)^2/\sigma^2}} |i\rangle$. This is identical to $|\mathbf{\Psi}_{\Z,\sigma^2,c}\rangle$, except that the support is truncated to the interval $\Is$ (and therefore the state is also re-scaled, but the proportions for $i\in\Is$ are identical).~\cite{STOC:GenPeiVai08} show that the analogous distributions over $i$ are negligibly close (\cite{STOC:GenPeiVai08}, Lemma 4.3). An almost identical argument shows that the states $|\mathbf{\Psi}_{\Z,\sigma^2,c}\rangle$ and $|\mathbf{\Psi}\rangle$ are negligibly close.
\end{proof}

\begin{lemma}\label{lem:latticesuper} There is a QPT algorithm which, given a basis $\Bm=(\bv_1,\dots,\bv_n)$ for a lattice $\Ls$, a center $\cv\in\Z^n$, and $\sigma\geq \|\Bm\|\omega(\sqrt{\log\lambda})$, constructs a state negligibly close to $|\mathbf{\Psi}_{\Ls,\sigma^2,\cv}\rangle$. 
\end{lemma}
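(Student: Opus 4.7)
The plan is to adapt the GPV nearest-plane discrete Gaussian sampler~\cite{STOC:GenPeiVai08} to the coherent setting, using Lemma~\ref{lem:integersuper} as the one-dimensional primitive (after extending it from integer to arbitrary real centers, which the same proof gives verbatim) and then exploiting the bijection between $(z_1,\ldots,z_n)\in\Z^n$ and $\sum_i z_i\bv_i\in\Ls$ to uncompute all auxiliary registers. First I would precompute the Gram--Schmidt orthogonalization $\tilde\Bm=(\tilde\bv_1,\ldots,\tilde\bv_n)$ classically. Since $\|\tilde\bv_i\|\le\|\bv_i\|\le\|\Bm\|$, the effective one-dimensional width $\sigma/\|\tilde\bv_i\|$ satisfies the $\omega(\sqrt{\log\lambda})$ hypothesis of Lemma~\ref{lem:integersuper}.

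Next I would coherently sample the coefficients $z_n, z_{n-1},\ldots,z_1$ in sequence, mirroring the classical recursion. At step $i$, conditioned on the already-sampled coefficients $z_{n},\ldots,z_{i+1}$, compute the residual projection
\[c_i'\;=\;\frac{\bigl\langle\cv-\sum_{j>i}z_j\bv_j,\;\tilde\bv_i\bigr\rangle}{\|\tilde\bv_i\|^2}\]
into a fresh ancilla, invoke (the real-center generalization of) Lemma~\ref{lem:integersuper} to place $|\mathbf{\Psi}_{\Z,(\sigma/\|\tilde\bv_i\|)^2,c_i'}\rangle$ into register $R_i$, and then uncompute the ancilla holding $c_i'$ by reversing the arithmetic. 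After the $n$-th iteration, the joint state on $R_1\otimes\cdots\otimes R_n$ is
\[|\Phi\rangle\;\propto\;\sum_{(z_1,\ldots,z_n)\in\Z^n}\prod_{i=1}^n\sqrt{e^{-\pi(z_i-c_i')^2\|\tilde\bv_i\|^2/\sigma^2}}\;|z_1,\ldots,z_n\rangle\]
up to the per-step truncation error from Lemma~\ref{lem:integersuper}, which sums to $\negl(\lambda)$ over the $n$ iterations.

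Then I would allocate a fresh register, compute $\xv=\sum_i z_i\bv_i$ into it, and uncompute the coefficient registers in place by subtracting the efficiently computable inverse map $\xv\mapsto(z_1,\ldots,z_n)=\Bm^{-1}\xv$. Because this inverse is single-valued (no two integer coefficient tuples give the same $\xv$), the subtraction cleanly disentangles the $R_i$'s, leaving all amplitude on the $\xv$ register. This reversibility step is exactly what avoids the ``leftover random coins'' pitfall flagged in the technical overview: rather than trying to discard the $z_i$ registers, we recompute them from the output and XOR them away.

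I expect the main obstacle to be showing that $|\Phi\rangle$, reindexed by $\xv$, actually approximates $|\mathbf{\Psi}_{\Ls,\sigma^2,\cv}\rangle$ in trace distance. The classical GPV analysis gives, for $\sigma\ge\|\tilde\Bm\|\omega(\sqrt{\log\lambda})$, that the product $\prod_i\rho_{\sigma/\|\tilde\bv_i\|}(z_i-c_i')$ is within a $1\pm\negl(\lambda)$ multiplicative factor of $\rho_\sigma(\xv-\cv)$, by iterated application of the smoothing-parameter bound to successive cosets of $\Ls$. Taking square roots preserves this multiplicative closeness pointwise, so the two unnormalized amplitude vectors agree up to a factor $1\pm\negl(\lambda)$; the standard fidelity bound $\sum_\xv\sqrt{p(\xv)q(\xv)}\ge 1-\tfrac12\|p-q\|_1$ then converts the negligible total-variation gap into a negligible trace distance on the normalized pure states. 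Combining this with the summed per-step errors from Lemma~\ref{lem:integersuper} yields an output negligibly close to $|\mathbf{\Psi}_{\Ls,\sigma^2,\cv}\rangle$, completing the proof.
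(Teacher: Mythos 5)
Your proposal is correct and follows essentially the same route as the paper: a coherent version of the GPV nearest-plane sampler built on Lemma~\ref{lem:integersuper}, with all ancillas (the centers $c_i'$ and the coefficients $z_i$) uncomputed by exploiting the invertible linear relation between coefficient vectors and lattice points, exactly the mechanism the paper uses to avoid leftover entangled registers. The only cosmetic difference is that you defer uncomputing all the $z_i$ to a single step via $\Bm^{-1}\xv$ at the end, whereas the paper uncomputes each $z_i$ incrementally from the running partial sum; your explicit fidelity argument fills in the closeness analysis that the paper delegates to the GPV Theorem~4.1 analysis.
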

\begin{proof}This is also a straightforward adaptation of the classical algorithm for sampling from discrete Gaussians over lattices. However, care is needed to ensure that there is no spurious information left behind; such spurious information would not affect classical sampling, but could be entangled with the resulting quantum state, thereby perturbing it. We show that the algorithm can be implemented without such perturbation.
	
The classical sampling algorithm (\cite{STOC:GenPeiVai08}, Section 4.2) works as follows:
\begin{enumerate}
	\item Initialize $\vv\gets 0$. Then for $i=1,\dots,m$ ($m$ being the dimension of $\Ls$), do:
	\begin{enumerate}
		\item Let $c_i' = \frac{(\cv-\vv)\cdot\tilde{\bv}_i}{|\tilde{\bv_1}_i|^2}$ and $\sigma_i'=\sigma/|\tilde{\bv_1}_i|$.
		\item Sample $z_i\gets D_{\Z,\sigma_i',c_i'}$
		\item Update $\vv\gets\vv+z_i\bv_i$. 
	\end{enumerate} 
\end{enumerate}
Then \cite{STOC:GenPeiVai08} proves that the output distribution is statistically close to $D_{\Ls,\sigma,\cv}$. We now explain how to run the sampling algorithm coherently to produce $|\mathbf{\Psi}_{\Ls,\sigma^2,\cv}\rangle$:
\begin{enumerate}
	\item Initialize a register $\Vs$ to $|0\rangle$. Then for $i=1,\dots,m$, do:
	\begin{enumerate}
		\item\label{step:ciprime} Apply the map $|\vv\rangle\mapsto|\vv\rangle\otimes|c_i'\rangle$ in superposition to $\Vs$, where $c_i' = \frac{(\cv-\vv)\cdot\tilde{\bv}_i}{|\tilde{\bv_1}_i|^2}$. Also compute $\sigma_i'=\sigma/|\tilde{\bv_1}_i|$.
		\item Apply the map $|\vv\rangle\otimes|c_i'\rangle\mapsto |\vv\rangle\otimes|c_i'\rangle\otimes |\mathbf{\Psi}_{\Z,(\sigma_i')^2,c_i'}\rangle$ in superposition. Here, we assume for simplicity the ideal $|\mathbf{\Psi}_{\Z,(\sigma_i')^2,c_i'}\rangle$, but we would actually use the algorithm from Lemma~\ref{lem:integersuper}, incurring a negligible error.
		\item Uncompute $c_i'$ by performing the operation in Step~\ref{step:ciprime} in reverse.
		\item Apply the map $|\vv,z_i\rangle\mapsto |\vv+z_i\bv_i,z_i\rangle$
		\item Uncompute $z_i$, which can be computed from $\vv+z_i\bv_i$ via linear algebra, since $z_i$ is just the coefficient of $\bv_i$ when representing $\vv$ in the basis $\{\bv_1,\dots,\bv_i\}$.
	\end{enumerate} 
\end{enumerate}
By an essentially identical analysis to~\cite{STOC:GenPeiVai08}, Theorem 4.1, the resulting state is statistically close to $|\mathbf{\Psi}_{\Ls,\sigma^2,\cv}\rangle$.
\end{proof}

We are now ready to prove Theorem~\ref{thm:gaussiansuper}, which follows as a simple corollary from Lemma~\ref{lem:latticesuper}.
\begin{proof} Write $\Sigma^{-1}$ as $\Sigma^{-1}=\Um^T\cdot\Um$. Let $\Bm'=\{\bv_1'=\Um\cdot\bv_1,\dots,\bv_n'=\Um\cdot\bv_n\}$, and let $\Ls'$ be the lattice generated by $\Bm'$. Let $\cv'=\Um\cdot\cv$.
	
Now invoke Lemma~\ref{lem:latticesuper} on $\Bm'$ to construct the state (statisticaly close to) $|\mathbf{\Psi}_{\Ls',1,\cv'}\rangle$. To see that $\Bm'$ satisfied the conditions of Lemma~\ref{lem:latticesuper}, observe that $\|\Bm'\|\leq \max_i |\bv_i'|$, and we have that $|\bv_i'|^2=\bv_i^T\cdot \Um^T\cdot\Um\bv_i=\bv_i\cdot\Sigma^{-1}\bv_i\leq 1/\omega(\log\lambda)$. 

Now given $|\mathbf{\Psi}_{\Ls',1,\cv'}\rangle$, we simply apply in superposition the map $\vv\mapsto \Um^{-1}\vv$, which gives us exactly. $|\mathbf{\Psi}_{\Ls,\Sigma,\cv}\rangle$.
\end{proof}
\fi
\medskip

\subsection{General LWE Definition}
\label{sec:lwe_def}
In this section we define basic LWE with an eye towards eventually defining $k$-LWE.  We note that, while equivalent to the standard definitions, our definitions here are presented a little bit differently than usual in lattice cryptography.  This is so that we can keep the notation more consistent with the typical quantum money and quantum algorithms presentation styles.  We first provide a properly parameterized definition of the LWE problem~\cite{STOC:Regev05}. 

\begin{definition}~\label{def:LWE} \textbf{Learning with Errors (LWE) Problem:}
Let $n$, $m$, and $q$ be integers, let $\mathcal{D}_{\Am}$ and $\mathcal{D}_{\rv}$ be distributions over $\Z_{q}^{n}$, and let $\mathcal{D}_{\mathbb{\Psi}}$ be a distribution over $\Z_{q}^{m}$.  Let $\Am \in \Z_{q}^{m \times n}$ be a matrix where each row is sampled from $\mathcal{D}_{\Am}$, let $\rv \in \Z_{q}^{n}$ be a vector sampled from $\mathcal{D}_{\rv}$, and let $\ev \in \Z_{q}^{m}$ be a vector sampled from $\mathcal{D}_{\mathbb{\Psi}}$.  Finally, let $\tv \in \Z_{q}^{m}$ be a uniformly random vector.

The $\left(n, m, q, \mathcal{D}_{\Am}, \mathcal{D}_{\rv}, \mathcal{D}_{\mathbb{\Psi}} \right)$-LWE problem is defined to be distinguishing between the following distributions: 
\[  \left( \Am,  \Am \cdot \rv + \ev \right) \text{ and } \left( \Am, \tv \right) . \] 
\end{definition}

\ifllncs
\else
\subsection{The $k$-LWE Problem}
With the LWE definition in place, we are ready to move to the actual $k$-LWE problem.  The $k$-LWE problem was first formally defined in~\cite{C:LPSS14} and used to build traitor-tracing schemes.  It extends the $k$-SIS assumption~\cite{PKC:BonFre11} which was used to build linearly homomorphic signatures.  Our definition below is essentially a parameterized version of the one in~\cite{C:LPSS14}.

\begin{definition}~\label{def:kLWE} \textbf{$k$-LWE Problem:}
Let $k$, $n$, $m$, and $p$ be integers, let $\mathcal{D}_{\Rm}$ be a distribution over $\Z_{q}^{n}$, and let $\mathcal{D}_{\mathbb{\Psi}}$ and $\mathcal{D}_{\Sm}$ be distributions over $\Z_{q}^{m}$.  Let $\Sm \in \Z_{q}^{k \times m}$ be a matrix where each \emph{row} is selected from $\mathcal{D}_{\Sm}$.  

Let $\Am \in \Z_{q}^{m \times n}$ be a matrix sampled uniformly from the set of matrices in $\Z_{q}^{m \times n}$ such that $\Sm \cdot \Am = 0 \mod q$, let $\rv \in \Z_{q}^{n}$ be a vector sampled from $\mathcal{D}_{\rv}$, and let $\ev \in \Z_{q}^{m}$ be a vector sampled from $\mathcal{D}_{\mathbb{\Psi}}$.  Let $\Cm \in \Z_{q}^{m \times \left(m - k \right)}$ be a basis for the set of vectors $\vv \in \Z_{q}^{m}$ such that $\Sm \cdot \vv = 0$, and let $\rv' \in \Z_{q}^{m - k}$ be a uniformly random vector.

The $\left(k, n, m, q, \mathcal{D}_{\Sm}, \mathcal{D}_{\rv}, \mathcal{D}_{\mathbb{\Psi}} \right)$-$k$-LWE problem is defined to be distinguishing between the following distributions: 
\[  \left( \Sm, \Am, \Cm, \Am \cdot \rv+ \ev \right) \text{ and } \left(\Sm, \Am,  \Cm, \Cm \cdot \rv'+ \ev \right) \] 
\end{definition}

We note that $k$-LWE is traditionally defined in a slightly different way:  usually the matrix $\Am$ is sampled before (or jointly with) $\Sm$ rather than after it.  We sample $\Sm$ first in our definition because we will need to handle very unusual (at least for cryptographic applications) distributions $\mathcal{D}_{\Sm}$.

\fi

\subsection{Quantum Money and Quantum Lightning}

Here, we define public key quantum money and quantum lightning. Following Aaronson and Christiano~\cite{STOC:AarChr12}, we will only consider so-called ``mini-schemes'', where there is only a single banknote.

Both quantum money and quantum lightning share the same syntax and correctness requirements. There are two quantum polynomial-time algorithms $\gen,\ver$ such that:
\begin{itemize}
	\item $\gen(1^\lambda)$ samples a classical serial number $\sigma$ and a quantum state $|\psi\rangle$.
	\item $\ver(\sigma,|\psi\rangle)$ outputs a bit 0 or 1.
\end{itemize}

\paragraph{Correctness.} We require that there exists a negligible function $\negl$ such that $\Pr[\ver(\gen(1^\lambda))]\geq 1-\negl(\lambda)$.

\paragraph{Security.} Where public key quantum money and quantum lightning differ is in security. The differences are analogous to the differences between one-way functions and collision resistance.

\begin{definition}[Quantum Money Unforgeability] $(\gen,\ver)$ is secure public key quantum money if, for all quantum polynomial-time $A$, there exists a negligible $\negl$ such that $A$ wins the following game with probability at most $\negl$:
	\begin{itemize}
		\item The challenger runs $(\sigma,|\psi\rangle)\gets\gen(1^\lambda)$, and gives $\sigma,|\psi\rangle$ to $A$.
		\item $A$ produces a potentially entangled joint state $\rho_{1,2}$ over two quantum registers. Let $\rho_1,\rho_2$ be the states of the two registers. $A$ sends $\rho_{1,2}$ to the challenger.
		\item The challenger runs $b_1\gets\ver(\sigma,\rho_1)$ and $b_2\gets\ver(\sigma,\rho_2)$. $A$ wins if $b_1=b_2=1$.
	\end{itemize}
\end{definition}

\begin{definition}[Quantum Lightning Unforgeability] $(\gen,\ver)$ is secure quantum lightning if, for all quantum polynomial-time $A$, there exists a negligible $\negl$ such that $A$ wins the following game with probability at most $\negl$:
	\begin{itemize}
		\item $A$, on input $1^\lambda$, produces and sends to the challenger $\sigma$ and $\rho_{1,2}$, where $\rho_{1,2}$ is a potentially entangled joint state over two quantum registers.
		\item The challenger runs $b_1\gets\ver(\sigma,\rho_1)$ and $b_2\gets\ver(\sigma,\rho_2)$. $A$ wins if $b_1=b_2=1$.
	\end{itemize}
\end{definition}

The difference between quantum lightning and quantum money is therefore that in quantum lightning, unclonability holds, even for adversarially constructed states.

Note that, as with classical collision resistance, quantum lightning does not exist against non-uniform adversaries. Like in the case of collision resistance, we can update the syntax and security definition to utilize a common reference string (crs), which which case non-uniform security can hold. For this paper, to keep the discussion simple, we will largely ignore the issue of non-uniform security.

\section{The Flaw in \cite{KLS22} Lattice-Based Quantum Money}\label{sec:flaw}

\subsection{Overview of \cite{KLS22}}

We do not describe the whole scheme here, but re-create enough of the scheme to describe our attack.

\paragraph{Setup.} To set up the scheme, first a few parameters are specified:
\begin{itemize}
\item An exponentially-large prime $P$
\item A Gaussian width $\sigma\gg \sqrt{d} P^{1/d}$
\item An integer $t$ (\cite{KLS22} suggest $t=3$).
\item An odd positive integer $\Delta\gtrsim\sigma$.
\item An integer $k\leq t\sigma\Delta$.
\end{itemize}

Next \cite{KLS22} create a vector $\vv=(v_1,\dots,v_d)\in\mathbb{Z}_P^d$ such that 
\[v_1=1\;\;\;,\;\;\; v_2=\frac{P+\Delta}{2}=\Delta/2\bmod P\]
All the remaining entries of $v$ are uniform in $\Z_P$. \cite{KLS22} then defines the linear subspaces $\Ls=\{\xv\in\Z_P^d:\xv.\vv=0\bmod P\}$ and $\Ls^\perp=\{m\vv:m\in\Z_P\}\subseteq\Z_P^d$. Note that the authors require that there is a $\wv\in\Ls^\perp$ such that $\wv\cdot\vv=1\bmod P$. This is equivalent to requiring that $\vv.\vv\neq 0\bmod P$, which holds with overwhelming probability since $\vv.\vv$ is statistically close to uniform over $\Z_P$. 

\paragraph{Minting Process.} After a few steps, the minting process has the following state (\cite{KLS22} page 3, Eq (8)) over registers $\Xs,\Us$:
\begin{equation}\label{eq:initialstate}|\phi\rangle\approx\frac{1}{C}\sum_{\uv\in\Ls^\perp}\sum_{\xv\in\Ls}e^{-(x-u)^2/4\sigma^2}|\xv\rangle|\uv\rangle\end{equation}
Where $C$ is a normalization constant.

\medskip

Next, the mint maps $\uv\in\Ls^\perp$ to $m\in\Z_P$ in the $\Us$ register, where $\uv=m\wv$, obtaining the state close to:
\[\frac{1}{C}\sum_{m\in\Z_P}\sum_{\xv\in\Ls}e^{-(x-m\wv)^2/4\sigma^2}|\xv\rangle|m\rangle\]

Next, the mint applies a certain POVM to the $\Us$ register. Let $\Gamma=1/2+t\sigma\Delta\bmod P$. The measurement is specified matrices $M_T$ where:
\[M_T=\frac{1}{4k+2}\left(\sum_{j=-k}^k|T+j\rangle \langle T+j|+|T+\Gamma+j\rangle\langle T+\Gamma+j|\right)\]
\cite{KLS22} does not explain how to realize the POVM. But the following process suffices:
\begin{itemize}
\item Initialize registers $\Js$ and $\Bs$ to $\frac{1}{\sqrt{2k+1}}\sum_{j=-k}^k |j\rangle$ and $\frac{1}{\sqrt{2}}\left(|0\rangle+|1\rangle\right)$, respectively. It also initializes a register $\Ts$ to $|0\rangle$
\item Apply the following operation to the $\Us,\Js,\Bs,\Ts$ registers:
\[|m,j,b,0\rangle\mapsto|m,j,b,m-j-b\Gamma\rangle\]
\item Now measure the $\Ts$ register to obtain a serial number $T$. Discard the $\Ts$ register.
\item Now, given $T=m-j-b\Gamma$, and $\xv$, this uniquely determines $m,j,b$: since $\xv-m\wv$ must be a short vector, there is only a single value of $m$ for any $\xv$. As explained in~\cite{KLS22}, $m\wv$ and hence $m$ can be computed from $\xv$ alone in the given parameter settings. Moreover, since $\Gamma\approx P/2 \gg k$, there is at most a single $(j,b)$ for any $m,T$ with $T=m-j-b\Gamma$.
Moreover, $(j,b)$ can be computed efficiently by simply trying both $b=0$ and $b=1$, and seeing which one gives $j=T+b\Gamma\in[-k,k]$. So we use this to uncompute $\Us,\Js,\Bs$ registers, which we then discard.
\end{itemize}

After obtaining $T$, the banknote is whatever state $|\phi_T\rangle$ is left in the $\Xs$ register, and the serial number is $T$.

\paragraph{Verification.} We don't replicate the verification procedure here. Ultimately, we will show that verification is irrelevant: there is a procedure which produces two faked quantum money states, but which nevertheless fools the verifier.

\begin{remark}In~\cite{KLS22}, the banknote actually consists of a tuple of states as above. The overall verification will verify each element of the tuple separately, and then accept if a certain threshold of the elements accept. This structure is needed because their verification algorithm will reject honest banknotes some of the time. As we will see, this fact will not affect our attack, since our forged money will pass the verification of individual elements with the same probability as an honest banknote (up to potentially negligible error). For this reason, we just focus on a single element as described above.
\end{remark}

\subsection{An Alternate View of \cite{KLS22}}

Here, we present an alternative view of~\cite{KLS22} that we believe is easier to reason about.

First, we notice that the requirement that $\vv\cdot\vv\neq 0\bmod P$ implies that the linear spaces $\Ls$ and $\Ls^\perp$ together span all of $\Z_P^d$. This means there is a one-to-one correspondence between  $\Z_P^d$ and $\Ls\times\Ls^\perp$. Using that $\Ls^\perp$ is just the linear space of multiples of $\vv$ and that  $\wv\cdot\vv=1$ by the definition of $\wv$, we also have the correspondence:
\begin{align}\nonumber\yv\in\Z_P^d&\leftrightarrow (\xv,m)\in\Ls\times\Z_P\\
\label{eqn:alternate}\begin{array}{rcl}\yv&=&\xv-m\wv\end{array}&\leftrightarrow
\begin{array}{rcl}m&=&-\vv\cdot\yv\\
\xv&=&\yv+m\wv=\yv-(\vv\cdot\yv)\wv\end{array}\end{align}
Using this correspondence, the state in Equation~\ref{eq:initialstate} can be equivalently written as a state over register $\Ys$ containing a Gaussian-weighted superposition over all integers:
\[|\phi_0\rangle=\frac{1}{C}\sum_{\yv \in \Z_P^d}e^{-|\yv|^2/4\sigma^2}|\yv\rangle\]
Going between $|\phi\rangle$ and $|\phi_0\rangle$ is just a simple matter of linear algebra to go between $\yv$ and $\xv,m$.

\begin{remark}Note that our alternative view gives a much simpler way to construct the state in Equation~\ref{eq:initialstate} than what was described in~\cite{KLS22}. Indeed, they construct the state by preparing first a large Guassian superposition of \emph{lattice points} in $\Ls$. Then it performs a \emph{lattice} quantum Fourier transform, which gives a superposition over lattice points very close to dual lattice vectors. Finally, the lattice points in the superposition are close enough to the dual vectors that bounded distance decoding algorithms can be run to recover the dual lattice vectors, which gives the state in Equation~\ref{eq:initialstate}. Each of these steps requires non-trivial algorithms that must all be performed in superposition. In contrast, our alternative view shows that one could instead simply create a Guassian-weighted superposition of \emph{integers} and then perform some basic linear algebra.
\end{remark}

\paragraph{The Measurement.} Now consider the POVM described above. The action of the POVM on $|\phi\rangle$ translates straightforwardly to acting on $|\phi_0\rangle$, where instead of acting on the $\Us$ register containing $m$ directly, we instead compute $m$ from $\yv$ as $m=-\vv\cdot\yv$, and apply the POVM to this $m$. The $|\phi_T\rangle$ of the original minting process can therefore be obtained from $|\phi_0\rangle$ as follows:
\begin{itemize}
\item Initialize registers $\Js$ and $\Bs$ to $\frac{1}{\sqrt{2k+1}}\sum_{j=-k}^k |j\rangle$ and $\frac{1}{\sqrt{2}}\left(|0\rangle+|1\rangle\right)$, respectively. It also initializes a register $\Ts$ to $|0\rangle$
\item Apply the following operation to the $\Ys,\Js,\Bs,\Ts$ registers:
\[|\yv,j,b,0\rangle\mapsto|\yv,j,b,-\vv\cdot\yv-j-b\Gamma\rangle\]
\item Now measure the $\Ts$ register to obtain serial number $T$. Discard the $\Ts$ register. Call the state remaining in the $\Ys,\Js,\Bs$ registers $|\phi_T'\rangle$.
\item Given $\yv,T=-\vv\cdot\yv-j-b\Gamma$, we can compute $m,\xv$ from $\yv$ via the correspondence in Equation~\ref{eqn:alternate}. From $m,\xv$ we can un-compute $\yv$, again using Equation~\ref{eqn:alternate}. Then using $\xv,T$, we can uniquely determine $(m,j,b)$. So we use this to uncompute $\Us,\Js,\Bs$ registers, which we then discard. 
\end{itemize}

\paragraph{Our equivalent formulation.} Instead of arriving at $|\phi_T\rangle$ using our alternate formulation, we simply stop at the second-to-last step, outputting $|\phi_T'\rangle$. Since the final step of the minting process is reversible, it is equivalent to give out $|\phi_T\rangle$ and $|\phi_T'\rangle$.

\subsection{Summary of Alternate Minting Process}\label{sec:alternate}

We can now concisely describe out alternate minting process. This process will be equivalent to the original one, in that, we can map banknotes from our process to banknotes from~\cite{KLS22} and vice versa. This means that our alternate minting process is secure if and only if~\cite{KLS22} is secure.

\paragraph{Setup.} Let $P,\sigma,\Delta,t,k$ as before. Let $\vv$ as before, and define $\vv'=(-\Gamma,-1,\vv)\in\Z_P^{d'}$ where $d'=d+2$.

\paragraph{Minting.} Let $\Ys'=\Bs\times\Js\times\Ys$. Initialize the following state over register $\Ys'$:
\[|\phi'\rangle\propto\sum_{\yv\in\Z_P^d,j\in[-k,k],b\in\{0,1\}}e^{-|\yv|^2/4\sigma^2}|b,j,\yv\rangle\]

Note that this is a superposition over short vectors $\yv'\in\Z_P^{d'}$. It is also a product state, with each coordinate being produced separately.

Now we apply the following map in superposition:
\[|\yv'\rangle\mapsto|\yv',\vv'\cdot\yv'\rangle\]
Now measure the $\Ts$ register (the second register), obtaining $T$. The result of the $\Ys'$ register is exactly the state $|\phi_T'\rangle$, a uniform superposition over short vectors $\yv'$ such that $\vv'\cdot\yv'=T$. Here, by short, we mean that each entry of $\yv'$ is in $[-W,W]$, except with negligible probability, where $W=\max(2,k,\sigma\times\omega(\sqrt{\log(\lambda))} )$.

\subsection{The Flaw}

Here, we describe the flaw in~\cite{KLS22}. The flaw is most easily seen using our alternate view of their scheme.

Note that the vector $\vv'$ has 3 known orthogonal short vectors, namely \[\sv_0=\left(\begin{array}{c}0\\0\\\Delta\\-2\\0\\\vdots\\0\end{array}\right)\;\;\;\sv_1=\left(\begin{array}{c}0\\1\\1\\0\\0\\\vdots\\0\end{array}\right)\;\;\;\sv_2=\left(\begin{array}{c}-2\\2\Gamma=2t\sigma\Delta+1\\0\\0\\0\\\vdots\\0\end{array}\right)\]

In~\cite{KLS22}, the authors define three vectors $s^{(0)},s^{(1)},s^{(2)}$; $\sv_0,\sv_1,\sv_2$ play the roles of $s^{(0)},s^{(1)},s^{(2)}$ in the alternate view, respectively.

In the security proof (originally starting from Page 4 of~\cite{KLS22}, here translated to our alternate view), the authors imagine an adversary an adversary that outputs two possibly entangled states that pass verification. The authors argue that each state $|\phi\rangle$ must have significant weight on $\yv'$ where the first digit $b$ is 0, and also significant weight where $b$ is 1. Recall that a valid money state must have the first digit being 0 or 1. If true, this would allow them to finish the proof: by measuring both states, the authors can then conclude that with non-negligible probability the resulting vectors $\yv'_0,\yv'_1$ will have different first bits. Then the difference between the vectors $\yv'_0-\yv'_1$ is a short vector orthogonal to $\vv'$. It moreover cannot be in the span of $\sv_0,\sv_1,\sv_2$ because if if did, the coefficient of $\sv_2$ must be $1/2$, and since the second coordinate of $\sv_2$ is odd, the resulting vector would have a large second coordinate. Thus, the authors obtain a short vector linearly independent of the provided vectors, which is presumably hard.

The flaw in their argument is the claim that each state the adversary constructs must have significant weight on $b=0$ and also $b=1$. Indeed, the authors toward contradiction consider an adversary which measures the provided state to obtain $\yv'$, and then constructs a state of the form:
\[\sum_j\alpha_j|\yv'+j\sv_1\rangle\]
for small $j$. They argue that their verification procedure would reject such a state. Indeed, their test essentially looks at the 4th coordinate in the Fourier domain (which is the 2nd coordinate in their view). Since the 4th coordinate in the primal is not affected by adding multiples of $\sv_1$, the 4th coordinate is exactly the 4th coordinate of $\yv'$. As such, in the Fourier domain the 4th coordinate is uniform. On the other hand, the authors show that the 4th coordinate of their money state is non-uniform, which is leveraged in verification.

However, the authors failed to consider more general states of the form 

\[\sum_{j_0,j_1}\alpha_{j_0,j_1}|\yv'+j_0\sv_0+j_1\sv_1\rangle\]
for small $j_0,j_1$. Since $\sv_0$ is non-zero in the 4th coordinate, by adding multiples of $\sv_0$, the Fourier transform of the 4th coordinate is no longer guaranteed to be uniform.

The authors reject such states as being a possibility, since the 3rd coordinate of $\sv_0$ is $\Delta$, which is larger than the allowed width of the 3rd coordinate of $\yv'$ (which is limited to $\sigma\ll \Delta$). Thus, adding $\sv_0$ to a vector in the allowed support of $\yv'$ would move do outside the allowed support. However, the vector $\sv_0-\Delta\times\sv_1$ has a small non-zero 4rd coordinate, and the 3rd coordinate is 0. Now, the 2nd coordinate of $\sv_0-\Delta\times \sv_1$ is of size $\Delta$, but the 2nd coordinate of $\yv'$ is allowed to be as large as $k\gg \Delta$. So while one cannot add $\sv_0$ itself to a state and remain in the required domain, one can add multiples of $\sv_0-\Delta\times\sv_1$. In other words, even though adding $\sv_0$ results in an invalid $\yv'$, we can add an appropriate multiple of $\sv_1$ to move back to a valid $\yv'$. The introduction of $\sv_0$ completely invalidates their security proof.

\section{Our General Attack on a Class of Quantum Money}\label{sec:attack}

\ifllncs
Due to limitation of space, we leave a detailed discussion of the \cite{KLS22} money scheme and its flaw in \ref{sec:flaw}.
\fi

Now, we show that a natural class of schemes, including  the equivalent view on \cite{KLS22} demonstrated in Section~\ref{sec:alternate}, cannot possibly give secure quantum money schemes, regardless of how the verifier works.

\subsection{The General Scheme}

Here, we describe a general scheme which captures the alternate view above. Here, we use somewhat more standard notation from the lattice literature. Here we give a table describing how the symbols from section \ref{sec:flaw} map to this section:
\begin{center}
	\begin{tabular}{|c|c|}\hline
		{\bf This Section}&{\bf Section~\ref{sec:alternate}}\\\hline
		$q$&$P$\\\hline
		$n$&1\\\hline
		$m$&$d'=d+2$\\\hline
		$\Am$&$\vv'$ as a column vector\\\hline
		$|\psi\rangle$&$|\phi'\rangle$\\\hline
		$\uv$&$T$\\\hline
		$W$&$k+\sqrt{m}\times\sigma\times\omega(\sqrt{\log(\lambda))} )$\\\hline
	\end{tabular}
\end{center}

\paragraph{Setup.} Let $q$ be a super-polynomial, which may or may not be prime. Sample from some distribution several short vectors $\sv_1,\dots,\sv_\ell\in\Z_q^{m}$ for a constant $\ell$, and assemble them as a matrix $\Sm\in\Z_q^{m\times\ell}$. Then generate a random matrix $\Am\in\Z_q^{m\times n}$ such that $\Am^T\cdot\Sm=0\bmod q$.

\paragraph{Minting.} Create some superposition $|\psi\rangle$ of vectors in $\yv\in\Z_q^{m}$ such that an all but negligible fraction of the support of $|\psi\rangle$ are on vectors with norm $W$. Let $\alpha_{\yv}$ be the amplitude of $\yv$ in $|\psi\rangle$.

Then apply the following map to $|\psi\rangle$:
\[|\yv\rangle\rightarrow|\yv,\Am^T\cdot\yv\bmod q\rangle\]
Finally, measure the second register to obtain $\uv\in\Z_q^n$. This is the serial number, and the note is $|\psi_\uv\rangle$, whatever remains of the first register, which is a superposition over short vectors $\yv$ such that $\Am^T\cdot\yv=\uv$.

\paragraph{Verification.} We do not specify verification. Indeed, in the following we will show that the money scheme is insecure, for \emph{any} efficient verification scheme.

\subsection{Attacking the General Scheme}

We now show how to attack the general scheme. Let $\Cm$ be a matrix whose columns span the space orthogonal to the columns of $\Ss$. Let $|\psi_\uv'\rangle$ be the state sampled from $|\psi_\uv\rangle$ by measuring $\yv\mapsto\Cm^T\cdot\yv$, and letting $|\psi_\uv'\rangle$ be whatever is left over.

Our attack will consist of two parts:
\begin{itemize}
	\item Showing that $|\psi_\uv'\rangle$ is indistinguishable from $|\psi_\uv\rangle$, for \emph{any} efficient verification procedure. We show (Section~\ref{sec:indistinguishable}) that this follows from a certain ``$k$-LWE'' assumption, which depends on the parameters of the scheme ($\SS,k,n$, etc). In Section~\ref{sec:klwe}, we justify the assumption in certain general cases, based on the assumed hardness of worst-case lattice problems. Note that these lattice problems are essentially (up to small differences in parameters) the same assumptions we would expect are needed to show security for the money scheme in the first place. As such, if $k$-LWE does not hold for these special cases, most likely the quantum money scheme is insecure anyway. Our cases include the case of~\cite{KLS22}.
	\item Showing that $|\psi_\uv'\rangle$ can be cloned. Our attack first measures $|\psi_\uv'\rangle$ to obtain a single vector $\yv$ in it's support. To complete the attack, it remains to construct $|\psi_\uv'\rangle$ from $\yv$; by repeating such a process many times on the same $\yv$, we successfully clone. We show (Section~\ref{sec:constructingpsi}) that in certain general cases how to perform such a construction. Our cases include the case of~\cite{KLS22}.
\end{itemize}

Taken together, our attack shows that not only is~\cite{KLS22} insecure, but that it quite unlikely that any tweak to the scheme will fix it.

\subsection{Indistinguishability of $|\psi_\uv'\rangle$}\label{sec:indistinguishable}

Here, we show that our fake quantum money state $|\psi_\uv'\rangle$ passes verification, despite being a very different state that $|\psi_\uv\rangle$. We claim that, from the perspective of any efficient verification algorithm, $|\psi_\uv'\rangle$ and $|\psi_\uv\rangle$ are indistinguishable. This would mean our attack succeeds.

Toward this end, let $\Cm\in\Z_q^{m\times (m-\ell)}$ be a matrix whose rows span the space orthogonal to $\Sm$: $\Cm^T\cdot\Sm=0$. Notice that the state $|\psi'_\uv\rangle$ can be equivalently constructed by applying the partial measurement of $\Cm^T\cdot\yv$ to $|\psi_\uv\rangle$

Consider the following problem, which is closely related to ``$k$-LWE''(\ref{def:kLWE}):

\begin{problem}\label{conj:klwe} Let $n,m,q,\Sigma$ be functions of the security parameter, and $D$ a distribution over $\Sm$. The $(n,m,q,\Sigma,\ell,D)$-LWE problem is to efficiently distinguish the following two distributions:
	\[(\Am,\Am\cdot\rv+\ev)\;\;\;\text{and}\;\;\; (\Am,\Cm\cdot\rv'+\ev)\enspace ,\]
	Where $\rv$ is uniform in $\Z_q^n$, $\rv'$ is uniform in $\Z_q^{m-\ell}$, and $\ev$ is Gaussian of width $\Sigma$. We say the problem is \emph{hard} if, for all polynomial time quantum algorithms, the distinguishing advantage is negligible.
\end{problem}
In Section~\ref{sec:klwe}, we explain that in many parameter settings, including importantly the setting of~\cite{KLS22}, that the hardness of Problem~\ref{conj:klwe} is true (assuming standard lattice assumptions).

With the hardness of Problem~\ref{conj:klwe}, we can show the following, which is a generalization of a result of~\cite{C:LiuZha19} that showed that the SIS hash function is collapsing for super-polynomial modulus:

\begin{theorem}\label{thm:main} Consider sampling $\Am,\Sm$ as above, and consider any efficient algorithm that, given $\Am,\Sm$, samples a $\uv$ and a state $|\phi_\uv\rangle$ with the guarantee that all the support of $|\phi_\uv\rangle$ is on vectors $\yv$ such that (1) $\Am^T\cdot\yv=\uv\bmod q$ and (2) $|\yv|_2\leq W$. 
	
Now suppose $|\phi_\uv\rangle$ is sampled according to this process, and then either (A) $|\phi_\uv\rangle$ is produced, or (B) $|\phi_\uv'\rangle$ is produced, where $|\phi_\uv'\rangle$ is the result of applying the partial measurement of $\Cm^T\cdot\yv$ to the state $|\phi_\uv\rangle$.

Suppose there exists $\Sigma$ such that $q/W\Sigma=\omega(\sqrt{\log\lambda})$ such that $(n,m,q,\Sigma,\ell,D)$-LWE is hard. Then cases (A) and (B) are computationally indistinguishable.	
\end{theorem}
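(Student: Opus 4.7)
The plan is to reduce distinguishing case (A) from case (B) to the $(n,m,q,\Sigma,\ell,D)$-LWE problem via a phase-kickback trick, adapting and extending the Liu--Zhandry collapsing proof~\cite{C:LiuZha19} to the partial-measurement, polynomial-modulus regime. Given a distinguisher $\Ds$ for (A) vs (B) with advantage $\epsilon$, and an LWE challenge $(\Sm,\Am,\Cm,\bv)$ where $\bv$ is either $\Am\rv+\ev$ or $\Cm\rv'+\ev$ with $\ev$ Gaussian of width $\Sigma$, the reduction runs the minting adversary on $(\Am,\Sm)$ to obtain $\uv$ and $|\phi_\uv\rangle$, applies to the $\yv$-register the diagonal phase unitary $U_\bv:|\yv\rangle\mapsto e^{2\pi i\,\bv^T\yv/q}|\yv\rangle$, and feeds the result together with $\uv$ to $\Ds$, outputting $\Ds$'s guess.

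In the real-LWE case, because $\Am^T\yv=\uv$ throughout the support the phase factors as $e^{2\pi i\,\rv^T\uv/q}\cdot e^{2\pi i\,\ev^T\yv/q}$: the first factor is a global phase and can be discarded, while the second is a residual Gaussian twist. Using the characteristic-function identity
\[\E_{\ev}\bigl[e^{2\pi i\,\ev^T(\yv-\yv')/q}\bigr] \;=\; e^{-\pi\Sigma^2|\yv-\yv'|^2/q^2}\]
together with the hypothesis $q/W\Sigma=\omega(\sqrt{\log\lambda})$, this quantity is $1-o(1)$ uniformly over pairs $\yv,\yv'$ in the $W$-bounded support; averaging the resulting density matrix over $\ev$ bounds the trace distance from $|\phi_\uv\rangle\langle\phi_\uv|$ by $o(1)$, so $\Ds$'s acceptance in this hybrid is within $o(1)$ of $\Pr[\Ds(|\phi_\uv\rangle){=}1]$. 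In the $k$-LWE case, the phase decomposes analogously as $e^{2\pi i\,\rv'^T\Cm^T\yv/q}\cdot e^{2\pi i\,\ev^T\yv/q}$, and the key calculation is that averaging $U_\bv|\phi_\uv\rangle\langle\phi_\uv|U_\bv^\dagger$ over uniform $\rv'\in\Z_q^{m-\ell}$ zeroes every off-diagonal entry $|\yv\rangle\langle\yv'|$ for which $\Cm^T(\yv-\yv')\ne 0\pmod q$, producing exactly the density matrix obtained by measuring $\Cm^T\yv$ on $|\phi_\uv\rangle$ and forgetting the outcome---i.e., the mixed state underlying $|\phi'_\uv\rangle$. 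The residual $\ev$-twist again contributes only an $o(1)$ perturbation by the same smoothing bound, so $\Ds$'s acceptance here is within $o(1)$ of $\Pr[\Ds(|\phi'_\uv\rangle){=}1]$. The constructed LWE distinguisher thus achieves advantage $\epsilon-o(1)$, contradicting hardness unless $\epsilon$ vanishes.

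The main obstacle is executing the smoothing/fidelity estimates quantitatively when $q$ may be only polynomial, with $q/W\Sigma=\omega(\sqrt{\log\lambda})$ the only slack available---this is precisely what distinguishes the argument from the easier super-polynomial-modulus setting of~\cite{C:LiuZha19} and where the bulk of the technical work lies. Two secondary subtleties I would handle carefully are (i) the negligible mass that the theorem permits $|\phi_\uv\rangle$ to place outside the $W$-ball, absorbable into the $o(1)$ term via a trace-distance triangle inequality, and (ii) the fact that $|\phi_\uv\rangle$ may live on a register entangled with the adversary's auxiliary workspace, which is harmless because $U_\bv$ acts only on the $\yv$-register and every bound above is linear in the state. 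Verifying that the marginal distribution of $(\Sm,\Am)$ delivered by the $k$-LWE challenge matches the one demanded by the minter is routine, as by construction $\Sm\sim D$ and $\Am$ is uniform subject to $\Am^T\Sm=0$.
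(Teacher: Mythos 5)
There is a genuine gap, and it sits exactly where you deferred "the bulk of the technical work." Your single-shot reduction applies one phase $U_\bv$ and argues that in the real-LWE case the averaged state is within $o(1)$ trace distance of $|\phi_\uv\rangle\langle\phi_\uv|$, and in the $k$-LWE case within $o(1)$ of the collapsed state, concluding an LWE advantage of $\epsilon-o(1)$. But with only $q/W\Sigma=\omega(\sqrt{\log\lambda})$ of slack, the off-diagonal damping factor $e^{-\pi\Sigma^2|\yv-\yv'|^2/q^2}$ is $1-o(1/\log\lambda)$, giving a per-sample trace-distance perturbation on the order of $1/\sqrt{\omega(\log\lambda)}$ — which is $o(1)$ but emphatically \emph{not} negligible. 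Hence "advantage $\epsilon-o(1)$, contradiction unless $\epsilon$ vanishes" does not prove the theorem: a distinguisher with advantage $1/\lambda$ (or even $1/\log\lambda$) is non-negligible yet smaller than your error term, so no contradiction with LWE hardness is obtained. Your argument, translated from phases to rounding measurements, is essentially the Liu--Zhandry argument, and it only works when $q/W\Sigma$ is super-polynomial so that the per-sample perturbation is negligible; the whole point of this theorem is the polynomial-modulus regime where it is not. Iterating your map naively does not fix this either, since in case (A) the non-negligible per-shot damping accumulates and destroys the honest state, breaking the standard hybrid over samples.

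The paper closes this gap with machinery your proposal does not contain. It replaces the phase by a coarse rounding measurement $M_1^t$ (respectively $M_2^t$ for the $k$-LWE sample), proves the algebraic identities $M_1^{t}(\rho)\approx\frac{1}{d}M_1^{t\times d}(\rho)+(1-\frac{1}{d})\rho$ and $M_2^t(\rho)\approx M_0(M_1^t(\rho))+p_t(\rho-M_0(\rho))$, and then runs a reduction in which the single challenge sample is embedded at a random position $j$ of a sequence of iterated $M_1^{t\times d}$ applications, with $j$ drawn with geometrically decaying weight $(dg)^{-j}$ and the output bit flipped according to the parity of $j$. Solving the resulting linear recursion $\delta_i=\frac{1}{d}\epsilon_{i+1}+g\epsilon_i$ shows $|\delta|\geq d(1-\frac{1}{dg})|\epsilon|-2^{-O(\lambda)}$, i.e.\ any non-negligible distinguishing advantage $\epsilon$, no matter how small, forces a non-negligible LWE advantage despite the non-negligible per-sample error. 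If you want to salvage the phase-kickback formulation, you would need an analogue of these mixture identities and the alternating-sign telescoping reduction for your phase channels; without that, the proposal establishes at best indistinguishability against distinguishers whose advantage exceeds $1/\mathrm{poly}(\sqrt{\log\lambda})$, which is strictly weaker than the stated theorem. The secondary points you flag (mass outside the $W$-ball, entanglement with auxiliary registers, matching the $(\Am,\Sm)$ distribution) are indeed routine and handled correctly.
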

Note that an interesting consequence of Theorem~\ref{thm:main} in the case $\ell=0$ is that it shows that the SIS hash function is collapsing for any modulus, under an appropriate (plain) LWE distribution. This improves upon~\cite{C:LiuZha19}, who showed the same but only for super-polynomial modulus. We now give the proof of Theorem~\ref{thm:main}:

\begin{proof}For an integer $t$, let $\lfloor\cdot\rceil_t$ denote the function that maps a point $x\in\Z_q$ to the $z\in\{0,\lfloor q/t\rfloor,\lfloor 2q/t\rfloor,\cdot,\lfloor (t-1)q/t\rfloor\}$ that minimizes $|z-x|$. Here, $|z-x|$ is the smallest $a$ such that $z=x\pm a\bmod q$. In other words, $\lfloor\cdot\rceil_t$ is a course rounding function that rounds an $x\in\Z_q$ to one of $t$ points that are evenly spread out in $\Z_q$.
	
Let $\rho$ be a mixed quantum state, whose support is guaranteed to be on $\yv$ such that (1) $\Am^T\cdot\yv=\uv\bmod q$ and (2) $|\yv|_2\leq W$. For a quantum process $M$ acting on $\rho$, let $M(\rho)$ be the mixed state produced by applying $M_i$ to $\rho$. We will consider a few types of procedures applied to on quantum states.
	
\paragraph{$M_0$:} Given $\Am$, $M_0$ is just the partial measurement of $\yv\mapsto\Cm^T\cdot\yv$.
	
\paragraph{$M_1^t$:} Given $\Am$, to apply this measurement, first sample an LWE sample $\bv=\Am\cdot\rv+\ev$. Then apply the measurement $\yv\mapsto \lfloor \bv\cdot\yv \rceil_t$. Discard the measurement outcome, and output the remaining state.

\begin{lemma}\label{lem:m1}For any constants $t,d$, $M_1^{t}(\rho)$ is statistically close to $\frac{1}{d}M_1^{t\times d}(\rho)+\left(1-\frac{1}{d}\right)\rho$\end{lemma}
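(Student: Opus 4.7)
The plan is to compare both sides entrywise in the computational basis $\{|\yv\rangle\}$. A crucial preliminary observation is that applying a measurement of any function $f(\yv)$ and then discarding the outcome has the effect of replacing each entry $\rho_{\yv_1,\yv_2}$ by $\mathbb{1}[f(\yv_1)=f(\yv_2)]\cdot\rho_{\yv_1,\yv_2}$. Averaging over the randomness defining the measurement, $M_1^s(\rho)_{\yv_1,\yv_2}=p_s(\yv_1,\yv_2)\cdot\rho_{\yv_1,\yv_2}$ with
\[p_s(\yv_1,\yv_2):=\Pr_{\bv}\bigl[\lfloor\bv\cdot\yv_1\rceil_s=\lfloor\bv\cdot\yv_2\rceil_s\bigr],\]
while the diagonal entries are untouched. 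The lemma therefore reduces to showing that for every pair $\yv_1,\yv_2$ in the support of $\rho$,
\[p_t(\yv_1,\yv_2)\;=\;\tfrac{1}{d}\,p_{td}(\yv_1,\yv_2)+\bigl(1-\tfrac{1}{d}\bigr)\pm\negl(\lambda).\]

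Next I would decompose $\bv\cdot\yv_i=\rv^T\uv+\ev\cdot\yv_i$. Since $\rv^T\uv$ is common to both $\yv_1$ and $\yv_2$, for uniform $\rv$ it plays the role of a close-to-uniform additive shift in $\Z_q$. Writing $\Delta:=\ev\cdot(\yv_1-\yv_2)$, the probability that the two $s$-roundings agree, taken over just this uniform shift, equals $\max(0,1-s|\Delta|/q)$, since the event ``same bucket'' fails exactly when an evenly-spaced rounding boundary falls in the short arc of length $|\Delta|$ between the two points. The random variable $\Delta$ is a one-dimensional Gaussian of width at most $2W\Sigma$; by the hypothesis $q/(W\Sigma)=\omega(\sqrt{\log\lambda})$ and the fact that $t,d$ are constants, the event $|\Delta|<q/(td)$ occurs with probability $1-\negl(\lambda)$, and on it both $1-t|\Delta|/q$ and $1-td|\Delta|/q$ lie in $[0,1]$, so no clipping occurs.

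The heart of the proof is then a one-line identity valid pointwise on the good event:
\[\tfrac{1}{d}\Bigl(1-\tfrac{td|\Delta|}{q}\Bigr)+\Bigl(1-\tfrac{1}{d}\Bigr)\;=\;1-\tfrac{t|\Delta|}{q}.\]
That is, the mixture of the fine-rounding collision probability with ``do nothing'' is exactly the coarse-rounding collision probability. Taking expectation over $\ev$ and absorbing the contribution of the bad event $|\Delta|\geq q/(td)$ (whose probability is negligible) as additive error gives $p_t(\yv_1,\yv_2)=\tfrac{1}{d}p_{td}(\yv_1,\yv_2)+(1-\tfrac{1}{d})\pm\negl(\lambda)$.

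The main obstacle, and the place that deserves the most care in a full write-up, is not the algebra but the passage from entrywise closeness of these Schur multipliers to trace-distance closeness of the full density matrices. The cleanest route is to couple the two sampling processes by a common draw of $(\rv,\ev)$: on the event $|\ev\cdot(\yv_1-\yv_2)|<q/(td)$ the two Schur multipliers coincide at the $(\yv_1,\yv_2)$ entry, so the dephasing actions of the two channels on that pair's coherence agree exactly. The aggregate contribution of the complementary ``bad'' event is then controlled by a Gaussian tail estimate that dominates any polynomial blow-up coming from summing over the support of $\rho$. A minor technical point is ensuring $\rv^T\uv$ is close to uniform on $\Z_q$: this is immediate for prime $q$ and nonzero $\uv$ (and $\uv=0$ occurs only on a negligible-weight branch), and otherwise follows from a standard regularity argument.
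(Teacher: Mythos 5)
Your proof follows essentially the same route as the paper's: the channel acts entrywise (a Schur multiplier) with $(\yv_1,\yv_2)$-entry equal to the rounding-collision probability, which---using the uniform shift $\rv\cdot\uv$ and conditioning on the Gaussian-tail event $|\ev\cdot(\yv_1-\yv_2)|<q/(td)$---equals $1-s|\Delta|/q$, and the pointwise identity $\tfrac{1}{d}\bigl(1-\tfrac{td|\Delta|}{q}\bigr)+\bigl(1-\tfrac{1}{d}\bigr)=1-\tfrac{t|\Delta|}{q}$ then yields the claim, extended to general $\rho$ by linearity over the matrix units $|\yv\rangle\langle\yv'|$. Your closing paragraph about upgrading entrywise closeness to trace-distance closeness (and about the uniformity of $\rv\cdot\uv$) is extra care on points the paper simply dispatches by conditioning on the good event "up to negligible error" and invoking linearity.
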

Note that Lemma~\ref{lem:m1} means that $M_1^{t}$ can be realized by the mixture of two measurements: $M_1^{t\times d}$ with probability $1/t^2$, and the identity with probability $\left(1-\frac{1}{d}\right)$. We now give the proof.
\begin{proof}Consider the action of $M_1^t$ on $|\yv\rangle\langle\yv'|$, for a constant $t$. First, an LWE sample $\bv=\Am\cdot\rv+\ev$ is chosen. Then conditioned on this sample, if $\lfloor\bv\cdot\yv\rceil_t=\lfloor\bv\cdot\yv'\rceil_t$, the output is $|\yv\rangle\langle\yv'|$. Otherwise the output is 0. Averaging over all $\bv$, we have that 
	\[M_1^t(|\yv\rangle\langle\yv'|)=\Pr_\bv[\lfloor\bv\cdot\yv\rceil_t=\lfloor\bv\cdot\yv'\rceil_t]\]
where the probability is over $\bv$ sampled as $\bv=\Am\cdot\rv+\ev$. Recalling that $\uv=\Am^T\cdot\yv=\Am^T\cdot\yv'$, we have that:
\begin{align*}
\bv\cdot\yv&=\rv\cdot\uv+\ev\cdot\yv\\
\bv\cdot\yv'&=\rv\cdot\uv+\ev\cdot\yv'
\end{align*}

Now, by our choice of $\Sigma$, $|\ev\cdot(\yv-\yv')|<q/t$ for any constant $t$, except with negligible probability. We will therefore assume this is the case, incurring only a negligible error. 

Note that $\zv:=\rv\cdot\uv$ is uniform in $\Z_q$ and independent of $\ev\cdot\yv,\ev\cdot\yv'$. So measuring $\lfloor\bv\cdot\yv\rceil_t$ is identical to measuring the result of rounding $\ev\cdot\yv$, except that the rounding boundaries are rotated by a random $\zv\in\Z_q$. Since the rounding boundaries are $q/t$ apart, at most a single rounding boundary can be between $\ev\cdot\yv$ and $\ev\cdot\yv'$, where ``between'' means lying in the shorter of the two intervals (of length $|\ev\cdot(\yv-\yv')|$) resulting by cutting the circle $\Z_q$ at the points $\ev\cdot\yv$ and $\ev\cdot\yv'$. $\lfloor\bv\cdot\yv\rceil_t=\lfloor\bv\cdot\yv'\rceil_t$ if and only if no rounding boundary is between them.

Since the cyclic shift $\zv$ is uniform each rounding boundary is uniform. Since there are $t$ rounding boundaries and no two of them can between $\ev\cdot\yv$ and $\ev\cdot\yv'$, we have that, conditioned on $\ev$, the probability $\lfloor\bv\cdot\yv\rceil_t\neq\lfloor\bv\cdot\yv'\rceil_t$ is therefore $\frac{t}{q}|\ev\cdot(\yv-\yv')|$. Averaging over all $\ev$, we have that, up to negligible error:
	\[M_1^t(|\yv\rangle\langle\yv'|)=\left(1-\frac{t}{q}\E_\ev[|\ev\cdot(\yv-\yv')|]\right)|\yv\rangle\langle\yv'|\]
	
Notice then that $M_1^{t}(|\yv\rangle\langle\yv'|)=\frac{1}{d}M_1^{t\times d}(|\yv\rangle\langle\yv'|)+\left(1-\frac{1}{d}\right)|\yv\rangle\langle\yv'|$. By linearity, we therefore prove Lemma~\ref{lem:m1}.
\end{proof}

Note that the proof of Lemma~\ref{lem:m1} also demonstrates that $M_0$ and $M_1^t$ commute, since their action on density matrices is just component-wise multiplication by a fixed matrix.

\paragraph{$M_2^t$:} Given $\Am$, to apply this measurement, first sample an LWE sample $\bv=\Cm\cdot\rv'+\ev$. Then apply the measurement $\yv\mapsto \lfloor \bv\cdot\yv \rceil_t$. Let $p_t$ be the probability that $\lfloor x\rceil_t=\lfloor y\rceil_t$ for uniformly random $x,y\in\Z_q$. Note that for any constant $t$, $p_t\leq t^{-1}+O(q^{-1})$.

\begin{lemma}\label{lem:m2}For any constant $t$, $M_2^t(\rho)$ is statistically close to $M_0(M_1^t(\rho))+p_t(\rho-M_0(\rho))$.\end{lemma}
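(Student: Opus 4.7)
The approach is to decompose the action of each super-operator on a density-matrix basis element $|\yv\rangle\langle\yv'|$ and then invoke linearity, mirroring the calculation in Lemma~\ref{lem:m1}. Since $\bv=\Cm\cdot\rv'+\ev$, we have $\bv\cdot(\yv-\yv')=\rv'\cdot(\Cm^T\cdot(\yv-\yv'))+\ev\cdot(\yv-\yv')$, so the joint distribution of $(\bv\cdot\yv,\bv\cdot\yv')$ behaves very differently depending on whether $\Cm^T\cdot\yv=\Cm^T\cdot\yv'$ or not. I would split into these two cases and verify the claimed identity on each, then extend by linearity.

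First I would handle the \emph{coherent sector} $\Cm^T\cdot\yv=\Cm^T\cdot\yv'$, on which $M_0$ acts as the identity. Here the $\rv'$-dependent difference $\rv'\cdot\Cm^T\cdot(\yv-\yv')$ vanishes, while the common summand $\rv'\cdot\Cm^T\cdot\yv$ is uniform in $\Z_q$ and plays exactly the role of the uniform cyclic shift $\rv\cdot\uv$ from the proof of Lemma~\ref{lem:m1}. Rerunning that argument verbatim yields
\[
M_2^t(|\yv\rangle\langle\yv'|)=\left(1-\frac{t}{q}\E_\ev|\ev\cdot(\yv-\yv')|\right)|\yv\rangle\langle\yv'|=M_1^t(|\yv\rangle\langle\yv'|),
\]
up to the same negligible error coming from the event $|\ev\cdot(\yv-\yv')|\geq q/t$. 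Since $\rho-M_0(\rho)$ vanishes on this sector and $M_0$ commutes with $M_1^t$ (as observed after Lemma~\ref{lem:m1}), the right-hand side of the lemma reduces to $M_1^t(M_0(|\yv\rangle\langle\yv'|))=M_1^t(|\yv\rangle\langle\yv'|)$, matching.

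Next I would treat the \emph{off-diagonal sector} $\Cm^T\cdot\yv\neq\Cm^T\cdot\yv'$, which is precisely the support of $\rho-M_0(\rho)$. Here $\rv'\cdot\Cm^T\cdot(\yv-\yv')$ is nonzero and, modulo the usual wrinkles for non-prime $q$, statistically close to uniform on $\Z_q$; consequently $(\bv\cdot\yv,\bv\cdot\yv')$ is jointly uniform on $\Z_q^2$ conditioned on $\ev$ up to negligible error, so
\[
\Pr_\bv\bigl[\lfloor\bv\cdot\yv\rceil_t=\lfloor\bv\cdot\yv'\rceil_t\bigr]=p_t.
\]
Hence $M_2^t(|\yv\rangle\langle\yv'|)=p_t\,|\yv\rangle\langle\yv'|$. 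On the right-hand side, $M_0(M_1^t(|\yv\rangle\langle\yv'|))=0$ because $M_0$ annihilates any component with $\Cm^T\cdot\yv\neq\Cm^T\cdot\yv'$, while $p_t(\rho-M_0(\rho))$ contributes exactly $p_t|\yv\rangle\langle\yv'|$, again matching. Linearity then extends the identity from basis components to all of $\rho$.

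The hard part is controlling the distribution of $\rv'\cdot\Cm^T\cdot(\yv-\yv')\bmod q$ in the off-diagonal case when $q$ is not prime: if all entries of $\Cm^T\cdot(\yv-\yv')$ share a nontrivial common factor with $q$, the induced shift lives in a proper subgroup of $\Z_q$ and the off-diagonal scalar drifts from $p_t$. For prime $q$ (e.g.\ the $q=P$ setting of~\cite{KLS22}) this is automatic. In the general case, since $\rho$ is supported on short $\yv$ and the kernel is determined by a generic $\Am$, I expect one can bound the fraction of $(\yv,\yv')$ pairs in the support exhibiting this pathology by a negligible quantity and absorb the discrepancy into the statistical-closeness slack of the lemma.
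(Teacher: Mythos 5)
Your proposal is correct and follows essentially the same route as the paper's proof: decompose the action of $M_2^t$ on basis components $|\yv\rangle\langle\yv'|$, split into the cases $\Cm^T\cdot\yv=\Cm^T\cdot\yv'$ (where the argument of Lemma~\ref{lem:m1} applies verbatim and $M_0$ acts as the identity) and $\Cm^T\cdot\yv\neq\Cm^T\cdot\yv'$ (where the rounded values agree with probability $p_t$ and $M_0$ annihilates the component), and conclude by linearity. The only difference is cosmetic: the paper simply asserts that $\bv\cdot\yv$ and $\bv\cdot\yv'$ are independent and uniform in the off-diagonal case, whereas you add a (reasonable) caveat about non-prime $q$ and absorb it into the statistical slack.
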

Note that unlike Lemma~\ref{lem:m1}, the expression in Lemma~\ref{lem:m2} does not correspond to a mixture of measurements applied to $\rho$. However, we will later see how to combine Lemma~\ref{lem:m2} with Lemma~\ref{lem:m1} to obtain such a mixture.
\begin{proof}The proof proceeds similarly to Lemma~\ref{lem:m1}. We consider the action of $M_2^t$ on $|\yv\rangle\langle\yv'|$, and conclude that
	\[M_2^t(|\yv\rangle\langle\yv'|)=\Pr_\bv[\lfloor\bv\cdot\yv\rceil_t=\lfloor\bv\cdot\yv'\rceil_t]\]
where the probability is over $\bv=\Cm\cdot\rv'+\ev$. But now we have that
\begin{align*}
\bv\cdot\yv&=\rv'^T\cdot\Cm^T\yv+\ev\cdot\yv\\
\bv\cdot\yv'&=\rv'^T\cdot\Cm^T\yv+\ev\cdot\yv'
\end{align*}

We consider two cases:
\begin{itemize}
	\item $\Cm^T\cdot\yv=\Cm^T\cdot\yv'$. This case is essentially identical to the proof of Lemma~\ref{lem:m1}, and we conclude that $\Pr_\bv[\lfloor\bv\cdot\yv\rceil_t=\lfloor\bv\cdot\yv'\rceil_t]=1-\frac{t}{q}\E_\ev[|\ev\cdot(\yv-\yv')|]$. Note that for such $\yv,\yv'$, we also have \[M_0(M_1^t(|\yv\rangle\langle\yv'|))+p_t(|\yv\rangle\langle\yv'|-M_0(|\yv\rangle\langle\yv'|))=M_1^t(M_0(|\yv\rangle\langle\yv'|))+p_t\times 0=1-\frac{t}{q}\E_\ev[|\ev\cdot(\yv-\yv')|]\enspace ,\] 
	since $M_0$ is the identity on such $|\yv\rangle\langle\yv'|$. Thus, we have the desired equality for $\rho=|\yv\rangle\langle\yv'|$.
	\item $\Cm^T\cdot\yv\neq\Cm_T\cdot\yv'$. In this case, $\bv\cdot\yv$ and $\bv\cdot\yv'$ are independent and uniform over $\Z_p$. Therefore, $\Pr_\bv[\lfloor\bv\cdot\yv\rceil_t=\lfloor\bv\cdot\yv'\rceil_t] =p_t$. Note that for such $\yv,\yv'$, we also have
	\[M_0(M_1^t(|\yv\rangle\langle\yv'|))+p_t(|\yv\rangle\langle\yv'|-M_0(|\yv\rangle\langle\yv'|))=0+p_t |\yv\rangle\langle\yv'|\enspace ,\]
	since $M_0(|\yv\rangle\langle\yv'|)=0$ in this case.
\end{itemize}
Thus for each $|\yv\rangle\langle\yv'|$, we have the desired equality. By linearity, this thus extends to all $\rho$.
\end{proof}
Combining Lemmas~\ref{lem:m1} and~\ref{lem:m2}, we obtain:
\begin{corollary}\label{cor:m}For any constants $t,d$, $M_2^{t}(\rho)$ is statistically close to $\frac{1}{d}M_0(M_1^{t\times d}(\rho))+\left(1-\frac{1}{d}-p_t\right)M_0(\rho)+p_t \rho$. 
\end{corollary}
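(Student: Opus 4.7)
The plan is to prove Corollary~\ref{cor:m} by composing the two lemmas directly. Concretely, I would start from Lemma~\ref{lem:m2}, which gives the statistical approximation
\[ M_2^t(\rho) \approx M_0(M_1^t(\rho)) + p_t\bigl(\rho - M_0(\rho)\bigr), \]
and then substitute the expansion of $M_1^t$ from Lemma~\ref{lem:m1}, namely $M_1^t(\rho) \approx \tfrac{1}{d} M_1^{t\times d}(\rho) + (1-\tfrac{1}{d})\rho$, into the first term.

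The key observation needed for this substitution to work cleanly is that $M_0$ is a linear (CPTP) map acting on density matrices, so
\[ M_0\!\left(\tfrac{1}{d} M_1^{t\times d}(\rho) + (1-\tfrac{1}{d})\rho\right) = \tfrac{1}{d} M_0\!\left(M_1^{t\times d}(\rho)\right) + (1-\tfrac{1}{d}) M_0(\rho). \]
Plugging this back in and collecting the $M_0(\rho)$ terms yields
\[ M_2^t(\rho) \approx \tfrac{1}{d} M_0\!\left(M_1^{t\times d}(\rho)\right) + \bigl(1-\tfrac{1}{d} - p_t\bigr) M_0(\rho) + p_t\,\rho, \]
which is exactly the claimed decomposition. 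To handle the approximation symbol rigorously, I would use the fact that CPTP maps do not increase trace distance, so applying $M_0$ to a state that is negligibly close to another state preserves negligible closeness; the final bound is then obtained by a triangle inequality that combines the negligible error from Lemma~\ref{lem:m2} with the one from Lemma~\ref{lem:m1} (the latter pushed through $M_0$).

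I do not expect a real obstacle here — the argument is a two-line algebraic manipulation on top of the two preceding lemmas, together with linearity and non-expansiveness of CPTP maps. The only subtlety worth flagging is that the expression on the right-hand side is not a priori a probabilistic mixture of operations on $\rho$ (the $p_t \rho$ and $-p_t M_0(\rho)$ terms have opposite signs), so one should be careful to interpret the statement as an identity of Hermitian operators (up to negligible trace-distance error) rather than as a convex combination of quantum channels. This is also why the corollary is stated as statistical closeness of the operator-valued expressions rather than as a physical realization; the physical-realization viewpoint will be recovered later when the corollary is actually used.
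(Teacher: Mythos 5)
Your proposal is correct and follows essentially the same route as the paper, which simply combines Lemma~\ref{lem:m1} and Lemma~\ref{lem:m2} via linearity of $M_0$ and collects terms (the paper even makes your flagged point explicitly, noting afterward that the expression is a genuine mixture of measurements only when $1-\frac{1}{d}-p_t\geq 0$). Nothing further is needed.
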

For $d$ such that $1-\frac{1}{d}-p_t\geq 0$, this represents a mixture of measurements $M_0\circ M_1^{t\times d}, M_0$, and the identity.

We are now ready to prove Theorem~\ref{thm:main}. Suppose there is an algorithm $\adv$ that constructs a mixed state $\rho$, and then can distinguish $\rho$ from $M_0(\rho)$ with (signed) advantage $\epsilon$. Let $d$ be a positive integer, to be chosen later. Let $\rho_0=\rho$, and $\rho_i=M_1^{t\times d}(\rho_{i-1})$. Note that for any polynomial $i$, $\rho_i$ can be efficiently constructed. Let $\epsilon_0=\epsilon$, and $\epsilon_i$ be the (signed) distinguishing advantage of $\adv$ when given $\rho_i$ vs $M_0(\rho_i)$. 

Let $\delta_i$ be the (signed) distinguishing advantage of $\adv$ for $M_2^t(\rho_i)$ and $M_1^t(\rho_i)$. Write $g=1-\frac{1}{d}-p_t$. Invoking Lemma~\ref{lem:m1} and Corollary~\ref{cor:m} with $d$, we have that 
\[\delta_i=\frac{1}{d}\epsilon_{i+1}+g\epsilon_i\]

Now, we note that $\delta_i$ must be negligible, by the assumed hardness of $(n,m,q,\Sigma,\ell,D)$-LWE. Solving the recursion gives:
\[\epsilon_i(-dg)^{-i}=\epsilon-\frac{1}{d}\sum_{j=0}^{i-1}(-dg)^{-j} \delta_{j+1}\]

Next, assume $d$ is chosen so that $dg$ is a constant greater than 1. Define $T=\sum_{j=0}^{\lambda-1} (dg)^{-j}=\frac{dg}{dg-1}- 2^{-O(\lambda)}$. Consider the adversary $\adv'$ for $(n,m,q,\Sigma,\ell,D)$-LWE, which does the following:
\begin{itemize}
	\item On input $\Am,\Sm,\bv$, where $\bv=\Am\cdot\rv+\ev$ or $\bv=\Cm\cdot\rv'+\ev$, it chooses $j\in[0,\lambda-1]$ with probability $(dg)^{-j}/T$
	\item Then it constructs $\rho$ according to $\adv$. 
	\item Next, $\adv'$ computes $\rho_j$ by applying $M_1^{t\times d}$ to $\rho$ for $j$ times.
	\item Now $\adv'$ applies the measurement $\yv\mapsto \lfloor \bv\cdot\yv \rceil_t$ to $\rho_j$, obtaining $\rho_j'$.
	\item $\adv'$ runs the distinguisher for $\adv$, obtaining a bit $b$
	\item $\adv'$ outputs $b$ if $j$ is even, $1-b$ if $j$ is odd.
\end{itemize}
Note that if $\bv$ is $\Am\cdot\rv+\ev$, then $\rho'=M_1^t(\rho_i)$, and if $\bv$ is $\Cm\cdot\rv'+\ev$, then $\rho'=M_2^t(\rho_i)$. Therefore, the distinguishing advantage of $\adv'$ is:
\[\delta=\frac{1}{T}\sum_{j=0}^{\lambda-1} (-dg)^{-j}\delta_{j+1}\]

Thus, we have that \[\epsilon_\lambda(-dg)^{-\lambda}=\epsilon-\frac{T}{d}\delta\enspace ,\]
Noting that $\epsilon_\lambda$ must trivially be in $[-1/2,1/2]$, we have that:
 \[|\delta|\geq\frac{d}{T}\left(|\epsilon|-\frac{1}{2}(dg)^{-\lambda}\right)\geq d\left(1-\frac{1}{dg}\right)|\epsilon|-2^{-O(\lambda)}\]
Thus, if $\adv$ has non-negligible distinguishing advantage, so does $\adv'$, breaking the $(n,m,q,\Sigma,\ell,D)$-LWE assumption. This completes the proof of Theorem~\ref{thm:main}.\end{proof}

\subsection{Constructing $|\psi_\uv'\rangle$} 
\label{sec:constructingpsi}

Here, we explain how to construct $|\psi_\uv'\rangle$, given just the vector $\yv$ that resulted from measuring it. We first observe that, since $|\psi_\uv'\rangle$ has support only on vectors that differ from $\yv$ by multiples of the columns of $\Sm$, we can write:
	\[|\psi_\uv'\rangle\propto\sum_{\tv}\alpha_{\yv+\Sm\cdot\tv}|\yv+\Sm\cdot\tv\rangle\]
	
Where $\alpha_{\yv}$ is the amplitude of $\yv$ in $|\psi\rangle$. This gives a hint as to how to construct $|\psi_\uv'\rangle$: create a superposition over short linear combinations of $\Sm$, and then use linear algebra to transition to a superposition over $\yv+\Sm\cdot\tv$, weighted according to $\alpha$. The problem of course is that $\alpha$ may be arbitrary except for having support only on short vectors. Therefore, we do not expect to be able to construct $|\psi_\uv'\rangle$ in full generality, and instead focus on special (but natural) cases, which suffice for our use.

\paragraph{Wide Gaussian Distributed.} Suppose the initial state $|\psi\rangle$ is the discrete Gaussian over the integers: $|\psi\rangle=|\mathbf{\Psi}_{\Z^m,\Sigma,\cv}\rangle$ for some center $\cv$ and covariance matrix $\Sigma$. Then $|\psi_\uv'\rangle$ is simply \[|\mathbf{\Psi}_{\Ls+\yv,\Sigma,\cv}\rangle\]
Here, $\Ls$ is the integer lattice generated by the columns of $\Sm$, and $\Ls+\yv$ is the lattice $\Ls$ shifted by $\yv$. We can construct the state $|\mathbf{\Psi}_{\Ls+\yv,\Sigma,\cv}\rangle$ by first constructing $|\mathbf{\Psi}_{\Ls,\Sigma,\cv-\yv}\rangle$, and then adding $\yv$ to the superposition. Thus, as long as $\sv_i^T\cdot\Sigma^{-1}\cdot\sv_i\leq 1/\omega(\sqrt{\log\lambda})$ for all $i$, we can construct the necessary state.

\paragraph{Constant Dimension, Hyper-ellipsoid Bounded.} Here, we restrict $\Ls$ to having a constant number of columns, but greatly generalize the distributions that can be handled.

A hyper-ellipsoid is specified by a positive definite matrix $\Sigma$, which defines the set $E_{\Sigma,\cv}=\{\yv:(\yv-\cv)^T\cdot \Mm\cdot(\yv-\cv)\leq 1\}$.

\begin{definition}[Good Hyper-ellipsoid] A \emph{good hyper-ellipsoid} for $|\psi\rangle$ is an $E_{\Sigma,\cv}$ such that there exists a function $\eta(\lambda)$ and polynomials $p(\lambda),q(\lambda)$ such that, if $|\psi\rangle$ is measured to get a vector $\yv$, then each of the following are true except with negligible probability:
	\begin{itemize}
		\item $\yv\in E_{\Sigma,\cv}$. In other words, $E_{\Sigma,\cv}$ contains essentially all the mass of $|\psi\rangle$.
		\item $|\alpha_\xv|^2\leq\eta(\lambda)$. In other words, $\eta$ is an approximate upper bound on $\alpha_{\xv}$.
		\item If a random vector $\xv$ is chosen from $E_{\Sigma,\cv}\cap \{\yv+\Sm\cdot\tv:\tv\in\Z^\ell\}$, then with probability at least $1/p(\lambda)$, $|\alpha_{\xv}|^2\geq\eta/q(\lambda)$. In other words, $E_{\Sigma,\cv}$ doesn't contain too many points with mass too much lower than $\eta$.
	\end{itemize}
\end{definition}
Taken together, a good hyper-ellipsoid is one that fits reasonably well around the $|\psi\rangle$. It must contain essentially all the support of $|\psi\rangle$, but can over-approximate it by a polynomial factor.

\begin{lemma}\label{lem:hyper} Suppose there is a good hyper-ellipsoid for $|\psi\rangle$, and that $\alpha_{\yv}$ can be efficiently computed given any vector $\yv$. Then there is a polynomial-time algorithm which constructs $|\psi_\uv'\rangle$ from $\yv$\end{lemma}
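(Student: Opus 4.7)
My plan is to produce $|\psi_\uv'\rangle$ by coherent rejection sampling over the lattice points inside the hyper-ellipsoid. Let $T = \{\tv \in \Z^\ell : \yv + \Sm\cdot\tv \in E_{\Sigma,\cv}\}$ and
\[
|\tilde\psi\rangle \;\propto\; \sum_{\tv \in T} \alpha_{\yv+\Sm\cdot\tv}\,|\yv+\Sm\cdot\tv\rangle .
\]
The first good-hyper-ellipsoid condition forces the mass of $|\psi_\uv'\rangle$ outside $E_{\Sigma,\cv}$ to be negligible, so $|\tilde\psi\rangle$ is negligibly close to $|\psi_\uv'\rangle$ and it suffices to build $|\tilde\psi\rangle$.

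A crucial preliminary observation is that $N := |T|$ is polynomial: combining the second condition $|\alpha_\xv|^2 \leq \eta$ and the third one (at least $N/p$ of the points in $T$ have $|\alpha|^2 \geq \eta/q$) with the trivial bound $\sum_{\tv\in T}|\alpha_{\yv+\Sm\cdot\tv}|^2 \leq 1$ gives $N\eta/(pq) \leq 1$, i.e.\ $N \leq pq/\eta$, which is polynomial in the regime where $\eta$ is inverse polynomial (the only regime in which the lemma is substantive). To enumerate $T$ itself, pull the ellipsoid $E_{\Sigma,\cv} - \yv$ back through the injection $\tv \mapsto \Sm\cdot\tv$ to obtain an $\ell$-dimensional ellipsoid, compute an LLL-reduced basis of $\Sm\Z^\ell$, and run Fincke--Pohst enumeration; since $\ell$ is a constant, this lists all of $T$ in time polynomial in $N$.

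With $T$ in hand, prepare the uniform superposition $\tfrac{1}{\sqrt{N}}\sum_{\tv \in T}|\tv\rangle$ using an index register and a standard QFT-based preparation, then apply the efficient in-place injection $|\tv\rangle \mapsto |\yv + \Sm\cdot\tv\rangle$. Append an ancilla qubit in $|0\rangle$ and, using the hypothesized efficient classical computation of $\alpha_\xv$, implement the controlled rotation
\[
|\xv\rangle|0\rangle \;\mapsto\; |\xv\rangle \otimes \Bigl(\tfrac{\alpha_\xv}{\sqrt{\eta}}\,|0\rangle + \sqrt{1 - |\alpha_\xv|^2/\eta}\,|1\rangle\Bigr),
\]
which is well-defined on the essential support by the $|\alpha_\xv|^2 \leq \eta$ bound. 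Measuring the ancilla and conditioning on outcome $|0\rangle$ projects the main register onto $|\tilde\psi\rangle$.

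The post-selection succeeds with probability
\[
\tfrac{1}{N\eta}\sum_{\tv\in T}|\alpha_{\yv+\Sm\cdot\tv}|^2 \;\geq\; \tfrac{1}{N\eta}\cdot\tfrac{N}{p}\cdot\tfrac{\eta}{q} \;=\; \tfrac{1}{pq},
\]
which is inverse polynomial, so wrapping the preparation in fixed-point amplitude amplification drives the success probability to $1-\negl(\lambda)$ in polynomial time. The main obstacle I expect is the enumeration step: the pulled-back ellipsoid may be geometrically very elongated, so one cannot simply enumerate an axis-aligned bounding box; handling this cleanly requires lattice reduction followed by Fincke--Pohst-style nested enumeration, with the crucial use of $\ell$ being constant to keep the enumeration polynomial in $N$. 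Once enumeration is in hand, the rest is the textbook coherent rejection-sampling construction.
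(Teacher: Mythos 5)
Your final rejection-sampling step (rotate by $\alpha_\xv/\sqrt{\eta}$, post-select on $|0\rangle$, success probability at least $1/pq$) is essentially the same as the last step of the paper's proof. The genuine gap is in how you get the state you feed into it: your entire algorithm rests on the claim that $N=|T|$ is polynomial, and that claim is false in the regime the lemma is actually used for. Your inequality $N\leq pq/\eta$ is correct, but $\eta$ is an upper bound on the squared amplitudes of the \emph{full} state $|\psi\rangle$ over $\Z^m$, not of the renormalized conditional state $|\psi_\uv'\rangle$ on the slice. For the natural instantiations (e.g.\ the Gaussian-like product state arising from \cite{KLS22}), the maximal amplitude of $|\psi\rangle$ scales like an inverse product of the $m$ axis widths and is exponentially small, so $pq/\eta$ is exponentially large; correspondingly the slice $E_{\Sigma,\cv}\cap\{\yv+\Sm\cdot\tv\}$ can contain superpolynomially many lattice points even though $\ell$ is constant (axis lengths such as $2k$ and $\sigma\,\omega(\log\lambda)$ need not be polynomial). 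Nothing in the definition of a good hyper-ellipsoid forces $\eta$ to be inverse polynomial, and the ``only substantive regime'' assertion is the wrong way around: the substantive regime is exactly the one with huge slices, where explicit Fincke--Pohst enumeration of $T$ and a listed-index uniform-superposition preparation cannot run in polynomial time.

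The repair is to prepare the intermediate state without enumeration, which is what the paper does: since $\ell$ is constant, one can compute successive shortest vectors of the sublattice $\Ls=\Ls(\Sm)$ measured in the $(\Sigma')^{-1}$ geometry of the intersection ellipsoid $E_{\Sigma',\cv'}$, obtaining a short generating set $\Tm$ with $\rv_i^T(\Sigma')^{-1}\rv_i\leq 2$ that spans the same points of the ellipsoid; then coherently prepare the discrete Gaussian $|\mathbf{\Psi}_{\Ls+\yv,\beta\Sigma',\cv'}\rangle$ with $\beta=\omega(\log\lambda)$ using the coherent sampler of Theorem~\ref{thm:gaussiansuper}, and rejection-sample it down to the uniform superposition $|E\rangle$ over $(\Ls+\yv)\cap E_{\Sigma',\cv'}$ (success probability $\Omega_\ell(\beta^{-\ell/2})$, inverse polynomial because $\ell$ is constant). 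Only after that does one apply your $\alpha$-weighted rejection step. If you replace your enumeration and index-register preparation with such a Gaussian-plus-rejection preparation of $|E\rangle$, the rest of your argument goes through; as written, however, the construction is not polynomial time.
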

\begin{proof}Let $E_{\Sigma,\cv}$ be the good hyper-ellipsoid. Let $\Ls$ be the lattice generated by the columns of $\Sm$. By assumption, with overwhelming probability if we measure $|\psi\rangle$ to get $\yv$, we have $\yv\in E_{\Sigma,\cv}$. Let $E_{\Sigma',\cv'}$ be the ellipsoid that is the intersection of $E_{\Sigma,\cv}$ and the affine space $\{\yv+\Sm\cdot\tv:\tv\in\R^\ell\}$.
	
\begin{claim}There is PPT algorithm which, given $\Sm,\Sigma'$, computes $\Tm=\{\rv_1,\cdots,\rv_{\ell'}\}$ such that:
	\begin{itemize}
		\item $\rv_i^T\cdot(\Sigma')^{-1}\cdot\rv_i\leq 2$ for all $i\in[\ell']$, and
		\item $E_{\Sigma',\cv'}\cap\{\yv+\Tm\cdot\tv:\tv\in\Z^{\ell'}\}=E_{\Sigma',\cv'}\cap\{\yv+\Sm\cdot\tv:\tv\in\Z^\ell\}$.
	\end{itemize}
\end{claim}
\begin{proof}Write $(\Sigma')^{-1}$ as $(\Sigma')^{-1}=\Um^T\cdot\Um$. Let $\Sm'=\{\sv_1'=\Um\cdot\sv_1,\dots,\sv_\ell'=\Um\cdot\sv_n\}$, and let $\Ls'$ be the lattice generated by $\Sm'$. Since $\ell$ is constant, we can find shortest vectors in $\Ls'$ in polynomial time. Therefore, compute $\rv_1',\dots,\rv_\ell'$ such that $\rv_i'$ is the shortest vector in $\Ls'$ that is linearly independent from $\{\rv_1',\dots,\rv_{i-1}'\}$. Then let $\ell'$ be such that $|\rv'_{\ell'}|^2\leq 2$, but $|\rv'_{\ell'+1}|^2>2$, or $\ell'=\ell$ if no such $\ell'$ exists.
	
Finally, let $\rv_i=\Um^{-1}\cdot\rv'_i$. Clearly, we have that $\rv_i^T\cdot(\Sigma')^{-1}\cdot\rv_i\leq 2$. It remains to show that $E_{\Sigma',\cv'}\cap\{\yv+\Tm\cdot\tv:\tv\in\Z^{\ell'}\}=E_{\Sigma',\cv'}\cap\{\yv+\Sm\cdot\tv:\tv\in\Z^\ell\}$. First, we notice that the lattice $\Ls(\Tm)$ spanned by $\Tm$ is a sub-lattice of $\Ls(\Sm)$ spanned by $\Sm$. So one containment is trivial. Now assume toward contradiction that there is a $\xv\in E_{\Sigma',\cv'}\cap\{\yv+\Sm\cdot\tv:\tv\in\Z^\ell\}$ that is not in $E_{\Sigma',\cv'}\cap\{\yv+\Tm\cdot\tv:\tv\in\Z^{\ell'}\}$. This means $\xv-\yv$ is in  $\Ls(\Sm)$. We also have that $(\yv-\cv')^T\cdot(\Sigma')^{-1}\cdot(\yv-\cv')\leq 1$ (since and $(\xv-\cv')^T\cdot(\Sigma')^{-1}\cdot(\xv-\cv')\leq 1$. By the triangle inequality, we have therefore that $(\xv-\yv)^T\cdot(\Sigma')^{-1}\cdot(\xv-\yv)\leq 2$. 

But then we have that $\Um\cdot(\xv-\yv)$ has norm at most $2$, lies in $\Ls'$, and is linearly independent of $\{\rv_1',\dots,\rv_{\ell'}'\}$. This contradicts that $\rv'_{\ell'+1}$ (which has norm squared strictly greater than 2) is a shortest vector linearly independent of $\{\rv_1',\dots,\rv_{\ell'}'\}$. This completes the proof of the claim.\end{proof}

We now return to proving Lemma~\ref{lem:hyper}. Let $\beta=\omega(\log\lambda)$. We construct $|\psi_\uv'\rangle$ in three steps:
\begin{itemize}
	\item We first construct a state negligibly close to $|\mathbf{\Psi}_{\Ls+\yv,\beta\Sigma',\cv'}\rangle$, as we did in the Gaussian-distributed case above.
	\item We then construct the state $|E\rangle$, defined as the uniform superposition over the intersection of $\Ls+\yv$ and $E_{\Sigma',\cv'}$. $|E\rangle$ will be obtained from $|\mathbf{\Psi}_{\Ls+\yv,\beta\Sigma',\cv'}\rangle$ via a measurement.
	\item Construct $|\psi_\uv'\rangle$ from $|E\rangle$. This also will be obtained via a measurement.
\end{itemize}

We now describe the two measurements. We start from the second. Let $\eta,p,q$ be the values guaranteed by the goodness of $E_{\Sigma,\cv}$. Define $\eta_\xv=1/\eta$ if $|\alpha_{\xv}|^2\leq \eta$, and otherwise $\eta_\xv=1/|\alpha_{\xv}|^2$. To obtain $|\psi_\uv'\rangle$ from $|E\rangle$, we apply the following map in superposition and measure the second register:
\[|\xv\rangle\mapsto|\xv\rangle\left(\sqrt{\eta_\xv}\alpha_{\xv}|0\rangle+\sqrt{1-|\eta_\xv\alpha_{\xv}|^2}|1\rangle\right)\]

Suppose for the moment that $\eta_\xv=1/\eta$ for all $\xv$. Then conditioned on the measurement outcome being 0, the resulting state is exactly $|\psi_\uv'\rangle$. By the guarantee that $E_{\Sigma,\cv}$ is good, we have that except with negligible probability over the choice of $\yv$, all but a negligible fraction of the support of $|\psi_\uv'\rangle$ satisfies $\eta_\xv=1/\eta$. Therefore, we will assume (with negligible error) this is the case. The probability the measurement is 0 (over the choice of $\yv$ as well) is $\E_{\xv\gets E_{\Sigma',\cv'}}[\alpha_{\xv}^2/\eta]$, which, with probability at least $1/p$ over the choice of $\yv$, is at least $1/q$. Thus, the overall probability of outputting 0 is inverse polynomial, and in this case we produce a state negligibly close to $|\psi_\uv'\rangle$.

It remains to construct $|E\rangle$ from $|\mathbf{\Psi}_{\Ls+\yv,\beta\Sigma',\cv'}\rangle$. This follows a very similar rejection-sampling argument. Let \[\gamma_\xv=\begin{cases}e^{-\pi/\beta}\times\sqrt{e^{\pi (\xv-\cv')^T\cdot(\beta\Sigma')^{-1}\cdot(\xv-\cv')}}&\text{ if }(\xv-\cv')^T\cdot(\Sigma')^{-1}\cdot(\xv-\cv')\leq 1\\0&\text{ otherwise}\end{cases}\]

Note that $0\leq\gamma_\xv\leq 1$. Now apply to $|\mathbf{\Psi}_{\Ls+\yv,\beta\Sigma',\cv'}\rangle$ the map $|\xv\rangle\mapsto|\xv\rangle(\gamma_\xv|0\rangle+\sqrt{1-\gamma_\xv^2}|1\rangle)$, and measure the second coordinate. If the measurement outcome is 0, then the resulting state is exactly $|E\rangle$. For $\xv\in E_{\Sigma',\cv'}$, we have $\gamma_\xv\geq e^{-\pi/\beta}\geq 1-o(1)$. Therefore, the probability the measurement outputs 0 is at least $1-o(1)$ times the probability measuring $\mathbf{\Psi}_{\Ls+\yv,\beta\Sigma',\cv'}$ produces an $\xv\in E_{\Sigma',\cv'}$. This latter probability is $O_\ell(\beta^{-\ell/2})$, where the constant hidden by the big $O$ depends on $\ell$. Since $\ell$ is constant and $\beta$ is polynomial (in fact, sub-polynomial), the overall probability is polynomial. This completes the construction of $|\psi_\uv'\rangle$ and the proof of Lemma~\ref{lem:hyper}.\end{proof}

\subsection{Applying to \cite{KLS22}}
\label{sec:attack_KLS_main}

We can then adapt the quantum money scheme in \cite{KLS22} into the above general framework.

\ifllncs
To avoid confusion, we first refer the readers to our alternate view on \cite{KLS22} scheme in \ref{sec:alternate}. Then we will see how to apply our attack onto their scheme in \ref{sec:attack_KLS}.
\fi


\ifllncs
\subsection{Applying Our General Attack to \cite{KLS22}}
\label{sec:attack_KLS}
We now show that we can adapt the quantum money scheme in \cite{KLS22} into the general framework presented in Section \ref{sec:attack}.

\else

To apply to \cite{KLS22}, we apply the above with $n=1$.

First, recall that the first few entries of $\vv'$ are $-\Gamma=(2t\sigma\Delta+1)/2,-1,1,\Delta/2$, and the rest are random. We then note that we can always arbitrarily re-scale the vector $\vv'$, without affecting the scheme, since it only permutes the serial numbers but does not change the scheme. Then we see that $\vv'$ is a random vector, subject to being orthogonal to the following three vectors:
\[\sv_0=\left(\begin{array}{c}0\\0\\\Delta\\-2\\0\\\vdots\\0\end{array}\right)\;\;\;\sv_1=\left(\begin{array}{c}0\\1\\1\\0\\0\\\vdots\\0\end{array}\right)\;\;\;\sv_2=\left(\begin{array}{c}-2\\2\Gamma=2t\sigma\Delta+1\\0\\0\\0\\\vdots\\0\end{array}\right)\]

A randomly re-scaled $\vv'$ is just a random vector orthogonal to each of these three vectors, which are short and linearly independent.

\medskip

So our alternative view of~\cite{KLS22} is an instance of the above the general scheme with $n=1$ and short vectors $\Sm=(\sv_1,\sv_2,\sv_3)$. We also note that we can easily construct a good ellipsoid for their $|\psi\rangle$, which has axis lengths $4,2k,\sigma\omega(\log\lambda),\cdots,\sigma\omega(\log{\lambda})$. By Lemma~\ref{lem:hyper}, we can therefore construct as many copies as we would like of the state $|\psi_\uv'\rangle$, which will fool verification under Conjecture~\ref{conj:klwe}. In the next section, we demonstrate that Conjecture~\ref{conj:klwe} holds for their particular parameter settings. Thus their scheme is insecure.

\begin{remark} Note that in our re-conceptualized version of~\cite{KLS22}, the superpositions always have support on vectors whose the first entry equal to 0 or 1. As such, there is no non-zero small multiple of $\sv_1$ such that adding this multiple results results in another vector whore first entry is 0 or 1. This means that, in our attack, the superposition $|\psi_\uv'\rangle$ will actually only have support on shifts by multiples of $\sv_0,\sv_1$. 
\end{remark}

\fi

\section{Invariant Money}
\label{sec:invariant_main}
From this section on, we discuss our positive results on quantum money/lightning.

We now describe our framework for instantiating quantum money using invariants, or more precisely what we call \emph{walkable} invariants.

 Let $X,Y$ be sets, and $I:X\rightarrow Y$ an efficiently computable function from $X$ to $Y$. $I$ will be called the ``invariant.'' We will additionally assume a collection of permutations $\sigma_i:X\rightarrow X$ indexed by $i\in [r]$ for some integer $r$, with the property that the permutations respect the invariant:
\[I(\;\sigma_i(x)\;)=I(x), \forall i\in[r]\]
In other words, action by each $\sigma_i$ preserves the value of the invariant. We require that $\sigma_i$ is efficiently computable given $i$. $r$ may be polynomial or may be exponential. To make the formalism below simpler, we will be implicitly assuming that there exists a perfect matching between the $\sigma_i$ such that for any matched $\sigma_i,\sigma_{i'}$, we have $\sigma_{i'}=\sigma_{i}^{-1}$. Moreover, $i'$ can be found given $i$. This can be relaxed somewhat to just requiring that $\sigma_i^{-1}$ can be efficiently computed given $i$, but requires a slightly more complicated set of definitions.

Given a point $x$, the orbit of $x$, denoted $O_x\subseteq X$, is the set of all $z$ such that there exists a non-negative integer $k$ and $i_1,\dots,i_k\in[r]$ such that $z=\sigma_{i_k}(\sigma_{i_{k-1}}(\cdots\sigma_{i_1}(x)\cdot ))$. In other words, $O_x$ is the set of all $z$ ``reachable'' from $x$ by applying some sequence of permutations. Note that $I(z)=y$ for any $z\in O_x$. We will therefore somewhat abuse notation, and define $I(O_x)=y$. We also let $P_y$ be the set of pre-images of $y$: $P_y=\{x\in X:I(x)=y\}$.

We will additionally require a couple properties, which will be necessary for the quantum money scheme to compile:
\begin{itemize}
	\item {\bf Efficient Generation of Superpositions:} It is possible to construct the uniform superposition over $X$: $|X\rangle:=\frac{1}{\sqrt{|X|}}\sum_{x\in X}|x\rangle$.
	\item {\bf Mixing Walks:} For an orbit $O$, with a slight abuse of notation let $\sigma_{O,i}$ be the (possibly exponentially large) permutation matrix associated with the action by $\sigma_i$ on $O$. Then let $M_O=\frac{1}{r}\sum_{i\in[r]} \sigma_{O,i}$ be the component-wise average of the matrices. Let $\lambda_1(O),\lambda_2(O)$ be the largest two eigenvalues by absolute value\footnote{They are real-valued, since $M_O$ is symmetric, owing to the fact that we assumed the $\sigma_i$ are perfectly matched into pairs that are inverses of each other.}, counting multiplicities. Note that $\lambda_1(O)=1$, with corresponding eigenvector the all-1's vector. We need that there is an inverse polynomial $\delta$ such that, for every orbit $O$, $\lambda_2(O)\leq 1-\delta$. This is basically just a way of saying that a random walk on the orbit using the $\sigma_i$ mixes in polynomial time. 
\end{itemize}
We call such a structure above a walkable invariant.

\subsection{Quantum Money from Walkable Invariants}
\label{sec:qm_scheme_invariant}

We now describe the basic quantum money scheme.

\paragraph{Minting.} To mint a note, first construct the uniform superposition $|X\rangle$ over $X$. Then apply the invariant $I$ in superposition and measure, obtaining a string $y$, and the state collapsing to:
\[|P_y\rangle:=\frac{1}{\sqrt{|P_y|}} \sum_{x\in P_y}|x\rangle\]
This is the quantum money state, with serial number $y$.

\paragraph{Verification.} To verify a supposed quantum money state $|\phi\rangle$ with serial number $y$, we do the following.
\begin{itemize}
	\item First check that the support of $|\phi\rangle$ is contained in $P_y$. This is done by simply applying the invariant $I$ in superposition, and measuring if the output is $y$. If the check fails immediately reject.
	\item Then apply the projective measurement given by the projection $\sum_{O\subseteq P_y}|O\rangle\langle O|$, where $O$ ranges over the orbits contained in $P_y$, and $|O\rangle:=\frac{1}{\sqrt{|O|}}\sum_{x\in O}|x\rangle$. In other words, project onto states where, for each orbit, the weights of $x$ in that orbit are all identical; weights between different orbits are allowed to be different.
	
	We cannot perform this measurement exactly, but we can perform it approximately using the fact that $\lambda_2(O)\leq 1-\delta$. This is described in Section~\ref{sec:verification_main} below. Outside of Section~\ref{sec:verification_main}, we will assume for simplicity that the measurement is provided exactly.
	
	If the projection rejects, reject the quantum money state. Otherwise accept.
\end{itemize}

It is hopefully clear that honestly-generated money states pass verification. Certainly their support will be contained in $P_y$, and they apply equal weight to each element in an orbit (and in fact, equal weight across orbits).

\subsection{Approximate Verification}
\label{sec:verification_main}

Here, we explain how to approximately perform the verification projection $V=\sum_{O\subseteq P_y}|O\rangle\langle O|$, using the fact that $\lambda_2(0)\leq 1-\delta$ for all $O$. The algorithm we provide is an abstraction of the verification procedure of~\cite{ITCS:FGHLS12}, except that that work presented the algorithm without any analysis. We prove that the algorithm is statistically close to the projection $V$, provided the mixing condition $\lambda_2(0)\leq 1-\delta$ is met.

\begin{theorem}\label{thm:verification}Assume $\lambda_2(0)\leq 1-\delta$ for all $O$, for some inverse-polynomial $\delta$. Then there is a QPT algorithm $\tilde{V}$ such that, for any state $|\psi\rangle$, if we let $|\psi'\rangle$ be the un-normalized post-measurement state from applying $\tilde{V}$ to $|\psi\rangle$ in the case $\tilde{V}$ accepts, then $|\psi'\rangle$ is negligibly close to $V|\psi\rangle$.\end{theorem}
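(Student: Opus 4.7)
The plan is to realize $V$ as the projector onto the eigenvalue-$1$ eigenspace of the Hermitian contraction $M := \tfrac{1}{r}\sum_{i\in[r]} \sigma_i$, and then to approximate this spectral projector by iterating a one-round subroutine whose accepting Kraus operator is exactly $M$. The first structural observations I would establish are that (a) since each $\sigma_i$ preserves the orbit partition of $X$, $M$ is block-diagonal across orbits, with $|O\rangle$ the unique eigenvalue-$1$ eigenvector of the block $M_O$ and remaining eigenvalues bounded by $\lambda_2(O)\le 1-\delta$; and (b) after an initial coherent check that $I(x)=y$ (implementing the projection $\Pi_{P_y}$ onto the $P_y$-supported subspace), $V=\sum_{O\subseteq P_y}|O\rangle\langle O|$ coincides with the eigenvalue-$1$ spectral projector of $M$ restricted to that subspace. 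Hermiticity of $M$ uses the assumed matching $\sigma_{i'}=\sigma_i^{-1}$ and yields the orthogonal decomposition we need.

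For the one-round subroutine I would prepare an ancilla register $A$ in $|{+}\rangle_A := \tfrac{1}{\sqrt r}\sum_i|i\rangle$, apply the controlled permutation $C_\sigma : |i\rangle|x\rangle\mapsto |i\rangle|\sigma_i(x)\rangle$, invert the uniform-prep unitary on $A$, and measure $A$ in the computational basis, accepting on outcome $|0\rangle$. A direct calculation identifies the accepting Kraus operator on the main register as $\langle{+}|_A\, C_\sigma\, |{+}\rangle_A = \tfrac{1}{r}\sum_i\sigma_i = M$. The full verifier $\tilde V$ then applies $\Pi_{P_y}$ and iterates this subroutine $k = \lceil \lambda/\delta\rceil$ times, accepting only if every round accepts, so that in the accepting branch the unnormalized post-measurement state equals $M^k \Pi_{P_y}|\psi\rangle$.

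The final error bound is essentially immediate from spectral calculus: decomposing $\Pi_{P_y}|\psi\rangle = V|\psi\rangle + (I-V)\Pi_{P_y}|\psi\rangle$, using $MV=V$ and the uniform bound $\|M(I-V)\|_{\mathrm{op}}\le\max_{O\subseteq P_y}\lambda_2(O)\le 1-\delta$, one gets
\[\bigl\|M^k\Pi_{P_y}|\psi\rangle - V|\psi\rangle\bigr\|\le(1-\delta)^k \le e^{-\delta k}\le e^{-\lambda},\]
which is negligible. The step I expect to require the most care is lifting the orbit-by-orbit mixing hypothesis to the uniform operator-norm bound on $M(I-V)$; this follows from block-diagonality plus Hermiticity of each $M_O$, but the bookkeeping — keeping the $\Pi_{P_y}$ reduction honest when $|\psi\rangle$ has off-$P_y$ components, and checking that the ancilla prep/unprep leaves no residual entanglement in the accepting branch — is what I would want to write out explicitly.
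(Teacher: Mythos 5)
Your proposal is correct and follows essentially the same route as the paper: repeatedly apply the ancilla-controlled walk and post-select on the ancilla returning to the uniform state, so that the accepting Kraus operator is a power of the (Hermitian, orbit-block-diagonal) averaged walk operator, and then bound the distance to $V|\psi\rangle$ by the spectral gap, giving an $e^{-\Omega(\lambda)}$ error after $\Theta(\lambda/\delta)$ rounds. The only substantive difference is that the paper uses the lazy walk (it adds $r$ identity branches so its per-round Kraus is $(I+M)/2$, needing only the one-sided bound $\ell\le 1-\delta$), whereas your direct iteration of $M$ uses the two-sided bound $|\ell|\le 1-\delta$ — which is exactly what the stated mixing assumption (second eigenvalue by \emph{absolute value}) provides — so your variant goes through, provided you state the norm bound $\|M^k(I-V)\Pi_{P_y}\|\le(1-\delta)^k$ on the $P_y$-supported block rather than $\|M(I-V)\|\le 1-\delta$ on the whole space, since orbits outside $P_y$ still carry eigenvalue $1$.
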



\medskip

\ifllncs
We refer the readers to \ref{sec:verification} for the proof due to restriction on the space.

\else

\begin{proof}Let $r'=2r$. Let $\Rs$ be a register containing a superposition over $1,\dots,r'$, and $\Ss$ be the register containing the supposed quantum money state. Define the following:
\begin{itemize}
	\item The unitary $U$ acting on $\Rs\otimes\Ss$ defined as $U=\sum_{i=1}^r |i\rangle\langle i|\otimes \sigma_i + \sum_{i=r+1}^{r'} |i\rangle\langle i|\otimes \Id$. Here, we use a slight abuse of notation, using $\sigma_i$ to denote the unitary implementing the classical permutation $\sigma_i$. $U$ can be computed efficiently, by our assumption that we can efficiently invert $\sigma_i$.
	\item The state $|\onev\rangle=\frac{1}{\sqrt{r'}}\sum_{i=1}^{r'}|i\rangle$
	\item The projection $P=|\onev\rangle\langle\onev|\otimes\Id$. 
\end{itemize}

\noindent Our algorithm is the following:
\begin{itemize}
	\item Initialize the register $\Rs$ to $|\onev\rangle$.
	\item Repeat the following $t=\lambda/\delta$ times:
	\begin{itemize}
		\item Apply the unitary $U$ to $\Rs\otimes\Ss$.
		\item Apply the measurement corresponding to projection $P$ to $\Rs\otimes\Ss$. If the measurement outcome is reject, immediately abort and reject the state.
		\item Apply $U^{-1}$.
	\end{itemize}
	\item If all $t$ trials above accepted, then accept and output the contends of the $\Ss$ register.
\end{itemize}

We now analyze the algorithm above. $M=\frac{1}{r}\sum_{i\in[r]} \sigma_i$. This matrix is symmetric, since we assumed the $\sigma_i$ come in pairs that are inverses of each other. It is not necessarily positive. For example, consider $X$ that can be divided into two subsets $X_0,X_1$ such that each $\sigma_i$ maps $X_b$ to $X_{1-b}$. Then the vector that places equal weight on all $x\in X$, but makes the values in $X_0$ positive and $X_1$ negative will have eigenvalue $-1$. In general, the eigenvalues of $M$ must be in the real interval $[-1,1]$. The eigenvectors with eigenvalue 1 are exactly $|O\rangle$ and the space spanned by them, as $O$ ranges over all possible orbits. By our mixing assumption, all other eigenvalues are at most $1-\delta$.

Suppose $|\psi_\ell\rangle$ is an eigenvector of $M$, with eigenvalue $\ell$. We now explore how the algorithm above behaves on $|\psi_\ell\rangle$. Then we will extend our understanding to non-eigenvector states.

We have that 
\begin{align*}
PU|\onev\rangle|\psi_\ell\rangle&=\frac{1}{\sqrt{r'}}\left(|\onev\rangle\langle\onev|\otimes\Id\right)\cdot\left(\sum_{i=1}^r |i\rangle\otimes \sigma_i|\psi_\ell\rangle + \sum_{i=r+1}^{r'} |i\rangle  |\psi_\ell\rangle\right)\\
&=\frac{1}{r'}|\onev\rangle\otimes\left(\sum_{i=1}^r \sigma_i |\psi_\ell\rangle+r |\psi_\ell\rangle\right)\\
&=|\onev\rangle\otimes\left(\frac{1}{2}M|\psi_\ell\rangle+\frac{1}{2}|\psi_\ell\rangle\right)=\left(\frac{1+\ell}{2}\right)|\onev\rangle|\psi_\ell\rangle
\end{align*}

Now consider a general state $|\psi\rangle$, which we write in an eigenbasis for $M$ as $|\psi\rangle=\sum_{\ell}\alpha_\ell|\psi_\ell\rangle$. Then $PU|\onev\rangle|\psi\rangle=\sum_{\ell}\alpha_\ell\left(\frac{1+\ell}{2}\right)|\psi_\ell\rangle$. After $t$ trials, we have:
\[(PU)^t|\onev\rangle|\psi\rangle=|\onev\rangle\otimes\sum_{\ell}\alpha_\ell\left(\frac{1+\ell}{2}\right)^t|\psi_\ell\rangle\]
Let $|\psi'\rangle$ be the state after discarding $|\onev\rangle$. Now
\begin{align*}|\enspace V|\psi\rangle-|\psi'\rangle\enspace |^2&=\left|\enspace|\psi_\ell\rangle-\sum_{\ell}\alpha_\ell \left(\frac{1+\ell}{2}\right)^t |\psi_\ell\rangle\enspace\right|^2\\
&=\left| -\sum_{\ell\neq 1}\alpha_\ell \left(\frac{1+\ell}{2}\right)^t |\psi_\ell\rangle\enspace\right|^2\\
&=\sum_{\ell\neq 1}|\alpha_\ell|^2 \left(\frac{1+\ell}{2}\right)^{2t}\\
&\leq \sum_{\ell\neq 1}|\alpha_\ell|^2 \left(1-\delta/2\right)^{2t}\\
&=(1-|\alpha_1|^2)\left(1-\delta/2\right)^{2t}\leq e^{-\lambda}
\end{align*}
This bound is negligible. This completes the proof of Theorem~\ref{thm:verification}.
\end{proof}

\fi

\ifllncs
\subsection{Hardness Assumptions}
We rely on two hardness assumptions in our inviant money scheme:  the \emph{path-finding assumpion} and the \emph{knowledge of path} assumption. Due to space constraints, we refer the readers to \ref{sec:hardness_assumptions} for the presentation on our hardness assumptions needed.

Informally speaking, the path-finding assumption states that, given some adversarially sampled $x$ in a set $X$, it is hard for any efficient adversary, given a random $x' \in X$ such that there exists some $\sigma$ such that $\sigma \left(x \right) = x'$, to find such a $\sigma$.  

The knowledge of path assumption can be thought of as a quantum analogue to the (classical) knowledge of exponent assumption.  We define two different versions of the knowledge of path assumption to account for the fact that some of our invariants could be invertible. 

\subsection{Security}
\label{sec:security_from_kop_proof_main}
\begin{theorem}Assuming the Path-Finding assumption (Assumption~\ref{assump:pathfinding}) and the Knowledge of Path Assumption (Assumption~\ref{assump:kop1}), the scheme above is secure quantum lightning. If the invariant is invertible, then assuming the Path-Finding assumption (Assumption~\ref{assump:pathfinding}), the Knowledge of Path Assumption for Invertible Invariants (Assumption~\ref{assump:kop2}), and the Inversion Inverting assumption (Assumption~\ref{assump:invertinvert}), the scheme above is secure quantum lightning.
\end{theorem}

We refer the readers to \ref{sec:security_from_kop_proof} for the proof on the above theorem.

\else
\subsection{Hardness Assumptions}
\label{sec:hardness_assumptions}

To get an intuition for security, we define assumptions which, together, imply the quantum money (even quantum lightning) scheme is secure. We first introduce some notation. Let $p\in[r]^k$ for some $k$. Given a starting element $x\in X$, we will interpret $p$ as a path from $x$, leading to an element $z=\sigma_{p_k}(\sigma_{p_{k-1}}(\cdots\sigma_{p_1}(x)\cdot ))$. We will therefore call $p$ a path from $x$ to $z$. We will say that $p$ is a path \emph{between} $x$ and $z$ if $p$ is a path from $x$ to $z$ or a path from $z$ to $x$.

\paragraph{Hardness of Path-finding.} This is an analog of discrete log, but for our setting. When thinking of $X$ as elliptic curves and the $\sigma$ as isogenies, path-finding is just the problem of computing isogenies between elliptic curves, which is presumably hard.

In our setting, the assumption says: it should be hard for any quantum algorithm, given points in the same orbit, to find a path between them.

\begin{assumption}\label{assump:pathfinding} Consider an adversary $\adv$ playing the following game: 
	\begin{itemize}
		\item The adversary outputs an $x\in X$.
		\item The challenger then computes a random $z\in O_x$. 
		\item The adversary wins if it can output a path $p$ between $x$ to $z$.
	\end{itemize}
	The \emph{path-finding assumption} is that, for all quantum polynomial-time adversaries $\adv$, the probability $\adv$ wins in the above game is negligible.
\end{assumption}
\begin{remark}Note that technically the challenger in the path-finding game might be inefficient, since it is required to sample a random $z$ and no such explicit procedure is provided by an invariant. However, the game can be made efficient using the statistical property that $\lambda_2(O_x)\leq 1-\delta$: it can simply do a random walk on $O_x$ to compute $z$. Such $z$ will be statistically close to uniform.
\end{remark}

\paragraph{Knowledge of Path.} This is an analog of the ``knowledge of exponent'' assumption due to Damg\r{a}rd~\cite{C:Damgaard91}. The knowledge of exponent assumption (KEA) says that, given $(g,g^a)$ for unknown $a$, it is hard to find $(h,h^a)$ without \emph{also} finding a $b$ such that $h=g^b$. Formalizing KEA classically is a bit subtle, as an adversary can always construct $(h,h^a)$ using some exponent $b$, but then simply forget it. Or it could encrypt $(g,g^a)$ under an FHE scheme, choose an encryption of a random $b$, and then homomorphically compute $(h,h^a)$ where $h=g^b$. Then it decrypts the result to get $(h,h^a)$, without ever explicitly writing down $b$.

In both the cases above, it is nevertheless trivial to figure out what $b$ is by looking at the algorithm. In the first example, the execution transcript will contain $b$ before it is erased. In the FHE example, the secret key is needed to decrypt $(h,h^a)$. Using the same secret key also allows decrypting $b$.

KEA says that it must be possible to find $b$ \emph{always}, for any algorithm. This is formalized by means of an extractor: given any algorithm $A$, there exists an extractor $E$ that is given the same inputs as $A$---importantly, including any random coins of $A$---that can find the corresponding $b$ whenever $A$ outputs an $(h,h^a)$.

\medskip

We now explore what such a knowledge assumption looks like quantumly. Of course, quantumly the KEA assumption is not interesting since there is no hardness over groups. But we can nevertheless try to see how we might formalize it. The immediate problem is that quantum algorithms can be probabilistic without having explicit random coins as input. Instead, the quantum algorithm could create a superposition and measure it. This measurement is unpredictable, and un-repeatable. Trying to run the adversary from the same initial inputs will give different answers every time.

This says that a knowledge assumption in the quantum setting must be conditioned on the \emph{output} the adversary produces, rather than the input. Note that in the classical setting, we cared about the output as well (since different outputs would have different $b$ values), but we could connect the inputs to outputs by making the adversary deterministic by considering the random coins as input. In the quantum setting, this is no longer the case.

But if we are only looking at the final state of the algorithm, we run into a different problem. Namely, we are back to the setting where the adversary could have known an explicit path at one point, and then discarded the information. This is potentially even easier quantumly than classically: a quantum algorithm could measure the path in the Fourier basis, which would have the effect of erasing the path.

Our solution is natural: we consider only adversaries $\adv$ that are unitary. To get the final output, we must measure the registers containing the output. However, these are the \emph{only} measurements performed and there is no measurement on the adversaries internal state. The extractor $E$ is then given the final state of $\adv$ (as well as the output). $E$ then must be able to find $b$ given this state. Note that the restriction to unitary $\adv$ is without loss of generality, as any $\adv$ can be made reversible by Stinespring dilation (basically, instead of measuring, we XOR the registers into some newly created registers).

The classical analog is to restrict to \emph{reversible} classical adversaries, and giving $E$ the \emph{final} internal state of $\adv$ in addition to $\adv$'s output. Again, restriction to reversible $\adv$ is without loss of generality. This notion is actually equivalent to the usual classical KEA: given the output and internal state of $\adv$, reverse $\adv$ to find the input, including the random coins. Then apply the traditional KEA extractor using these random coins.

We now adapt this idea to the path-finding setting, giving the following notion of ``Knowledge of Path''. We will define two variants, based on whether the invariant is invertible.

\begin{assumption}\label{assump:kop1}Let $\adv$ be a quantum polynomial time adversary $\adv$ that is unitary (in the above sense where there are no measurements except the output registers). Let $E$ be a quantum polynomial time extractor that is given $\adv$'s final output as well as its final state. Let $(x,z)$ be the $\adv$'s output, and $p\in[r]^*$ be the output of $E$. 
	
Let $B$ be the event that (1) $I(x)=I(z)$, but (2) $p$ is not a path between $x$ and $z$. In other words, $B$ is the event that $\adv$ outputs two elements with the same invariant but $E$ fails to find a path between them.

The \emph{knowledge of path assumption} is that, for any quantum polynomial time unitary $\adv$, there exists a quantum polynomial time $E$ such that $\Pr[B]$ is negligible.
\end{assumption}

\begin{remark}Note that Assumption~\ref{assump:kop1} implies that it is infeasible to find $x,z$ such that $x$ and $z$ are in \emph{different} orbits, but $I(x)=I(z)$. This is because in such case, there does not exist a path from $x$ to $z$ and therefore $E$ must fail.\end{remark}

\paragraph{Invariant Inversion.} Sometimes, the invariant $I$ may be invertible. This is not required (or forbidden) for constructing quantum money, but it makes it likely that Assumption~\ref{assump:kop1} is false, since inverting $I(x)$ would give an element $z$ that most likely has no known path to $x$ (and a path may not even exist). Therefore, we will need to explicitly model such an invertible invariant, and modify our assumptions appropriately. So we introduce a classical \emph{randomized} algorithm $I^{-1}:Y\rightarrow X$, with the guarantee that $\Pr[I(I^{-1}(y))=y]=1$ for all $y$. We will typically consider the random coins of $I^{-1}$ as an explicit input, writing $I^{-1}(y;t)$.

Since our adversary can find multiple elements with the same invariant by inverting, and presumably elements obtained by inverting have no known path, we need to model this in our extractor. We therefore allow the extractor $E$ to do one of two things:
\begin{itemize}
	\item It can find a path from $x$ to $z$, or
	\item It can find random coins $t$, a path $p$, with the requirement that $p$ connects either $x$ or $z$ to $I^{-1}(y;t)$, where $y=I(x)=I(z)$.
\end{itemize}
In other words, the assumptions requires that the only way to find $x,z$ with the same invariant is to either know a path between them, or at least one of $x,z$ was the result of using the inversion algorithm on $y$ and then following some path. We now give the assumption.

\begin{assumption}\label{assump:kop2}Let $\adv$ be a quantum polynomial time adversary $\adv$ that is unitary (in the above sense where there are no measurements except the output registers). Let $E$ be a quantum polynomial time extractor that is given $\adv$'s final output as well as its final state. Let $(x,z)$ be the $\adv$'s output, and $p\in[r]^*,t$ be the output of $E$. Let $B$ be the event that (1) $I(x)=I(z)$, but (2) $p$ is not a path between any two of $\{x,z,I^{-1}(y;t)\}$.
	
The \emph{knowledge of path assumption for invertible invariants} is that, for any quantum polynomial time unitary $\adv$, there exists a quantum polynomial time $E$ such that $\Pr[B]$ is negligible.
\end{assumption}

We will also require that $I^{-1}$ is hard to invert. Namely, that, given $x$, it should be infeasible to come up with coins $t$ such that $I^{-1}(I(x);t)=x$. This is required for our updated knowledge of path assumption to be meaningful.

\begin{assumption}\label{assump:invertinvert} Consider an adversary $\adv$ playing the following game: 
	\begin{itemize}
		\item The adversary outputs an $x\in X$. Let $y=I(x)$.
		\item The challenger then computes a random $z\in O_x$.
		\item The adversary wins if it can output $t$ and a path $p$ between $I^{-1}(y;t)$ and $z$.
	\end{itemize}
	The \emph{Inversion Inverting assumption} is that, for all quantum polynomial-time adversaries $\adv$, the probability $\adv$ wins in the above game is negligible.
\end{assumption}
As with Path Finding Assumption (Assumption~\ref{assump:pathfinding}), the challenger in the Inversion Inverting assumption can be made efficient by choosing $z$ as a random walk starting from $x$. Note that in the experiment, $x$ is only used to specify an orbit $O_x$; the adversary's goal only depends on $z$.

\subsection{Security Proof for Invariant Money}
\label{sec:security_from_kop_proof}
\begin{theorem}Assuming the Path-Finding assumption (Assumption~\ref{assump:pathfinding}) and the Knowledge of Path Assumption (Assumption~\ref{assump:kop1}), the scheme above is secure quantum lightning. If the invariant is invertible, then assuming the Path-Finding assumption (Assumption~\ref{assump:pathfinding}), the Knowledge of Path Assumption for Invertible Invariants (Assumption~\ref{assump:kop2}), and the Inversion Inverting assumption (Assumption~\ref{assump:invertinvert}), the scheme above is secure quantum lightning.
\end{theorem}
\begin{proof}We prove the case for invertible invariants, the case of non-invertible invariants been very similar and somewhat simpler.
	
Toward contradiction, let $\adv$ be a quantum lightning adversary with non-negligible advantage $\epsilon$. By running $\adv$ for up to $\lambda/\epsilon$ times, stopping at the first success, we can gaurantee that $\adv$ wins with advantage negligibly close to 1. For simplicity in the following proof, we will assume the success probability is actually 1, incurring only a negligible error. 

We then assume without loss of generality that $\adv$ is unitary, so that the output is a pure state $\sum_{x,z,s}\alpha_{x,z,s}|x,z,s\rangle$, where the first two registers are the supposed quantum money states, and the last register is auxiliary state left over by running $\adv$.

Since $\adv$ passes verification with probability 1, we can instead write the output of $\adv$ as: 
\begin{align*}
|\psi\rangle&:=\sum_{s,O_1,O_2:I(O_1)=I(O_2)}\beta_{O_1,O_2,s}|O_1\rangle|O_2\rangle|s\rangle\\
&=\sum_{s,O_1,O_2:I(O_1)=I(O_2)}\frac{\beta_{O_1,O_2,s}}{\sqrt{|O_1||O_2|}}\sum_{x\in O_1,z\in O_2}|x,z,s\rangle\\
&=\sum_{s,x,z:I(x)=I(z)}\frac{\beta_{O_x,O_z,s}}{\sqrt{|O_x||O_z|}}|x,z,s\rangle
\end{align*}
where $O_1,O_2$ range over orbits with the same invariant.

Let $E$ be the extractor guaranteed by applying Assumption~\ref{assump:kop2} to $\adv$. Now consider measuring the registers containing $x,z$, leaving the auxiliary state as
\[|\psi_{O_x,O_z}\rangle\propto \sum_s \beta_{O_x,O_z,s}|s\rangle\]
Then since $I(x)=I(y)$, we have that  $E(x,z,|\psi_{O_x,O_z}\rangle)$ outputs $p,t$ such that, with overwhelming probability over $x,z,p,t$, $p$ is a path between two of $\{x,z,I^{-1}(I(x);t)\}$. $B_1$ as the event $p$ connects $x,z$, $B_2$ as the event $p$ connects $x,I^{-1}(I(x);t)$, and $B_3$ as the event $p$ connects $z,I^{-1}(I(x);t)$. Let $q_1,q_2,q_3$ be the probabilities of the events $B_1,B_2,B_3$. Then $q_1+q_2+q_3\geq 1-\negl$.

Now we notice that for any $x'\in O_x$ and and $z'\in O_z$, the probability of obtaining $x',z'$ is identical to the probability of obtaining $x,z$, and the state $|\psi_{O_x,O_z}\rangle=|\psi_{O_{x'},O_{z'}}\rangle=|\psi_{O_1,O_2}\rangle$. In particular, we have that $E(x',z',|\psi_{O_x,O_z}\rangle)$ outputs a path $p$ between $\{x',z',I^{-1}(I(x');t)\}$ with non-negligible probability, for uniform $x'\in O_x,y'\in O(z)$. Moreover, the quantities $q_1,q_2,q_3$ are unchanged by using $x',z'$. We now use this to construct adversaries for path-finding (Assumption~\ref{assump:pathfinding}) and inversion inverting (Assumption~\ref{assump:invertinvert}).

Let $\adv_P$ be the following path-finding adversary:
\begin{itemize}
	\item Run $\adv$ and measure $x,z$, obtaining the state $|\psi_{O_x,O_z}\rangle$.
	\item Send $z$ to the challenger, obtaining a random $x'\in O_z$ in response.
	\item Run $E(x',z,|\psi_{O_x,O_z}\rangle)$ to get $p,t$. Output $p$.
\end{itemize}
Note that $\adv_P$ simulates exactly $\adv,E$ in the case where $x,z$ are in the same orbit, which in particular is implied by event $B_1$. Therefore, $\adv_P$ outputs a path between $x',z$ with probability at least $q_1$. By the assumed hardness of path-finding (Assumption~\ref{assump:pathfinding}), $q_1$ must be negligible.

Let $\adv_I$ be the following inversion-inverting adversary:
\begin{itemize}
	\item Run $\adv$ and measure $x,z$, obtaining the state $|\psi_{O_x,O_z}\rangle$.
	\item Send $x$ to the challenger, obtaining a random $x'\in O_x$ in response.
	\item Run $E(x',z,|\psi_{O_x,O_z}\rangle)$ to get $p,t$. Output $p,t$.
\end{itemize}
Note that $\adv_P$ simulates exactly $\adv,E$, and so $p$ connects $x'$ to $I^{-1}(I(x);t)$ with probability $q_2$. By the assumed hardness of inversion inverting (Assumption~\ref{assump:invertinvert}), $q_2$ must be negligible. By an identical argument exchanging the roles of $x$ and $z$, we must also have $q_3$ is negligible. This contradicts $q_1+q_2+q_3\geq 1-\negl$. This completes the security proof.\end{proof}

\fi

\bibliographystyle{alpha}
\bibliography{abbrev0,crypto,bib}

\appendix

\section{Additional Preliminaries}
\label{sec:appendix_prelim}

\ifllncs
\subsection{Lattice Basics}

We define a the \emph{Gram-Schmidt basis} and the \emph{Gram-Schmidt norm} based on the definitions of~\cite{STOC:GenPeiVai08}.

\begin{definition} \label{def:GS} \textbf{Gram-Schmidt Basis.} For any (ordered) set $\Sm = \left\{ \sv_{1}, ... , \sv_{n} \right\} \subset \R^{n}$ of linearly independent vectors, let $\tilde{S} = \left\{ \tilde{\sv}_{1}, ... , \tilde{\sv}_{n} \right\}$ denote its Gram-Schmidt orthogonalization, defined iteratively in the following way: $\tilde{\sv}_{1} = \sv_{1}$, and for each $i \in \left[2, n \right]$, $\tilde{\sv}_{i}$ is the component of $\sv_{i}$ orthogonal to $span \left(\sv_{1}, ... , \sv_{i - 1} \right)$.
\end{definition}

We next define the GapSVP problem, which is the worst-case lattice problem upon which the hardness of LWE is based.

\begin{definition}~\label{def:GapSVP}
Let $n$ be an integer and $\gamma = \gamma \left(n \right) \geq q$ a real number.  The $\left(n, \gamma \right)$-GapSVP problem is the problem of deciding, given a basis $\Bm$ of an $n$-dimensional lattice $\Lambda$ and a number $d$, whether or not $\lambda_{1} \left( \Lambda \right) \leq d$ or $\lambda_{1} \left( \Lambda \right) > \gamma d$.
\end{definition}

We emphasize that GapSVP is a ``promise problem'' and that a (successful) adversary can output whatever it wants when $d < \lambda_{1} \left( \Lambda\ \right) \leq \gamma d$.

\paragraph{Worst-Case Lattice Problems and LWE}

We next define the GapSVP problem, which is the worst-case lattice problem upon which the hardness of LWE is based.

\begin{definition}~\label{def:GapSVP}
Let $n$ be an integer and $\gamma = \gamma \left(n \right) \geq q$ a real number.  The $\left(n, \gamma \right)$-GapSVP problem is the problem of deciding, given a basis $\Bm$ of an $n$-dimensional lattice $\Lambda$ and a number $d$, whether or not $\lambda_{1} \left( \Lambda \right) \leq d$ or $\lambda_{1} \left( \Lambda \right) > \gamma d$.
\end{definition}

We emphasize that GapSVP is a ``promise problem'' and that a (successful) adversary can output whatever it wants when $d < \lambda_{1} \left( \Lambda\ \right) \leq \gamma d$.

In his seminal work~\cite{STOC:Regev05}, Regev showed that LWE was hard worst-case lattice problems for uniformly random choices of $\mathcal{D}_{\Am}$ and $\mathcal{D}_{\Rm}$ and when $\mathcal{\mathbb{\Psi}}$ was defined to be choosing each coordinate as a (small) discrete Gaussian.  To capture this, we cite a theorem from~\cite{STOC:BLPRS13} which itself is derived from Theorem 3.1 of~\cite{STOC:Regev05} and Theorem 3.1 of~\cite{STOC:Peikert09}.

\begin{theorem}~\label{thm:LWEHardness} \textbf{(Theorem 2.16,~\cite{STOC:BLPRS13})} 
Let $n$, $m$, and $q$ be positive integers and let $\mathcal{D}_{\mathbf{\Psi}_{\sigma}}^{m}$ be a distribution over $\Z^{m}$ where each entry is selected according to a discrete Gaussian distribution with noise rate parameter $\sigma > 2 \sqrt{n}$.  Then there exists a quantum reduction from worst-case $\left(n, \tilde{O} \left( n q / \sigma  \right) \right)-GapSVP$ to $\left(n, m, q, \mathcal{U} \left( \Z_{q}^{n} \right), \mathcal{U} \left( Z_{q}^{n} \right), \mathcal{D}_{\mathbf{\Psi}_{\sigma}}^{m} \right)$-LWE.  In addition, if $q \geq 2^{n/2}$, then there is also a classical reduction between those problems. 
\end{theorem}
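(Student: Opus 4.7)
The plan is to follow the Regev--Peikert--BLPRS reduction structure, which proceeds through the Discrete Gaussian Sampling (DGS) problem as an intermediate worst-case lattice problem. First I would reduce $(n, \tilde{O}(nq/\sigma))$-GapSVP to DGS at an appropriately small width parameter via Banaszczyk's transference inequalities: a GapSVP oracle lets one estimate the smoothing parameter of a lattice and its dual, and the ability to sample from a discrete Gaussian narrow enough to concentrate on very short lattice vectors then yields short vectors that resolve GapSVP with the claimed approximation factor.

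The heart of the reduction is an iterative \emph{Gaussian narrowing} procedure. Starting from discrete Gaussian samples on a lattice $\Lambda$ at a very large width (producible classically via e.g.\ an LLL-reduced basis), I would use the LWE oracle to produce discrete Gaussian samples at a smaller width. Each iteration shrinks the width by roughly a factor of $q/\sigma$ (up to polynomial factors in $n$), so after polynomially many iterations one obtains samples of width near the smoothing parameter; by Banaszczyk's tail bound these samples lie on short lattice vectors with overwhelming probability, solving DGS and hence GapSVP.

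The key quantum step is the only place where quantumness enters. Given (classical) discrete Gaussian samples of width $r$ over the \emph{dual} lattice $\Lambda^{*}$, I would coherently prepare the corresponding Gaussian superposition state, apply the quantum Fourier transform, and measure to obtain samples of width $\approx 1/r$ from a discrete Gaussian over the \emph{primal} lattice $\Lambda$. To close the iteration---going from narrow primal samples back to narrower dual samples---I would invoke the LWE oracle, which can be repackaged as a bounded-distance decoder on cosets of $\Lambda^{*}$; decoding each coset yields the required narrower dual Gaussian samples. The main obstacle will be bookkeeping the polynomial factors lost in each iteration so that the width strictly decreases and does not stagnate; this is precisely what forces the constraint $\sigma > 2\sqrt{n}$ (keeping us above the smoothing parameter of $\Z^{n}$) and ultimately produces the $\tilde{O}(nq/\sigma)$ approximation factor.

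For the classical reduction in the regime $q \geq 2^{n/2}$, I would appeal to Peikert's dequantization: the quantum primal-to-dual step can be replaced by a classical modulus-switching / modulus-reduction argument that works whenever the modulus is large enough to absorb the additional noise incurred, at which point the remainder of the iterative narrowing proceeds entirely classically and yields the stated reduction.
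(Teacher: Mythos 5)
First, a point of comparison: the paper does not prove this statement at all. It is imported verbatim as Theorem~2.16 of~\cite{STOC:BLPRS13} (itself a repackaging of Regev's quantum reduction and Peikert's classical one) and is used purely as a black box at the end of the $k$-LWE hardness chain. So there is no in-paper argument to measure your sketch against; what can be assessed is whether your outline faithfully reconstructs the known proof.

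At that level there is a genuine gap in your description of the quantum step, which is the heart of the reduction. You propose to take \emph{classical} discrete Gaussian samples of width $r$ over $\Lambda^{*}$, "coherently prepare the corresponding Gaussian superposition state," apply the QFT, and only afterwards invoke the LWE oracle as a bounded-distance decoder to "close the iteration." But preparing the coherent Gaussian superposition over a lattice at the relevant width is exactly as hard as the sampling problem you are trying to solve; classical samples do not give it to you, so as written the step is circular. In Regev's argument the order is the reverse: the classical samples from $D_{\Lambda,r}$ together with the LWE oracle are first combined into a BDD oracle on $\Lambda^{*}$ within distance roughly $\sigma/r$ (here $\sigma$ plays the role of $\alpha q$), and it is this BDD oracle that makes the coherent preparation possible --- one entangles a Gaussian-blurred point with its nearby dual-lattice vector and uses the decoder to \emph{uncompute} that vector, leaving the blurred lattice superposition to which the QFT is applied; measuring then yields a sample from $D_{\Lambda, r'}$ with $r' \approx r\sqrt{n}/\sigma$. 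This also corrects your bookkeeping: the per-iteration shrinkage factor is $\sigma/\sqrt{n}$ (whence the hypothesis $\sigma > 2\sqrt{n}$ and the final $\tilde{O}(nq/\sigma)$ factor), not $q/\sigma$. Finally, the classical reduction for $q \geq 2^{n/2}$ is not obtained by modulus switching --- modulus reduction is the separate contribution of~\cite{STOC:BLPRS13} for polynomial moduli --- but by Peikert's direct classical GapSVP-to-BDD argument, which avoids the quantum state-preparation step altogether by exploiting the exponential modulus.
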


We note that our presentation of this theorem differs quite a bit from the presentation in~\cite{STOC:BLPRS13} because they present LWE as a problem over the cycle (the additive group of reals modulo 1) for ease of exposition about the noise parameters, but it makes more sense when working in quantum money setting to present things over the integers.

\subsection{The $k$-LWE Problem}
With the LWE definition in place, we are ready to move to the actual $k$-LWE problem.  The $k$-LWE problem was first formally defined in~\cite{C:LPSS14} and used to build traitor-tracing schemes.  It extends the $k$-SIS assumption~\cite{PKC:BonFre11} which was used to build linearly homomorphic signatures.  Our definition below is essentially a parameterized version of the one in~\cite{C:LPSS14}.

\begin{definition}~\label{def:kLWE} \textbf{$k$-LWE Problem:}
Let $k$, $n$, $m$, and $p$ be integers, let $\mathcal{D}_{\Rm}$ be a distribution over $\Z_{q}^{n}$, and let $\mathcal{D}_{\mathbb{\Psi}}$ and $\mathcal{D}_{\Sm}$ be distributions over $\Z_{q}^{m}$.  Let $\Sm \in \Z_{q}^{k \times m}$ be a matrix where each \emph{row} is selected from $\mathcal{D}_{\Sm}$.  

Let $\Am \in \Z_{q}^{m \times n}$ be a matrix sampled uniformly from the set of matrices in $\Z_{q}^{m \times n}$ such that $\Sm \cdot \Am = 0 \mod q$, let $\rv \in \Z_{q}^{n}$ be a vector sampled from $\mathcal{D}_{\rv}$, and let $\ev \in \Z_{q}^{m}$ be a vector sampled from $\mathcal{D}_{\mathbb{\Psi}}$.  Let $\Cm \in \Z_{q}^{m \times \left(m - k \right)}$ be a basis for the set of vectors $\vv \in \Z_{q}^{m}$ such that $\Sm \cdot \vv = 0$, and let $\rv' \in \Z_{q}^{m - k}$ be a uniformly random vector.

The $\left(k, n, m, q, \mathcal{D}_{\Sm}, \mathcal{D}_{\rv}, \mathcal{D}_{\mathbb{\Psi}} \right)$-$k$-LWE problem is defined to be distinguishing between the following distributions: 
\[  \left( \Sm, \Am, \Cm, \Am \cdot \rv+ \ev \right) \text{ and } \left(\Sm, \Am,  \Cm, \Cm \cdot \rv'+ \ev \right) \] 
\end{definition}

We note that $k$-LWE is traditionally defined in a slightly different way:  usually the matrix $\Am$ is sampled before (or jointly with) $\Sm$ rather than after it.  We sample $\Sm$ first in our definition because we will need to handle very unusual (at least for cryptographic applications) distributions $\mathcal{D}_{\Sm}$.

\paragraph{Previous Results on Hardness of $k$-LWE}
The authors of~\cite{C:LPSS14} prove the following theorem about $k$-LWE, which does not apply to our setting as we will specify later.

\begin{theorem}~\label{thm:oldkLWE} \textbf{(Theorem 3,~\cite{C:LPSS14})}:
Let $n$, $m$, and $p$ be integers, and $\sigma$ and $\sigma'$ be rational numbers such that $\sigma \geq \Omega \left(n \right)$, $\sigma' \geq \Omega \left(n^{3} \sigma^{2} / \log n \right)$, $p \geq \Omega \left( \sigma' \sqrt{\log m} \right)$ is prime, and $m \geq \Omega \left( n \log p \right)$.  For instance, we could set $\sigma = \Theta \left(n \right)$, $\sigma' = \Theta \left(n^{5} / \log n \right)$, $p = \Theta \left( n^{5} \right)$, and $m = \Theta \left( n \log n \right)$.  Let $\mathcal{D}_{alpha}^{d}$ and $\mathcal{D}_{\alpha'}^{d}$ denote discrete Gaussian distributions over $\Z^{d}$ with parameter $\alpha$ and $\alpha'$, respectively.

Define the matrix 
\[
\tilde{\Sm} = \left[ \begin{array}{cc}
\sigma \cdot \mathbf{I}_{m + n} & 0 \\
0 & \sigma' \cdot \mathbf{I}_{n}
\end{array} \right]
\]
and let $\mathcal{D}_{\tilde{\Sm}}$ denote the discrete Gaussian distibution over $\Z^{m + 2n}$ with skew $\tilde{\Sm}$.

Then, for some constant $c$ such that $k = n / \left(c \log n \right)$, there exists a PPT reduction from $\left(n, m+1, p, \mathcal{U} \left( \Z_{q}^{n} \right), \mathcal{U} \left( \Z_{q}^{n} \right), \mathcal{D}_{\alpha} \right)$-LWE to $\left(k, n, m + 2n, p, \mathcal{D}_{\tilde{\Sm}}, \mathcal{U} \left( \Z_{q}^{n} \right), \mathcal{D}_{\alpha'} \right)$-k-LWE and advantage $\epsilon' = \Omega \left( \left( \epsilon - 2^{\Omega \left(n / \log n \right)} \right)^{3} - O \left(2^{-n} \right) \right)$.
\end{theorem}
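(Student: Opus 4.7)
The plan is to convert a plain LWE challenge into a $k$-LWE challenge by constructing around it a matrix $\Sm$ of short kernel vectors for an enlarged public matrix. Given an LWE instance $(\Am_0, \bv_0)$ with $\Am_0 \in \Z_p^{(m+1) \times n}$, I would extend $\Am_0$ by appending a specially-structured block of $2n-1$ rows to obtain $\Am \in \Z_p^{(m+2n) \times n}$, chosen so that the extended matrix admits a gadget-style trapdoor (in the spirit of~\cite{STOC:GenPeiVai08}) for sampling short Gaussian vectors in its left kernel. The skewed widths $\sigma$ on the first $m+n$ coordinates and $\sigma'$ on the last $n$ coordinates in $\tilde{\Sm}$ are picked precisely so that (a) this trapdoor sampling is possible, and (b) the sampled vectors are statistically close to the unconstrained discrete Gaussian $D_{\tilde{\Sm}}$, conditioned on orthogonality to $\Am$.

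Using the trapdoor I would sample $k$ vectors $\sv_1, \ldots, \sv_k$ from $D_{\tilde{\Sm}}$ conditioned on $\sv_i^T \Am \equiv 0 \bmod p$, stack them as $\Sm$, and take $\Cm$ to be a basis for the orthogonal complement of the rows of $\Sm$. The reduction extends $\bv_0$ to $\bv \in \Z_p^{m+2n}$ by computing the added entries honestly with respect to the block used to extend $\Am_0$ and folding in Gaussian noise of width $\alpha'$, and outputs $(\Sm, \Am, \Cm, \bv)$. If $\bv_0 = \Am_0 \rv + \ev_0$, then by construction $\bv = \Am \rv + \ev'$ for a Gaussian $\ev'$ of width $\alpha'$, which matches the first $k$-LWE distribution. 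If $\bv_0$ is uniform, one must argue that $\bv$ is statistically close to $\Cm \rv' + \ev'$ for uniform $\rv' \in \Z_p^{m+2n-k}$, which is the content of the non-trivial direction of the reduction.

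This second case is where the regularity/smoothing argument enters and is the main obstacle. One must show that, with overwhelming probability over $\Sm$ drawn as above, the sublattice $\{\vv \in \Z^{m+2n} : \Sm\vv \equiv 0 \bmod p\}$ has smoothing parameter well below $\alpha'$; this in turn is controlled by the smallest singular value of $\Sm$, which is why $\sigma' \geq \Omega(n^3 \sigma^2/\log n)$ is required (to ensure the last $n$ rows of $\Sm$ are ``generic enough'') and why $p \geq \Omega(\sigma' \sqrt{\log m})$ is imposed (so that reduction modulo $p$ does not destroy the Gaussian structure). Combined with a regularity/leftover-hash argument establishing that the distribution of $\Sm$ sampled with the trapdoor and conditioned on $\Sm\Am = 0$ is $2^{-\Omega(n/\log n)}$-close to unconstrained $D_{\tilde{\Sm}}^k$ (using that $k < n/\log n$ keeps ample entropy in each row), the noise $\ev'$ of width $\alpha'$ smooths the coset of $\bv$ modulo $\Cm$ to uniform. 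The cubic factor in the advantage reflects the standard overhead of hybridizing row-by-row over $\Sm$ together with a search-to-decision amplification step, each invocation of which contributes an additive regularity error that accumulates into the $2^{-\Omega(n/\log n)}$ and $2^{-n}$ terms in the final bound.
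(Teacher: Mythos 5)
You should know at the outset that the paper does not prove this statement at all: it is reproduced verbatim as Theorem~3 of~\cite{C:LPSS14}, cited purely for context, and the surrounding discussion explicitly notes that it \emph{cannot} be applied in the paper's setting because the short-vector distribution arising from~\cite{KLS22} is not the skewed Gaussian $\mathcal{D}_{\tilde{\Sm}}$. The $k$-LWE hardness the paper actually needs is established separately in Lemma~\ref{lem:constkLWE}, by a much more elementary route that has nothing in common with your outline: for a \emph{constant} number $k$ of short vectors drawn from an \emph{arbitrary} $B$-bounded distribution, the reduction picks a full-rank $k\times k$ submatrix of $\Sm$, builds a kernel basis $\Um$ from its adjugate, and drowns the term $\Um\cdot\ev$ under fresh noise that is wider by a superpolynomial factor $f(n)$ --- no trapdoor sampling, no regularity or smoothing argument, and no modulus constraints beyond boundedness. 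So there is no ``paper proof'' to compare yours against; the relevant benchmark is the original argument in~\cite{C:LPSS14}.

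Measured against that benchmark, your sketch does capture the broad shape of the LPSS-style reduction (extend the public matrix, use a trapdoor to sample $\Sm$ as Gaussian vectors orthogonal to $\Am$ mod $p$, and in the uniform case smooth the constructed sample onto the image of $\Cm$ plus noise), but the steps that carry essentially all of the difficulty are asserted rather than argued. The statistical closeness of ``$\Sm$ sampled with a trapdoor, conditioned on $\Sm\cdot\Am=0$'' to the unconstrained $\mathcal{D}_{\tilde{\Sm}}^{k}$, and the claim that a uniform $\bv_0$ yields something close to $\Cm\cdot\rv'+\ev'$, are precisely the theorems to which~\cite{C:LPSS14} devote the bulk of their technical work, and they depend delicately on the skew structure of $\tilde{\Sm}$ (which is why the theorem does not generalize to other short-vector distributions --- the very obstruction the present paper had to work around). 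Your explanation of the cubic advantage loss as ``row-by-row hybridization plus a search-to-decision amplification step'' also does not match the known proof: there is no search-to-decision reduction there, and the polynomial loss comes from the distribution-switching argument (handled via Rényi-divergence-type bounds for the decision problem) rather than from amplification. As written, the proposal is a plan for re-deriving the cited theorem, not a proof of it.
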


Informally speaking, this theorem shows that $k$-LWE is hard for a fixed lattice dimension $n$ with only a minor (polynomial) blowup in parameters for a short vector distribution of a certain type of (nicely) skewed Gaussian.  Unfortunately, this distribution of short vectors is not related to the distribution implied by the~\cite{KLS22} scheme, so we cannot directly apply this theorem.  However, we will prove a different result than can directly be applied to the~\cite{KLS22} scheme in the body of the paper.

\else

\fi

\section{On the Hardness of $k$-LWE}
\label{sec:klwe}

In this section we prove a number of hardness results on $k$-LWE, showing that the $k$-LWE instance implied by the~\cite{KLS22} scheme is hard, assuming standard lattice assumptions.  In addition, we provide additional evidence through proofs that more general instances of $k$-LWE are likely to be hard, which seemingly indicates that it would be difficult to ``tweak'' the~\cite{KLS22} construction by altering the distributions to gain security.

We start by presenting some useful LWE lemmas from previous work.

\subsection{Helpful LWE Lemmas}
In this section, we add in some useful lemmas that allow us to change the distribution of the key in the LWE problem ($\mathcal{D}_{\rv}$) without affecting the hardness of the underlying problem too much.  Looking ahead, we will need to use various versions of the powerful modulus switching lemma from~\cite{STOC:BLPRS13}, which requires LWE instances with ``low-norm'' keys.

We start with a simple folklore lemma, which informally states that LWE with a uniformly sampled random key is at least as hard as LWE with any other secret key distribution.

\begin{lemma}~\label{lem:unifkey}
Let $n$, $m$, and $q$ be integers, let $\mathcal{D}_{\Am}$ and $\mathcal{D}_{\rv}$ be distributions over $\Z_{q}^{n}$, and let $\mathcal{D}_{\mathbb{\Psi}}$ be a distribution over $\Z_{q}^{m}$.  Any adversary that can solve the $\left(n, m, q, \mathcal{D}_{\Am}, \mathcal{U} \left( \Z_{q}^{n} \right), \mathcal{D}_{\mathbb{\Psi}} \right)$-LWE problem with advantage $\epsilon$ can be used to solve the $\left(n, m, q, \mathcal{D}_{\Am}, \mathcal{D}_{\rv}, \mathcal{D}_{\mathbb{\Psi}} \right)$-LWE problem with advantage $\epsilon$.
\end{lemma}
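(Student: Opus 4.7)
The plan is a textbook rerandomization reduction: given an adversary $\adv$ that solves LWE with uniform secret, build an adversary $\adv'$ for LWE with secret distribution $\mathcal{D}_{\rv}$ by adding an independently sampled uniform secret to the challenge. Concretely, on input a challenge $(\Am, \bv^*)$ drawn from one of the two $\left(n, m, q, \mathcal{D}_{\Am}, \mathcal{D}_{\rv}, \mathcal{D}_{\mathbb{\Psi}} \right)$-LWE distributions, $\adv'$ samples $\rv' \getsr \mathcal{U}(\Z_q^n)$ independently of everything, computes $\bv' = \bv^* + \Am \cdot \rv' \bmod q$, and forwards $(\Am, \bv')$ to $\adv$, outputting whatever $\adv$ returns.

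The correctness of the reduction splits into two cases. If $\bv^* = \Am \cdot \rv + \ev$ with $\rv \sim \mathcal{D}_{\rv}$ and $\ev \sim \mathcal{D}_{\mathbb{\Psi}}$, then $\bv' = \Am \cdot (\rv + \rv') + \ev$; since $\rv'$ is uniform over $\Z_q^n$ and independent of $\rv$, the sum $\rv + \rv'$ is uniform over $\Z_q^n$ and independent of $\Am$, so $(\Am, \bv')$ is distributed exactly as an LWE sample with uniform secret (and the same $\mathcal{D}_{\Am}, \mathcal{D}_{\mathbb{\Psi}}$). If on the other hand $\bv^*$ is uniform (and independent of $\Am$), then $\bv' = \bv^* + \Am \cdot \rv'$ is also uniform and independent of $\Am$, matching the uniform case of the $\left(n, m, q, \mathcal{D}_{\Am}, \mathcal{U}(\Z_q^n), \mathcal{D}_{\mathbb{\Psi}} \right)$-LWE distribution. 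Thus $\adv'$ perfectly simulates the uniform-key LWE challenge for $\adv$ in both cases, and consequently inherits $\adv$'s distinguishing advantage $\epsilon$ exactly.

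There is essentially no obstacle here: the only thing to verify is that the map $\rv \mapsto \rv + \rv'$ is a bijection on $\Z_q^n$ (which it is, since $\rv'$ is fixed and $\Z_q^n$ is a group), so adding a uniform $\rv'$ turns any distribution on $\rv$ into the uniform distribution. I will write the argument as a couple of displayed equalities tracking the distribution of $(\Am, \bv')$ conditioned on each of the two possibilities for $\bv^*$, and then conclude that the advantage of $\adv'$ equals that of $\adv$.
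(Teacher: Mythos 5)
Your reduction is exactly the one the paper uses: sample a uniform $\rv'$, add $\Am\cdot\rv'$ to the challenge vector, and observe that this makes the secret uniform in the real case while preserving uniformity in the random case. The proposal is correct and matches the paper's argument, with the added (harmless) detail of spelling out the bijection $\rv\mapsto\rv+\rv'$.
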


\begin{proof}
We give an abbreviated proof because this result is simple and well-known.  Given an LWE challenge tuple $\left(\Am, \tv \right)$ where $\tv = \Am \cdot \rv + \ev$ and $\rv \leftarrow \mathcal{D}_{\rv}$ or $\tv$ is random, sample $\rv' \in \Z_{q}^{n}$ uniformly at random and add $\Am \cdot \rv'$ to $\tv$.  This gives the correct LWE challenge distribution if $\tv = \Am \cdot \rv + \ev$ and is still uniformly random if $\tv$ was uniformly random.
\end{proof}

We will also use a lemma from~\cite{STOC:BLPRS13} that says, informally speaking, that LWE with certain parameters where the key is drawn from the noise distribution is at least as hard as when the key is uniform (modulo some small parameter losses).  We state this below.

\begin{lemma}~\label{lem:noisekey} \textbf{(Lemma 2.12,~\cite{STOC:BLPRS13})}
Let $n$, $m$, and $q$ be positive integers with $q \geq 25$.  Let $m' = m - \left(16 n + 4 \ln \ln q \right)$.  Consider some parameter $s \geq \sqrt{\ln \left( 2n \left( 1 + 1/\epsilon \right) / \pi \right)}$.  Let $\epsilon < \frac{1}{2}$ and $\sigma, \sigma' > 0$ be real numbers such that $\sigma'  \geq \sqrt{ \sigma^{2} + s^{2}}$.  Finally, let $\mathcal{D}_{\mathbf{\Psi}_{\sigma}}$ be a discrete Gaussian distribution with parameter $\sigma$.

Any adversary that can solve the $\left(n, m', q, \mathcal{U} \left( \Z_{q}^{n} \right), \mathcal{D}_{\mathbf{\Psi}_{\sigma'}}, \mathcal{D}_{\mathbf{\Psi}_{\sigma}} \right)$-LWE problem with advantage $\epsilon'$ can be used to solve the $\left(n, m, q, \mathcal{U} \left( \Z_{q}^{n} \right), \mathcal{U} \left( \Z_{q}^{n} \right), \mathcal{D}_{\mathbf{\Psi}_{\sigma}} \right)$-LWE problem with advantage $\left(\epsilon' - 8 \epsilon \right) / 4$.  In particular, assuming $\sigma > s$, we can take $s = \sigma$ and set $\sigma' = \sqrt{2} \sigma$.
\end{lemma}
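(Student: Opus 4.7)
The plan is to adapt the Applebaum--Cash--Peikert--Sahai (ACPS) reduction, which shows that LWE with a short secret is no easier than LWE with a uniform secret. The input is $m$ fresh LWE samples $(\Am, \Am \rv + \ev)$ with uniform $\rv$, and the goal is to output $m'$ samples with secret distributed as a discrete Gaussian of width $\sigma'$. I would split the input into two blocks: a \emph{setup block} of $\bar m := 16n + 4 \ln \ln q$ samples and a \emph{reduction block} of $m'$ samples.

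First I would handle the setup block. Call it $(\Am_0, \bv_0)$ where $\bv_0 = \Am_0 \rv + \ev_0$ with $\ev_0$ a discrete Gaussian of width $\sigma$. The aim is to locate (possibly via an integer recombination of rows) an invertible $n \times n$ sub-system $(\Um, \bv_0')$ of $(\Am_0, \bv_0)$ so that $\bv_0' = \Um \rv + \ev_0'$ with $\ev_0'$ still Gaussian of width $\sigma$. The bound $\bar m \geq 16n + 4 \ln \ln q$ is calibrated so that, for any modulus $q \geq 25$, with overwhelming probability $\Am_0$ contains such an invertible submatrix mod $q$; the $\ln \ln q$ term is there to handle the case of highly composite $q$ by a union bound over prime divisors of $q$ and a matrix-regularity argument on random matrices over $\Z_{q/p}$. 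This is the main technical step to be formalized --- either by invoking a pre-existing matrix regularity lemma over arbitrary $\Z_q$ or by reproducing the hybrid over prime factors of $q$ explicitly.

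Next I would apply the ACPS algebraic transformation to the reduction block. For each sample $(\av_i, b_i)$ with $b_i = \av_i^T \rv + e_i$, set $\av_i' := -\Um^{-T} \av_i$ and $b_i' := b_i - \av_i^T \Um^{-1} \bv_0'$. Substituting $\rv = \Um^{-1}(\bv_0' - \ev_0')$ produces the identity $b_i' = (\av_i')^T \ev_0' + e_i$, so $(\av_i', b_i')$ is a fresh LWE sample with secret $\ev_0'$ and noise $e_i \sim \mathcal{D}_{\mathbf{\Psi}_\sigma}$; uniformity of $\av_i'$ is preserved since $\Um$ is invertible mod $q$, and in the random case a uniform $b_i$ goes to a uniform $b_i'$. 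To widen the secret from $\sigma$ to the target $\sigma' \geq \sqrt{\sigma^2 + s^2}$, I would sample an independent discrete Gaussian $\fv \in \Z^n$ of width $s$ and replace $b_i'$ with $b_i' + (\av_i')^T \fv$, yielding samples with secret $\ev_0' + \fv$. A convolution/smoothing calculation --- which is exactly what the hypothesis $s \geq \sqrt{\ln(2n(1+1/\epsilon)/\pi)}$ enables via the standard smoothing-parameter lemma --- shows that $\ev_0' + \fv$ is within statistical distance $O(\epsilon)$ of a discrete Gaussian of width $\sqrt{\sigma^2 + s^2}$, and hence of $\mathcal{D}_{\mathbf{\Psi}_{\sigma'}}$ after a monotonicity-in-width step.

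Finally I would account for the advantage loss. The failure probability that the setup block does not yield an invertible $\Um$ contributes a small additive term; the convolution approximation contributes up to $O(\epsilon)$; and passing from a distinguishing attack on the transformed instance back to the original uniform-secret instance costs the usual constant-factor overhead (specifically, the $8\epsilon$ and the denominator $4$ track, respectively, the total statistical slack from the smoothing plus regularity steps, and the standard distinguisher-to-advantage conversion in the presence of these imperfect simulations). Combining these bounds gives the stated advantage $(\epsilon' - 8\epsilon)/4$. The hard part will be the matrix-regularity step for arbitrary composite $q$; modulo that ingredient, the remainder is bookkeeping around the ACPS rewrite and a single invocation of the smoothing-parameter lemma.
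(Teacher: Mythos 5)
There is nothing in the paper to compare against here: the statement is imported verbatim as Lemma~2.12 of \cite{STOC:BLPRS13} and is used as a black box in the modulus-switching chain of Appendix~\ref{sec:klwe}, so the paper offers no proof of its own. Your sketch is, in substance, the proof from that reference (which extends the Applebaum--Cash--Peikert--Sahai secret-switching technique to arbitrary modulus): spend the $16n+4\ln\ln q$ discarded samples to extract an invertible $n\times n$ subsystem $\Um$ of $(\Am_0,\bv_0)$, rewrite the remaining samples so that the selected error coordinates become the secret via $\av_i'=-\Um^{-T}\av_i$, $b_i'=b_i-\av_i^T\Um^{-1}\bv_0'$, and then widen the secret by adding $(\av_i')^T\fv$ for an independent Gaussian $\fv$, with $s$ above the smoothing parameter so the convolution is within $O(\epsilon)$ of a discrete Gaussian. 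Your algebra in the middle step is correct, and the sub-claim you flag as the hard part (that $16n+4\ln\ln q$ uniform vectors contain a subset invertible modulo every prime dividing $q$, for every $q\ge 25$) is exactly the separate counting claim proved in \cite{STOC:BLPRS13} by a greedy row-selection argument with a Mertens-type bound over the prime divisors of $q$; plain selection suffices there, and note that integer recombination of rows would distort the error distribution of the new secret, so you should drop that hedge.

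Two calibration points. First, for worst-case highly composite $q$ the submatrix extraction does \emph{not} succeed with overwhelming probability within this sample budget --- the last row succeeds per attempt only with probability about $\varphi(q)/q=\Omega(1/\ln\ln q)$, so the overall success probability is in general only a constant; this, rather than a generic ``distinguisher-to-advantage conversion,'' is the real source of the multiplicative $1/4$ in $(\epsilon'-8\epsilon)/4$, while the $8\epsilon$ comes from the smoothing slack. Second, the closing ``monotonicity-in-width'' step is not right as stated, since $\mathcal{D}_{\mathbf{\Psi}_{\sqrt{\sigma^2+s^2}}}$ is not the same distribution as $\mathcal{D}_{\mathbf{\Psi}_{\sigma'}}$ for larger $\sigma'$; the clean fix is to add noise of parameter exactly $\sqrt{\sigma'^2-\sigma^2}\ge s$ so the convolution lands on $\sigma'$ itself. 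With those adjustments your outline matches the cited proof.
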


Note that the two versions of LWE in this reduction have a (slightly) different number of samples ($m$ and $m'$), but the dimension of the LWE problem ($n$) is the same.

\subsection{Modulus Switching}
In their seminal work~\cite{STOC:BLPRS13}, Brakerski \emph{et al.} use a technique called \emph{modulus switching} to improve known reductions from the GapSVP problem to the LWE problem.  Informally speaking, this modulus switching technique allows those authors to show that LWE in ``small modulus'' and ``high dimension'' is roughly equivalent in hardness to LWE in ``big modulus'' and ``low dimension''.  We state some of their results here since we would eventually like to use the fact that one-dimensional LWE with exponential modulus is hard, which is known from results in~\cite{STOC:BLPRS13}.

Rather than present a single instance of their main theorem, we go through two of their corollaries to make it easier to follow.  We emphasize that our presentation looks very different from theirs because they consider LWE over the unit cycle $\mathbb{T}$ and we work over the integers.

\begin{lemma}~\label{lem:modreduction} \textbf{(Corollary 3.2,~\cite{STOC:BLPRS13})}
Let $n$, $m$, $q$, and $q'$ be positive integers with $q' > q$, and consider some $\left(B, \delta \right)$-bounded distribution $\tilde{\mathcal{D}}$ over $\Z^{n}$.  Let $\epsilon \in \left(0, \frac{1}{2} \right)$ be a parameter and let $\sigma, \sigma > 0$ be real numbers.  Finally, let 
\[
\sigma' \geq \sqrt{ \left( \sigma \frac{q'}{q} \right)^{2} + \left(4 / \pi \right) \ln \left( 2n \left(1 + 1/ \epsilon \right) \right) \cdot B^{2} }
\]
Then there is an efficient reduction from $\left(n, m, q, \mathcal{U} \left( \Z_{q}^{n} \right), \tilde{\mathcal{D}}, \mathcal{D}_{\mathbf{\Psi}_{\sigma}} \right)$-LWE to $\left(n, m, q', \mathcal{U} \left( \Z_{q}^{n} \right), \tilde{\mathcal{D}}, \mathcal{D}_{\mathbf{\Psi}_{\sigma'}} \right)$-LWE that reduces the advantage by at most $\sigma + 14 \epsilon m$. 
\end{lemma}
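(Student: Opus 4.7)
The plan is to give an efficient reduction that, on input an LWE sample $(\Am,\tv)$ over $\Z_q$, produces $(\Am',\tv')$ over $\Z_{q'}$ by scaling and randomized rounding. Concretely, set $\Am' = \lfloor (q'/q)\,\Am \rceil \bmod q'$ coordinate-wise, and set $\tv' = \lfloor (q'/q)\,\tv \rceil + \fv \bmod q'$, where $\fv$ is an independently sampled discrete Gaussian over $\Z^m$ whose per-coordinate width is tuned so that adding $\fv$ floods the rounding artifacts into a genuine discrete Gaussian. The secret $\rv$ is unchanged. Efficiency is immediate; the whole content is in the distributional analysis.

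For a real LWE input, writing $\tv = \Am \rv + \ev \bmod q$ and lifting to $\Z$ before reducing mod $q'$, one can expand $\tv' = \Am' \rv + \ev' \bmod q'$ with $\ev' = (q'/q)\ev \,-\, \Rm\,\rv \,+\, \rhom \,+\, \fv$, where $\Rm \in [-1/2,1/2]^{m\times n}$ captures the coordinate-wise rounding error of $\Am$ under lifting and $\rhom \in [-1/2,1/2]^m$ the rounding error of $\tv$. The dominant term $(q'/q)\ev$ is a discrete Gaussian of width $\sigma q'/q$; the additive perturbation $-\Rm\rv + \rhom$ has each coordinate bounded in terms of $B$ with probability at least $1-\delta$ by the $(B,\delta)$-boundedness of $\tilde{\Ds}$. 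The stated lower bound on $\sigma'$, namely $(\sigma')^2 \geq (\sigma q'/q)^2 + (4/\pi)\ln(2n(1+1/\epsilon)) B^2$, is precisely the threshold at which Gaussian convolution (via the smoothing parameter $\eta_\epsilon$) makes the combined noise $\ev'$ statistically $\epsilon$-close per coordinate to $\mathcal{D}_{\mathbf{\Psi}_{\sigma'}}$. Summing over the $m$ coordinates yields at most $14\epsilon m$ total statistical distance, with the constant $14$ absorbing several factors of $2$ coming from smoothing and from Gaussian tail estimates used in bounding $\Rm\rv$, and the additive loss corresponding to the typo $\sigma$ in the statement should be $\delta$, accounting for the event that some coordinate of $\rv$ exceeds $B$.

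For a uniform input $\tv$, the lift $\lfloor (q'/q)\tv \rceil$ is supported on a sublattice of $\Z_{q'}^m$ whose fundamental cell has side at most roughly $q'/q$; adding $\fv$ of width exceeding $\eta_\epsilon$ for this sublattice, which follows automatically from the same lower bound on $\sigma'$, fills out the lattice to yield a distribution statistically $O(\epsilon m)$-close to uniform on $\Z_{q'}^m$. Thus uniform maps (essentially) to uniform, and LWE maps (essentially) to LWE, with the same bounded loss on each side; combining yields the claimed overall advantage reduction.

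The main obstacle is the careful noise accounting, in which one must separately track statistical distances arising from (i) the scaled Gaussian $(q'/q)\ev$ living on a shifted/scaled lattice rather than on $\Z$, (ii) the additive bounded perturbation $-\Rm \rv + \rhom$ that depends nontrivially on both $\Am$ and $\rv$, and (iii) the smoothing term $\fv$ needed to erase both structural artifacts at once. The key technical ingredient is a standard Gaussian convolution/smoothing lemma that quantifies exactly how large the width of $\fv$ must be, as a function of the boundedness parameter $B$ and the per-coordinate target $\epsilon$, to yield the required closeness; the lemma is applied per coordinate and union-bounded across the $m$ samples. Notably the reduction is oblivious to the particular secret distribution $\tilde{\Ds}$ beyond its boundedness, which is what makes this lemma a flexible black box in later applications such as modulus switching into regimes where one-dimensional LWE hardness from GapSVP can be invoked.
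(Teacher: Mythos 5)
This lemma is not proved in the paper at all: it is imported verbatim (as Corollary 3.2 of~\cite{STOC:BLPRS13}) and used as a black box in the $k$-LWE hardness chain, so there is no in-paper proof to compare against; your attempt must be judged against the actual BLPRS13 argument, and there it has two genuine gaps. First, the noise analysis. With \emph{deterministic} rounding, the perturbation $-\Rm\rv+\rhom$ is a fixed bounded vector (given the instance and secret), and adding an independent Gaussian $\fv$ of width $w$ produces a Gaussian \emph{shifted} by that vector; the per-coordinate statistical distance to the unshifted target is of order $\lvert\text{shift}\rvert/w$, not exponentially small. To reach distance $\epsilon$ per coordinate you would need $w\gtrsim B\sqrt{n}/\epsilon$, which is far larger than the stated $\sqrt{(4/\pi)\ln(2n(1+1/\epsilon))}\cdot B$. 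The $\sqrt{\log}$ dependence in the lemma is not a ``flooding'' phenomenon: in BLPRS13 the rounding itself is \emph{randomized}, by sampling from a discrete Gaussian over the relevant lattice coset at width tied to the smoothing parameter $\eta_\epsilon$, so that the rounding error is (conditionally) Gaussian-distributed and a convolution/smoothing lemma applies to the \emph{distribution} of the error, not to a worst-case shift. Invoking ``Gaussian convolution via $\eta_\epsilon$'' on a fixed bounded shift, as you do, is exactly the step that does not go through, and it is the key idea your sketch is missing.

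Second, the uniform case is not handled. When $q'>q$ (the regime of the lemma, and the regime in which the paper applies it, with the target modulus much larger than the source), $\Am'=\lfloor (q'/q)\Am\rceil \bmod q'$ is supported on only $q^{mn}$ points of $\Z_{q'}^{m\times n}$, concentrated on a grid of spacing $q'/q$, and your reduction adds no randomness to $\Am'$ at all, so the image of the uniform distribution is statistically far from uniform over the target domain. Even for $\tv'$, the flooding noise $\fv$ you add has width $\approx B\sqrt{\ln(2n(1+1/\epsilon))}$, which is independent of $q'/q$ and therefore cannot smooth a grid of spacing $q'/q$ unless $q'/q$ happens to be comparably small; your claim that the required smoothing ``follows automatically from the same lower bound on $\sigma'$'' is false in general. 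The BLPRS13 machinery avoids both problems simultaneously precisely because the randomized Gaussian rounding is chosen at (or above) the smoothing parameter of the appropriate lattice, which both near-uniformizes the mapped samples and renders the induced error Gaussian. (Your observation that the advantage loss ``$\sigma+14\epsilon m$'' should read ``$\delta+14\epsilon m$'' is correct and worth keeping, but it does not repair the reduction itself.)
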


Informally speaking, this lemma gives us a way to increase the modulus of an LWE instance while keeping the ``gap between the noise level and the modulus'' almost the same.  This is a rather counterintuitive result, and even moreso when you consider the fact that it works for essentially arbitrary distributions on LWE secrets.  Importantly, $q$ and $q'$ can be (essentially) arbitrary as long as $q' > q$, so we can use this lemma to help us prove LWE hardness for ``arbitrary'' choices of $q'$ in conjunction with other lemmas that require certain properties of $q'$.

We next present another modulus switching lemma that lets us go from ``high modulus, low dimension'' to ``normal modulus, normal dimension'' instances of LWE.  Once again, note that this reduction approximately preserves the gap between the noise rate and the modulus.

\begin{lemma}~\label{lem:modexp} \textbf{(Corollary 3.4,~\cite{STOC:BLPRS13})}
Consider any positive integers $n$, $m$, $q$, and $k$ such that $k$ divides $n$, real numbers $\sigma, \sigma' > 0$, a parameter $\epsilon \in \left(0, \frac{1}{2} \right)$, and some $\left(B, \delta \right)$-bounded distribution $\tilde{\mathcal{D}}$.  In addition, let
\[
\sigma' \geq \sqrt{ \left( \sigma q^{k - 1} \right)^{2} + \left(4 / \pi \right) \ln \left( 2n \left(1 + 1/ \epsilon \right) \right) \cdot \left(B q^{k - 1}\right)^{2} }
\]
and define $\Gm = \mathbf{I}_{n/k} \otimes \left(1, q, q^{2}, ... , q^{k - 1} \right)^{T}$.

Then there is an efficient reduction from $\left(n, m, q, \mathcal{U} \left( \Z_{q}^{n} \right), \tilde{\mathcal{D}}, \mathcal{D}_{\mathbf{\Psi}_{\sigma}} \right)$-LWE to $\left(n/k, m, q^{k}, \mathcal{U} \left( \Z_{q}^{n} \right), \mathbf{G} \tilde{\mathcal{D}}, \mathcal{D}_{\mathbf{\Psi}_{\sigma'}} \right)$-LWE that reduces the advantage by at most $\delta + 14 \epsilon m$.
\end{lemma}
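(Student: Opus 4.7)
The plan is to adapt the modulus-switching argument of~\cite{STOC:BLPRS13}: use the gadget matrix $\Gm$ to identify structured secrets in dimension $n$ modulo $q$ with arbitrary secrets in dimension $n/k$ modulo $q^k$, and then invoke a Gaussian smoothing step to replace the resulting non-Gaussian noise distribution with a clean discrete Gaussian of the target width.

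First, I would set up the sample-transformation map. Given an input sample $(\av, b = \langle \av, \sv\rangle + e \bmod q)$ where $\av$ is uniform in $\Z_q^n$, I would partition $\av$ into $n/k$ contiguous blocks of length $k$ and ``pack'' each block into a single coordinate of a new vector $\av' \in \Z_{q^k}^{n/k}$ via the base-$q$ combination $a'_j = \sum_{i=0}^{k-1} q^i a_{(j-1)k + i + 1}$. Independence and uniformity of the $a_\ell$'s make $\av'$ uniform in $\Z_{q^k}^{n/k}$. The key algebraic identity $\langle \av, \Gm \tilde{\sv}\rangle \equiv \langle \av', \tilde{\sv}\rangle \pmod{q^k}$, valid whenever no modular wraparound occurs, lets us treat a structured secret $\sv = \Gm\tilde{\sv}$ in $\Z^n$ as an unstructured secret $\tilde{\sv} \in \Z^{n/k}$. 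The $(B,\delta)$-boundedness of $\tilde{\mathcal{D}}$ ensures that wraparound happens only with probability at most $\delta$, which will contribute the $\delta$ term in the final advantage loss.

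Next I would handle the noise. Lifting $b$ to $\Z_{q^k}$ by scaling naturally multiplies the noise $e$ by $q^{k-1}$, giving a term of width $\sigma q^{k-1}$; this, however, is not a fresh discrete Gaussian and it remains correlated across the $k$ coordinates of each block. To fix this, I would add an independent fresh Gaussian of width roughly $\sqrt{(4/\pi)\ln(2n(1 + 1/\epsilon))}\cdot B q^{k-1}$ and invoke the standard noise-smoothing lemma (Lemma 2.3 of~\cite{STOC:BLPRS13}) to show that, conditioned on the secret having norm at most $B$, the convolved distribution is within statistical distance $2\epsilon$ of the discrete Gaussian $\mathcal{D}_{\mathbf{\Psi}_{\sigma'}}$ whose width matches the formula in the statement. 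A hybrid over the $m$ samples accumulates a total statistical loss of at most $14\epsilon m$; combined with the $\delta$ contributed by boundedness failure, this matches the claimed bound $\delta + 14\epsilon m$.

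The main obstacle is the noise-smoothing step, which is where all of the parameter slack is absorbed. The added Gaussian must simultaneously (i) absorb the deterministic base-$q$ structure introduced by lifting and (ii) drown out the correlation between the lifted error and the entries of the secret. Controlling (ii) is what forces the smoothing width to scale with $B$ (and not merely with $\sigma$) and requires a Fourier-analytic argument using the smoothing parameter of $\Z$, which is precisely where the $\sqrt{(4/\pi)\ln(2n(1+1/\epsilon))}$ factor appears. Once the smoothing bound is established, plugging the transformed instance into the assumed solver for the $(n/k, m, q^k, \cdot, \Gm\tilde{\mathcal{D}}, \mathcal{D}_{\mathbf{\Psi}_{\sigma'}})$-LWE problem yields a solver for the original $(n, m, q, \cdot, \tilde{\mathcal{D}}, \mathcal{D}_{\mathbf{\Psi}_\sigma})$-LWE instance with the stated advantage loss.
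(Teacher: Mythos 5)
First, a remark on the comparison point: the paper does not prove this lemma at all --- it is imported verbatim as Corollary~3.4 of~\cite{STOC:BLPRS13} --- so the only meaningful benchmark is the original modulus--dimension switching proof there (their Theorem~3.1 instantiated with the gadget $\Gm$, which works via \emph{randomized} Gaussian rounding of the $\av$-vectors).

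Measured against that, your sketch has a genuine gap in the central sample-transformation step. You pair the deterministic base-$q$ packing $a'_j=\sum_i q^i a_{(j-1)k+i+1}$ (i.e.\ $\av'=\Gm^T\av$) with the exact identity $\langle \av,\Gm\tilde{\sv}\rangle=\langle\av',\tilde{\sv}\rangle$, which presumes the \emph{source} secret has the structured form $\sv=\Gm\tilde{\sv}$. That is backwards relative to the statement: the source secret is an arbitrary $(B,\delta)$-bounded $\tilde{\mathcal{D}}$ over dimension $n$, and it is the \emph{target} secret that is the packed/structured one ($\Gm\tilde{\mathcal{D}}$). Moreover, the identity holds over $\Z$ (or mod $q^k$), whereas the sample only reveals $\langle\av,\sv\rangle+e$ mod $q$; once you rescale $b$ by $q^{k-1}$ to lift to $\Z_{q^k}$, the exact identity no longer applies, and if you instead pack both $\av$ and $\sv$ you pick up uncontrolled cross terms $q^{i+l}a_{j,i}s_{j,l}$ modulo $q^k$. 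In the actual proof, $\av'$ is \emph{not} a deterministic packing: it is sampled by randomized (discrete Gaussian) rounding so that $\Gm\av'\equiv q^{k-1}\av+\fv \pmod{q^k}$ with $\fv$ small, one sets $b'=q^{k-1}b+{}$fresh noise, and the secret-dependent error $\langle\fv,\sv\rangle$ is precisely what the term $B q^{k-1}\sqrt{(4/\pi)\ln\left(2n\left(1+1/\epsilon\right)\right)}$ (the smoothing parameter of $\Z^n$ scaled by the bound on $\sv$) is there to absorb, and what produces the $\delta$ loss when the bound fails. Your construction generates no secret-dependent error term at all, so within your own sketch there is nothing for the $B$-dependent smoothing to act on; the appeal to ``conditioned on the secret having norm at most $B$'' and the $\delta+14\epsilon m$ accounting are therefore not connected to anything you actually built. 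To repair the argument you must introduce the randomized rounding of $\av'$ (or work over the torus with noise rates, as in~\cite{STOC:BLPRS13}), after which the smoothing lemma and the hybrid over the $m$ samples go through essentially as you describe.
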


Note that setting $k = n$ gives us hardness for single-dimension LWE (i.e. $n = 1$).  We will use this exact setting later in our proofs.

\subsection{$k$-LWE for a Constant Number of Vectors Is Hard}
In this section we prove that $k$-LWE with the appropriate parameters is as hard as regular LWE for \emph{any} distribution on the short vectors, up to a superpolynomial loss in the noise, \emph{assuming that the number of short vectors $k$ is constant}.  We state this below.  Our proof is inspired by the $k$-SIS proof of~\cite{PKC:BonFre11}.

\begin{lemma}~\label{lem:constkLWE}
Let $k$, $n$, $m$, and $q$ be positive integers, and let $\mathcal{D}_{\Rm}$ be a distribution over $\Z_{q}^{n}$.  Let $\mathcal{D}_{\mathbb{\Psi}_{\sigma}}$ be a discrete Gaussian distribution over $\Z_{q}^{m}$ with noise parameter (width) $\sigma$.  Furthermore, let $\mathcal{D}_{\Sm}$ be a distributions over $\Z_{q}^{m}$ with the additional requirement that $\mathcal{D}_{\Sm}$ is $B$-bounded.  Let $\Sm \in \Z_{q}^{k \times m}$ be a matrix where each \emph{row} is selected from $\mathcal{D}_{\Sm}$.  Let $f \left(n \right)$ be a function that is superpolynomial in $n$.

Any adversary that can solve the $\left(k, n, m + k, q, \mathcal{D}_{\Sm}, \mathcal{D}_{\rv}, \mathcal{D}_{\mathbb{\Psi}_{\sigma f \left(n \right) }} \right)$-$k$-LWE problem with advantage $\epsilon$ can be used to solve the $\left(n, m, q, \mathcal{U} \left( \Z_{q}^{n} \right), \mathcal{D}_{\rv}, \mathcal{D}_{\mathbb{\Psi}_{\sigma}} \right)$-LWE problem with advantage $\epsilon - \negl \left(n \right)$.
\end{lemma}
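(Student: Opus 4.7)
The plan is to reduce standard LWE to $k$-LWE. Given an LWE challenge $(\Am, \tv)$ with $\Am \in \Z_q^{m \times n}$ uniform and either $\tv = \Am \rv + \ev$ (with $\rv \sim \mathcal{D}_{\rv}$, $\ev \sim \mathcal{D}_{\mathbb{\Psi}_{\sigma}}$) or $\tv$ uniform, I would first sample $\sv_1, \ldots, \sv_k$ from $\mathcal{D}_{\Sm}$ and assemble $\Sm \in \Z_q^{k \times (m+k)}$, then compute a short integer matrix $\Cm \in \Z^{(m+k) \times m}$ whose columns form a $\Z_q$-basis of $\ker(\Sm) \subset \Z_q^{m+k}$. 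The reduction then outputs the simulated $k$-LWE challenge $(\Sm,\; \Cm \Am \bmod q,\; \Cm \bmod q,\; \Cm \tv + \tilde{\ev} \bmod q)$, where $\tilde{\ev} \sim \mathcal{D}_{\mathbb{\Psi}_{\sigma f(n)}}$ is fresh flooding noise.

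The distributional analysis breaks into three parts. First, since $\Am$ is uniform and $\Cm$ has full column rank modulo $q$, the product $\Cm \Am$ is uniform over $\{M \in \Z_q^{(m+k) \times n} : \Sm M = 0 \bmod q\}$, matching the $k$-LWE matrix distribution. Second, in the LWE-A case the challenge becomes $\Cm \tv + \tilde{\ev} = (\Cm \Am) \rv + (\Cm \ev + \tilde{\ev})$; since $\Cm$ has $\poly(m)$-size integer entries and $\ev$ is a discrete Gaussian of width $\sigma$, the perturbation $\Cm \ev$ has entries of magnitude $\poly(m) \cdot \sigma$ with overwhelming probability, and the standard noise-flooding lemma for discrete Gaussians (applicable because the ratio $f(n)$ is superpolynomial) implies $\Cm \ev + \tilde{\ev}$ is statistically within $\poly(m)/f(n) = \negl(n)$ of $\mathcal{D}_{\mathbb{\Psi}_{\sigma f(n)}}$, so the simulation matches $k$-LWE-A up to negligible error. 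Third, in the LWE-B case $\Cm \tv$ is uniform over $\mathrm{Im}(\Cm) = \ker(\Sm)$ and therefore equals $\Cm \rv'$ for some uniform $\rv' \in \Z_q^m$, so $\Cm \tv + \tilde{\ev}$ is exactly the $k$-LWE-B distribution.

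The main obstacle is efficiently producing the short basis $\Cm$, and this is precisely where the hypothesis that $k$ is constant becomes essential. My strategy is to build each column of $\Cm$ separately via short-vector computation on a constant-dimensional sublattice. For each $i \in [m]$, consider the column subset $T_i = \{i, m+1, \ldots, m+k\}$ of size $k+1$ and the auxiliary lattice $\Lambda_q^\perp(\Sm_{T_i}) = \{\vv \in \Z^{k+1} : \Sm_{T_i} \vv \equiv 0 \bmod q\}$, which has dimension $k+1$ and determinant at most $q^k$. Minkowski's theorem guarantees a nonzero vector of norm $O(\sqrt{k+1} \cdot q^{k/(k+1)})$, which is $\poly(m)$ for polynomial $q$; since $k$ is constant, LLL on this fixed-dimensional lattice runs in polynomial time and returns such a vector $\vv_i$ up to a $2^{O(k)} = O(1)$ approximation factor. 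Extending $\vv_i$ by zeros outside $T_i$ gives a short element of $\ker(\Sm)$. Provided $\Sm_R$ (the last $k$ columns of $\Sm$) is invertible modulo $q$, which happens with overwhelming probability for natural $\mathcal{D}_{\Sm}$, the $i$-th coordinate of each $\vv_i$ must be nonzero modulo $q$, since otherwise $\vv_i$ would project into $\ker(\Sm_R) = \{0\}$ and hence be zero modulo $q$, contradicting $|\vv_i| \ll q$. Therefore the columns of $\Cm$ are linearly independent over $\Z_q$ and span $\ker(\Sm)$ by a dimension count. Combining the noise-flooding bound with the $\negl(n)$ failure probability of these rank conditions yields the claimed reduction with advantage loss $\negl(n)$.
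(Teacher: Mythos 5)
Your reduction skeleton (multiply the LWE sample by an integer matrix whose columns lie in the kernel of $\Sm$, then flood with fresh noise of width $\sigma f(n)$, and observe that in the random case the image is uniform over the kernel) is the same as the paper's. The genuine gap is in the one step that carries all the weight: producing a kernel matrix with \emph{small} integer entries. You build each column by running LLL on the $(k+1)$-dimensional $q$-ary lattice $\Lambda_q^\perp(\Sm_{T_i})$ and invoke Minkowski's bound, which only guarantees vectors of norm roughly $q^{k/(k+1)}$; you then declare this $\poly(m)$ ``for polynomial $q$.'' But the lemma is stated for arbitrary $q$ with a \emph{fixed} superpolynomial flooding factor $f(n)$, and in the setting where it is actually applied (the attack on~\cite{KLS22}, cf.\ the general scheme and the summary table) the modulus is superpolynomial, indeed exponential. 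There $q^{k/(k+1)}$ is superpolynomial in $q$'s bit length, so $\Cm\ev$ has magnitude far exceeding $\sigma f(n)$ and the noise-flooding step collapses: the simulated noise $\Cm\ev+\tilde{\ev}$ is nowhere near statistically close to $\mathcal{D}_{\mathbb{\Psi}_{\sigma f(n)}}$. A telltale sign is that your argument never uses the hypothesis that $\mathcal{D}_{\Sm}$ is $B$-bounded, which is exactly the hypothesis the proof must exploit to get a bound independent of $q$.

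The paper's proof closes this gap with an adjugate trick: pick a full-rank $k\times k$ submatrix $\Tm$ of $\Sm$, form $\tilde{\Tm}$ with $\tilde{\Tm}\Tm=\det(\Tm)\cdot\Id_k$ over $\Z$, and use $\tilde{\Sm}=\tilde{\Tm}\Sm$ to assemble an integer matrix $\Um$ with $\Sm\Um=0$ whose entries are cofactors of $B$-bounded $k\times k$ matrices, hence bounded by $(2B)^k$ --- a constant when $k$ is constant and independent of $q$. Then $\Um\ev$ is $O(m^2\sigma)$ with overwhelming probability and flooding by $\sigma f(n)$ works for any modulus. (Two smaller points: the paper also re-randomizes the published basis as $\Cm=\Um\Wm$ for random $\Wm$ so that the $k$-LWE instance has the right basis distribution, whereas you hand out your deterministic short basis; and your linear-independence/spanning argument over $\Z_q$ is sketched only for prime-like behavior of $q$. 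Both are fixable, but the norm bound is not fixable within your approach.)
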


\begin{proof}
Suppose we are given an LWE challenge tuple in the form of $\left(\Am, \tv \right)$ where $\tv = \Am \cdot \rv + \ev$ (the ``real'' case) or $\tv = \rv$ for some uniformly sampled $\rv$.  We will show that we can use this LWE challenge to build a $k$-LWE challenge of the appropriate distribution in a way that succeeds with all but negligible probability.  Thus, given an appropriate $k$-LWE adversary, we can simply feed it our challenge tuple and then mimic that response in the LWE challenge game, winning with probability negligibly close to $\epsilon$.  Our reduction proceeds as follows (and borrows heavily from the techniques of~\cite{PKC:BonFre11}): 

First, suppose we sample $\Sm \leftarrow \mathcal{D}_{\Sm} \in \Z_{q}^{k \times \left(m + k \right)}$ and let $\Sm_{i, j}$ denote the $\left(i, j \right)th$ entry of $\Sm$.  Suppose we pick $k$ columns of $\Sm$ such that the $k \times k$ submatrix formed by these columns is full-rank; and note that we can always do this unless $\Sm$ has rank less than $k$ (in which case, we can just reduce $\Sm$ by one row and repeat the process).  Call this matrix $\Tm$.  Let $\tilde{\Tm}$ denote the \emph{adjugate} matrix of $\Tm$; in other words, $\tilde{\Tm} \cdot \Tm = det \left( \Tm \right) \cdot \mathbf{I}_{k}$ and $\tilde{\Tm} \in \Z^{k \times k}$.  Let $\tilde{\Sm} \in \Z_{k \times m} = \tilde{\Tm} \cdot \Sm$, and note that it has the following structure:
\[
\tilde{\Sm} = \left[ \begin{array}{cccccccc}
det \left( \Tm \right) & 0 & 0 &... & 0 &\tilde{\Sm}_{1, k+1} & ... & \tilde{\Sm}_{1, k + m} \\
0 & det \left( \Tm \right) & 0 & ... &  0 & \tilde{\Sm}_{2, k + 1} & ... & \tilde{\Sm}_{2, k + m} \\
0 & 0 &  det \left( \Tm \right) & ... &  0 & \tilde{\Sm}_{3, k + 1} & ... & \tilde{\Sm}_{3, k + m} \\
... & ... & ... & ... & ... & ... & ... & ... \\
0 & 0 & 0 & ... &  det \left( \Tm \right) & \tilde{\Sm}_{k, k + 1} & ... & \tilde{\Sm}_{k, k + m} 
\end{array} \right]
\]
Suppose we let $\Um \in \Z^{\left(m + k \right) \times m}$ be defined in the following way:  for each entry $\Um_{i, j}$ where $i \leq k$, we set $\Um_{i,j} = \tilde{\Sm}_{i, j + k}$.  For each entry $\Um_{i, j}$ for $i > k$, we set $\Um_{i, j } = - det \left( \Tm \right)$ if $i = j$ and $0$ otherwise.  Note that, pictorially, this gives us the following structure:
\[
\Um = \left[ \begin{array}{ccccc}
\tilde{\Sm}_{1, k+1} & \tilde{\Sm}_{1, k+2} & \tilde{\Sm}_{1, k+3} & ... & \tilde{\Sm}_{1, k + m} \\
\tilde{\Sm}_{2, k + 1} & \tilde{\Sm}_{2, k + 2} & \tilde{\Sm}_{2, k + 3} &   ... & \tilde{\Sm}_{2, k + m} \\
\tilde{\Sm}_{3, k + 1} & \tilde{\Sm}_{3, k + 2} & \tilde{\Sm}_{3, k + 3} & ... & \tilde{\Sm}_{3, k + m} \\
 ... & ... & ... & ... & ... \\
 \tilde{\Sm}_{k, k + 1} &  \tilde{\Sm}_{k, k + 2} &  \tilde{\Sm}_{k, k + 3} &  ... & \tilde{\Sm}_{k, k + m} \\
- det \left( \Tm \right) & 0 & 0 &... & 0 \\
0 & - det \left( \Tm \right) & 0 & ... &  0  \\
0 & 0 & - det \left( \Tm \right) & ... & 0 \\
... & ... & ... & ... & ...\\
0 & 0 & 0 & ... & - det \left( \Tm \right)  
\end{array} \right]
\]
We interrupt the reduction to make a couple of claims.

\medskip
\noindent \textbf{Claim:  $\Sm \cdot \Um = 0$}.  This follows from the fact that $\tilde{\Sm}$ is orthogonal to $\Um$ by definition and $\Sm = \tilde{T} \Sm$.  Since $\Tm$ is invertible (over both $\Z$ and $\Z_{q}$, $\Sm \cdot \Um = 0$ if and only if $\tilde{\Sm} \cdot \Um = 0$.

\medskip
\noindent \textbf{Claim:  no entry in $\Um$ has value larger than $\left(2B \right)^{k}$}.  First, note that the determinant of any $k \times k$ matrix with entries in $\left[-B, B \right]$ is at most $\left(2B \right)^{k}$, and the determinant of any $B$-bounded smaller matrices must be smaller than that.  Since the entries of the adjugate matrices are themselves determinats of submatrices of $\Tm$, they must also follow this bound. 

\medskip

Now we may continue with our reduction.  Suppose we sample $\ev' \leftarrow \mathcal{D}_{\mathbb{\Psi}_{\sigma f \left(n \right)}}^{m + k}$ and let $\Am' \in \Z_{q}^{m + k} \times n = \Um \cdot \Am$ and $\tv' = \Um \cdot \tv + \ev'$.  Additionally, suppose we sample some random matrix $\Wm \in \Z_{q}^{m \times m}$ and set $\Cm = \Um \cdot \Wm$.  We claim that the tuple $\left(\Sm, \Am', \Cm, \tv' \right)$ is an appropriately distributed $k$-LWE challenge.  We show this in a number of claims:  first, with some claims that apply to both the ``real'' and ``random'' cases, and then we argue these cases separately.

\medskip
\noindent \textbf{Claim:  $\Cm$ is a uniform basis of the kernel of $\Sm$ $\mod q$.}  This follows from the fact that $\Sm \cdot \Cm = 0$ over $\Z$ and the ranks of the matrices sum to $m + k$.

\medskip
\noindent \textbf{Claim:  $\Am' = \Um \cdot \Am \mod q$ is distributed uniformly at random subject to the constraint $\Sm \cdot \Am' = 0$}.  Since $\Am$ is uniformly random and $\Um$ is a basis for all vectors in the kernel of $\Sm$, we know that $\Am'$ is distributed appropriately $\mod q$.

\medskip

So, at this point we know that $\Sm$, $\Am'$, and $\Cm$ are distributed appropriately.  All that remains is to handle $\tv'$  We handle this separately for the ``real'' and ``random'' cases below.

\medskip
\noindent \textbf{The ``Real'' Case.}  Assume now that $\tv = \Am \cdot \rv + \ev$.  Then we have 
\[
\tv' = \Um \left( \Am \cdot \rv + \ev \right) + \ev' = \Am' \cdot \rv + \left(\Um \ev + \ev' \right).
\]

If $k$ is constant, then $\left(2B \right)^{k}$ is constant and $\Um \ev$ has no entries larger than $O \left( m^2 \sigma \right)$ with all but negligible probability since the probability that a discrete Gaussian with parameter $\sigma$ is larger than $m \sigma$ is negligible.  Since $f \left(n \right)$ grows faster than any polynomial, we know that $\Um \cdot \ev + \ev'$ is distributed statistically close to just sampling a discrete Gaussian with parameter $\sigma f \left(n \right)$ (i.e. how we sampled $\ev'$), so we know that $\tv'$ is sampled appropriately in this case.  

\medskip
\noindent \textbf{The ``Random'' Case.}  Now assume that $\tv'$ is distributed uniformly at random over $\Z_{q}^{m}$.  In this case, we know that $\Um \cdot \tv$ is distributed uniformly at random over the kernel of $\Sm$ $\mod q$ and thus can be written as $\Cm \cdot \rv'$ for a uniformly random $\rv'$ as desired.  Since $\tv' = \Um \cdot \tv + \ev'$, we therefore know that the output distribution is correct in the ``random'' case as well.

\medskip

Completing these two cases finishes the reduction and completes the proof.
\end{proof}

\subsection{Putting It All Together}
In order to show that any adversary that can solve the $k$-LWE instance implied by the~\cite{KLS22} scheme can solve worst-case lattice problems, we just need to put all of our lemmas together.  Below, we show a table containing all of our hybrid arguments with approximate factors:  we ignore constant factors in the dimensions and polynomial factors in the noise and modulus (when the modulus is exponentially large).  We assume $q > \left(q' \right)^{n}$ and that the noise distributions are Gaussians.  We note that the ``Noise'' entry for GapSVP is the approximation ratio defined by the problem, not the noise itself.
\[
\begin{array}{|c|c|c|c|c|c|c|}
\text{Assumption} & \text{Lattice Dim.} & \text{Samples} & \text{Modulus} & \text{Key Dist.} & \text{Noise} & \text{Proof Comment} \\ \hline
k-LWE & 1 & m & q & \text{Unif.} & \frac{q}{\eta} & \text{n/a} \\ \hline
LWE & 1 & m & q & \text{Unif.} & \frac{q}{\eta f \left(n \right)} & \text{Lemma~\ref{lem:constkLWE}} \\ \hline
LWE & 1 & m & q & \text{Noise} & \frac{q}{\eta f \left(n \right)} & \text{Lemma~\ref{lem:unifkey}} \\ \hline
LWE & 1 & m & \left(q' \right)^{n} & \text{Noise} & \frac{\left(q' \right)^{n}}{\eta f \left(n \right)} & \text{Lemma~\ref{lem:modreduction}} \\ \hline
LWE & n & m & q' & \text{Noise} & \frac{q'}{\eta f \left(n \right)} & \text{Lemma~\ref{lem:modexp}} \\ \hline 
LWE & n & m & q' & \text{Unif.} & \frac{q'}{\eta f \left(n \right)} & \text{Lemma~\ref{lem:noisekey}} \\ \hline
GapSVP & n & - & - &  - & \eta f \left(n \right) & \text{Theorem~\ref{thm:LWEHardness}} \\ \hline
\end{array}
\]

We can state this in a nice lemma.

\begin{lemma}
Let $n$, $m$, $q$, and $k$ be integers such that $k$ is a constant.   Let $f \left(n \right)$ be some function that grows superpolynomially in $n$.  Let $q$ be exponentially large and $m$ polynomially sized in some security parameter $n$, and let $q' = poly \left(n \right) f \left(n \right)$.  Let $\sigma$ be a discrete Gaussian parameter such that $\sigma \geq \frac{q}{\eta}$ and let $\mathcal{D}_{\Psi_{\sigma}}$ denote a discrete Gaussian with parameter $\sigma$.  Let $\mathcal{D}_{\Sm}$ be any distribution over $\Z^{m}$ that outputs vectors bounded by some polynomial in $n$.

An adversary that solves the $\left(k, 1, m, q, \mathcal{D}_{\Sm}, \mathcal{U} \left( \Z_{q} \right), \mathcal{D}_{\Psi_{\sigma}} \right)$-k-LWE problem with non-negligible advantage can be used to solve the $\left( \log_{q'} \left(q \right), \sigma f \left(n \right) \right)$-GapSVP problem with non-negligible advantage.
\end{lemma}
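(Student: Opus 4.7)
The plan is to chain together the six reductions already presented in the table immediately preceding the lemma statement, turning it into a contrapositive argument: assuming an adversary against $k$-LWE with the stated parameters, we walk backwards through a sequence of parameter-preserving reductions, ending at a worst-case GapSVP adversary. Since each intermediate step has already been proven in this section or is cited from prior work, the task reduces to verifying that parameters compose cleanly and that advantage losses at each step are at most negligible (or by polynomial factors that preserve non-negligibility).

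First I would apply Lemma~\ref{lem:constkLWE} to convert an adversary against $(k,1,m,q,\mathcal{D}_{\Sm},\mathcal{U}(\Z_q),\mathcal{D}_{\mathbf{\Psi}_\sigma})$-$k$-LWE into one against plain $(1,m',q,\mathcal{U}(\Z_q),\mathcal{U}(\Z_q),\mathcal{D}_{\mathbf{\Psi}_{\sigma/f(n)}})$-LWE for some $m' = m - k$. This is the one step that demands $k$ be constant and that $\mathcal{D}_{\Sm}$ be polynomially bounded, since the cofactor construction blows up the noise by $(2B)^k$, and we want the resulting noise to still be super-polynomially smaller than $\sigma f(n)$ so that the new error absorbs the image $\Um\cdot\ev$. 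Next I would use Lemma~\ref{lem:unifkey} to replace the uniform key with a discrete Gaussian key of suitable width (this loses nothing in advantage). Then Lemma~\ref{lem:modreduction} lets me blow up the modulus from $q$ to $(q')^n$ while scaling the noise proportionally; this step crucially preserves the noise-to-modulus ratio up to an additive $B$-dependent term that is polynomial. Then Lemma~\ref{lem:modexp} with $k=n$ (in the notation of that lemma) collapses the dimension-$1$ LWE with modulus $(q')^n$ down to a dimension-$n$ LWE with modulus $q'$, again preserving the noise-to-modulus ratio up to negligible loss. Finally Lemma~\ref{lem:noisekey} converts the Gaussian key back to a uniform key, and Theorem~\ref{thm:LWEHardness} reduces the resulting standard LWE instance to GapSVP with approximation factor $\tilde{O}(n q'/\text{noise}) = \tilde{O}(\eta f(n))$.

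The main obstacle is not any single step, all of which are black-box invocations, but rather the careful bookkeeping of parameters: one must verify that (a) the blow-up from Lemma~\ref{lem:constkLWE} (which introduces an $f(n)$ factor in the noise) is compatible with the subsequent bound $\sigma \geq q/\eta$ so that the final GapSVP approximation factor comes out as $\eta f(n)$ rather than something larger; (b) the modulus switching in Lemma~\ref{lem:modreduction} respects the $(B,\delta)$-bounded condition on the key distribution, which is why we first replaced the uniform key with a Gaussian of polynomial width; (c) the choice $\log_{q'}(q)$ for the final lattice dimension matches the exponent needed to bring $(q')^n$ down to $q'$ via Lemma~\ref{lem:modexp}; and (d) the total advantage loss across all six reductions remains polynomial in the starting advantage, so that a non-negligible $k$-LWE advantage yields a non-negligible GapSVP advantage. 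Once those parameter alignments are checked, the proof follows by composition of the efficient reductions in the order laid out in the table.
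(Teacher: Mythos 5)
Your proposal takes essentially the same route as the paper's proof, which is exactly the hybrid argument summarized in the table: chain Lemma~\ref{lem:constkLWE}, then Lemma~\ref{lem:unifkey}, then the modulus-switching Lemmas~\ref{lem:modreduction} and~\ref{lem:modexp} (the latter with $k$ equal to the full dimension, giving lattice dimension $\log_{q'}(q)$), then Lemma~\ref{lem:noisekey}, and finally Theorem~\ref{thm:LWEHardness}, checking that the noise-to-modulus ratio and the advantage are preserved up to polynomial/negligible losses. The parameter concerns you flag (the $(2B)^k$ blow-up requiring constant $k$ and bounded $\mathcal{D}_{\Sm}$, the bounded-key requirement for modulus switching, and the exponent matching $\log_{q'}(q)$) are the same ones the paper addresses, so your plan is correct and matches the intended proof.
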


\begin{proof}
The proof follows from a simple hybrid argument outlined in the table above and by just applying the lemmas in sequence without much additional thought.  We explain each step below.  For the sake of clarity, we ignore small factors in the discussion below (and these don't matter anyway because we are only focused on an asymptotic result).
\begin{itemize}
\item We start by reducing to $k$-LWE in dimension $1$ from LWE in dimension $1$, with the only loss being a superpolynomial factor in the noise parameter.  This follows from the one lemma that did not follow from previous work, lemma~\label{lem:constkLWE}.
\item The modulus switching lemmas require a key with ``small'' entries, so we next use a reduction to LWE in dimension $1$ with uniform key from LWE in dimension $1$ with small key.  This is exactly what is stated in lemma~\ref{lem:unifkey}.
\item We now have a dimension $1$ LWE instance with a ``small'' key.  We use lemma~\ref{lem:modreduction} to reduce to this LWE instance from an LWE instance with a tailored modulus of the form $q'^{n}$ so that we can easily apply the core modulus switching lemma.
\item Given an LWE instance in dimension $1$ with modulus $q'^{n}$, we apply the core modulus-switching lemma--lemma~\ref{lem:modexp}--to reduce to this from an LWE instance in dimension $n$ with modulus $q'$.  This allows us to reach our desired ``standard'' dimension.  We note that the key in both of these instances is still ``small'' and of the noise distribution.
\item We next apply lemma~\ref{lem:noisekey} to reduce to LWE with a small key (drawn from the noise distribution, or a similar distribution) from LWE with a uniform key.  After this step we have essentially a ``standard'' LWE problem.
\item Finally, we apply the famous theorem of Brakerski \emph{et al.} to reduce to ``standard'' LWE from the Gap-SVP problem.
\end{itemize}

This completes the proof, which is essentially just a combination of a new result on $k$-LWE and an application of modulus switching.
\end{proof}

Note that $\log_{q'} \left(q \right)$ will be quite large (possibly even polynomial, depending on the choice of $q$), so the Gap-SVP problem reduced from here seems very likely to be hard.


\section{Instantiation Using Elliptic Curves}
\label{sec:ellipticcurve}
We next outline how our invariant construction might be instantiated with elliptic curve isogenies.  Our candidate construction(s) here are relatively high-level and need substantially more study on their security properties, particularly in light of some recent cryptanalysis on isogenies~\cite{cryptoeprint:2022/975,cryptoeprint:2022/1026}.

\paragraph{Elliptic Curve Isogenies.}  We briefly outline some properties of elliptic curves and isogenies that we will use in our candidate constructions.  For a full treatment, we highly recommend~\cite{de2017mathematics}.  \emph{Elliptic curves} are projective curves of genus one with a specified base point.  Elliptic curves over finite fields $k$ are often defined in \emph{Weierstrass form}, consisting of all points in the locus of the equation $y^{2} = x^{3} + ax + b$ and the point at infinity in $\mathbb{P}^{2} \left( \overline{k} \right)$, where we use $\overline{k}$ to denote the algebraic closure of $k$.  The \emph{$j$-invariant} of an elliptic curve $E$ in Weierstrass form is defined as 
\[
j \left( E \right) = 1728 \frac{4a^{3}}{4a^{3} + 27b^{2}} .
\]
Two elliptic curves $E_{1}$ and $E_{2}$ are isomorphic if and only if they have the same invariant.

An \emph{isogeny} $\psi: E \rightarrow E$ is a surjective group morphism between elliptic curves.  We note that isogenies only exist between elliptic curves with the same number of points.  We denote the number of points of a curve as $\# E \left( k \right)$, or sometimes just $\# E$.  Very roughly speaking, isogeny-based cryptography is built on the fact that for certain elliptic curves $E$ and related isogenies $\psi$, it is possible to efficiently compute $\psi \left(E \right)$, but given two elliptic curves $E_{1}$ and $E_{2}$, it is hard to find an isogeny $\psi$ (or set of isogenies applied in sequence) so that $\psi \left(E_{1} \right) = E_{2}$.

\paragraph{Background Ideas.}
Suppose we start with a hypothetical example.  Let $X$ be the set of elliptic curves (perhaps represented by their $j$-invariants) where the number of points on the curve has the form $\ell*q$ for a prime $q$ and some small $\ell$. This can be generalized; having several small factors could work as well, or also potentially restricting to super-singular curves. 

The invariant could be the number of points on the curve. The $\sigma_{y,i}$ are degree-$\ell$ isogenies, or if we are generalizing to multiple small factors, $\sigma_i$ will range over low-degree isogenies. Basically, we choose some arbitrary way of mapping $[r]$ to kernels of the $\ell$-torsion, and then $\sigma_i$ is applying the isogeny defined by that kernel. The inverse of an isogeny is just the dual isogeny, which can be efficiently computed.

The orbits $O$ then correspond to sets of elliptic curves that can be reached by sequences of degree-$\ell$ isogenies. Under some Ramanujan-Petersson conjecture~\cite{ChaLauGor08} (or alternatively GRH~\cite{JaoMilVen08}), action by small-degree isogenies gives an expander. This should give us $\lambda_2\leq 1-\delta$ as needed. We also see that isogenies preserve the number of points on the curve, so the invariant property is satisfied.

For security, the path finding is probably hard, as it is essentially the basis of isogeny cryptography~\cite{EPRINT:Couveignes06,EPRINT:RosSto06}. The plain knowledge of path is false: since we can compute elliptic curves with a given size, we have an invertible invariant. However, the knowledge of path for invertible invariants could possibly be true: perhaps the only way of obtaining two elliptic curves with the same number of points is to either:
\begin{itemize}
	\item Sample an elliptic curve $E_1$ and then follow a sequence of isogenies from it
	\item Sample an elliptic curve $E_1$, compute the number of points on the curve, and then construct a curve $E_2$ with that many points using the known algorithms for doing so (or potentially follow a sequence of isogenies from $E_2$).
\end{itemize}
If so, then the knowledge of path assumption holds. Lastly, the invariant inversion assumption could plausibly be true: given an elliptic curve, it seems likely that an adversary cannot construct random coins for the elliptic curve construction procedure that produce the given elliptic curve. 

\paragraph{Instantiating These Ideas.}
While it may seem straightforward to build a candidate quantum lightning scheme using the template above, it unfortunately is not so straightforward.  To start, generating superpositions over $X$, where $X$ is the set of all ellitpic curves with some (even polynomially likely) property seems difficult.  In fact, we do not even know how to generate a uniform superposition over all elliptic curves efficiently.  Recent work~\cite{mula2022random} explains the known approaches to sampling uniform \emph{supersingular} elliptic curves classically, and unfortunately none of it is amenable to sampling a uniform quantum superposition.  Many of the most common ways to sample elliptic curves use random walks, and generating superpositions this way makes it difficult to avoid a hard index erasure problem~\cite{ambainis2011symmetry}.  

There may also be a duality:  the easier the sets of elliptic curves are to sample, the more difficult it is to prove or argue security (since the number of isogenies that are computable may be less or less sophisticated).  We leave it open to future work to instantiate our framework (or something similar) from elliptic curve isogenies, and instead lay out a rough sketch of what such an instantiation might be like below.  To do this, we start by making a conjecture on the efficient samplability of certain elliptic curves.

\begin{conjecture}~\label{conj:iso}
There exists an efficient quantum algorithm $\mathcal{A}$ to sample a uniform superposition over some distribution $\mathcal{E}$ of elliptic curves over some finite field $\F_{p}$ with the following properties:
\begin{itemize}
\item Given two random elliptic curves $E_{1}, E_{2} \in \mathcal{E}$ such that $E_{1}$ and $E_{2}$ are isogenous, there is no efficient (quantum) algorithm to find an isogeny $\psi$ such that $\psi \left(E_{1} \right) = E_{2}$.  This is analogous to the \emph{path-finding assumption} holding.
\item Let $I \left(E_{0} \right)$ denote the number of points on an elliptic curve, and consider an algorithm $I^{-1}: \Z \times \left\{0, 1 \right\}^{\ell}$ that takes an integer $k$ and a random bit string $b$ and outputs an elliptic curve $E_{0}$ with $k$ points.  We require the following two properties (which imply the knowledge of path assumption for invertible invariants, and the inversion inverting assumption be true):
\begin{itemize}
\item Given any algorithm that outputs two elliptic curves $E_{1}$ and $E_{2}$ with $k$ points, there exists an extractor $E$ that can either compute an isogeny $\psi$ such that $\psi \left(E_{1} \right) = E_{2}$ or it can find randomness $t$ and an isogeny $\psi'$ such that either $\psi' \left( I^{-1} \left(k, t \right) \right) = E_{1}$ or $\psi' \left( I^{-1} \left(k, t \right) \right) = E_{2}$.  
\item No efficient adversary can do the following:  sample an (arbitrary) elliptic curve $E_{1}$ with $k$ points, and then, for a randomly selected elliptic curve $E_{2}$, compute randomness $t$ and an isogeny $\psi$ such that $\psi \left( I^{-1} \left(k, t \right) \right) = E_{2}$. 
\end{itemize}
\end{itemize}
\end{conjecture}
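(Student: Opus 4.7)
The conjecture bundles together a samplability claim, a path-finding assumption, a knowledge-of-path assumption for an invertible invariant, and an inversion-inverting assumption. My plan is to propose a concrete candidate for $\mathcal{E}$ and then justify the four properties in turn. The natural choice, matching the framework of Section~\ref{sec:invariant_main}, is a set of supersingular elliptic curves over $\F_{p^2}$ whose order has the form $\ell \cdot q$ for a small prime $\ell$ and a large prime $q$, with the $\sigma_i$ being the $\ell+1$ degree-$\ell$ isogenies from each curve; by a theorem of Pizer the associated Cayley graph is Ramanujan, so orbits automatically satisfy the mixing requirement $\lambda_2 \leq 1-\delta$ from Section~\ref{sec:invariant_main}, and the invariant $I(E) = \#E(\F_{p^2})$ is both efficiently computable and (classically) invertible via CM or Br\"oker-style constructions.

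For the first property, which I expect to be the main obstacle, I would first try to build the superposition over $\mathcal{E}$ by starting from a fixed base curve $E_0$ with known endomorphism ring and running a polynomial-depth coherent quantum walk of degree-$\ell$ isogenies. By Ramanujan mixing the marginal distribution of the endpoint is statistically close to uniform over the orbit; the obstacle is the familiar \emph{index-erasure} problem, since the walk leaves behind a path register entangled with the final curve. My plan is to attack this by either (i) running the walk long enough that each endpoint has an exponentially large preimage set of walks and then using Amplitude Amplification after approximately uncomputing to a canonical representative path, or (ii) directly preparing the superposition via a classical constructive sampler (Br\"oker's algorithm, CM constructions) and proving that the resulting distribution is polynomially close to uniform. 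Neither route currently yields a clean proof, so I would state the resulting samplability statement as a focused sub-conjecture and prove the remaining properties conditional on it.

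The remaining three properties I would argue from well-studied isogeny assumptions together with extraction heuristics. The path-finding claim is essentially the supersingular isogeny problem that underlies CSIDH and related schemes; crucially, the recent attacks of \cite{castryck2022efficient, maino2022attack, robert2022breaking} exploit auxiliary torsion-point images that are not present in our setting, so they do not apply. The knowledge-of-path property for the invertible invariant I would argue heuristically: the only known ways to produce a second curve with a prescribed point count are (a) to walk via isogenies from a curve one already holds, matching the first branch of $E$'s output, and (b) to invoke a CM-style construction, which is exactly what $I^{-1}$ models --- so a successful adversary can, in principle, be extracted either to a path or to inversion randomness. The inversion-inverting assumption then follows from the belief that such CM-based constructions are one-way on their random coins, which can be made more robust by hashing the adversary's claimed randomness before feeding it into the CM construction. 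The main obstacle to making all of this fully rigorous remains the first property, and I view proving it --- or even pinning down a single distribution $\mathcal{E}$ that simultaneously admits efficient superposition sampling and supports the other three items --- as the central open problem in the isogeny-based instantiation.
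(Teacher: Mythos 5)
This statement is a conjecture that the paper itself does not prove: its surrounding discussion in the elliptic-curve instantiation section only motivates it with exactly the ingredients you use (point counts as the invariant, small-degree isogeny walks mixing via Ramanujan-type expansion under GRH/Ramanujan--Petersson, path-finding as the standard isogeny problem untouched by the recent SIDH attacks, invertibility via curve-construction algorithms such as Br\"oker's), and it explicitly concedes that efficiently preparing the uniform superposition is open precisely because of the index-erasure issue you identify. Your proposal therefore matches the paper's approach essentially point for point, including honestly isolating the samplability requirement as an unproven sub-conjecture rather than claiming a proof, so there is no gap to flag beyond the one both you and the authors acknowledge.
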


We note that our chosen invarant $I$ is the number of points on the elliptic curve.  This is necessary because this is the set of curves that are isogenous.  Note that it is possible, in general, to compute an elliptic curve with a certain number of points~\cite{BroSte04}, so we need to use the invertible invariant form of our quantum lightning construction.  We note that $\mathcal{A}$ could, in theory, restrict the set of curves to supersingular curves.

We emphasize that, while we cannot currently come up with algorithms to satisfy this conjecture, it does not seem like a fundamentally impossible problem to us.  In fact, it might be doable if we knew of an algorithm to sample a uniform superposition of (all) elliptic curves over $\F_{p}$.  If we had such a superposition, we could compute the number of points using Schoof's algorithm~\cite{schoof1995counting} or some related algorithm and store this in an adjacent register.  Then, we could compute a bit that indicates whether or not the number of points on the curve satisfies some property, and then measure this bit.  If $1$, we could continue (and thus have a superposition over curves where the number of points had some property), and if $0$, abort.  If successful, we now have a uniform superposition over elliptic curves where the number of points on the curve satisfies some property.\footnote{We would need this measurement to output $1$ with polynomial probability, which would heavily restrict our properties we could use here.}

We next provide a an instantiation of a quantum lightning scheme from elliptic curve isogenies assuming that Conjecture~\ref{conj:iso} holds.  A construction of quantum lightning from elliptic curve isogenies might look something like the following:

\paragraph{Minting.}  To mint an instance of quantum money/lightning, we would do the following:
\begin{itemize}
\item Sample a uniform superposition over all non-degenerate elliptic curves using the (conjectured) sampling algorithm $\mathcal{A}$.   
\item Measure the number of points on the curve (i.e. compute the invariant $I$).  This becomes the serial number of the note.
\end{itemize}

\paragraph{Verification.}  Our verification procedure would be very similar to as described in the walkable invariant construction.
\begin{itemize}
\item To verify, we would apply isogenies that induce a random walk on the isogeny graph.  We simply need to apply enough isogenies so that the graph ``mixes''.
\item Then we simply apply the check as described in the walkable invariant construction to ensure that we still have a uniform superposition over the orbits.  
\end{itemize}

We do not know if it is even possible to instantiate such a scheme, as it would likely require new ideas in isogeny-based cryptography.  However, we think it is an enticing direction for future research.  We note that, given Conjecture~\ref{conj:iso}, security immediately follows from the security of our invariant money construction.  So, if it is possible to come up with algorithms that satisfy the conjecture, we can build secure quantum lightning from elliptic curve isogenies.


\section{Functional Encryption-inspired Instantiation}
\label{sec:fe_instantiation}

We present another candidate instantiation of invariant quantum money. We call it a ``functional encryption-inspired'' instantiation because many components are functional encryption-like, but the security properties we look for are very different than what is typically required in functional encryption definitions.

At a high level, we need the following components:
\begin{itemize}
    \item A secret-key functional encryption scheme $\fe$ for general functions with the following additional properties: 
	\begin{itemize}
		\item An \emph{invertible} rerandomization algorithm $\rerand$ that allows for ciphertext rerandomization
    		\item Obliviously sampleable ciphertexts.
	\end{itemize}
    \item A (collision resistant) invariant function $H$ or a family of such functions $\mathcal{H}$.
\end{itemize}

We will specify the properties we need for the above building blocks in detail.

\subsection{Re-randomizable Functional Encryption}
\label{sec:fe_defs}

We first have the basic algorithms for an $\fe$ scheme:

\paragraph{Basic Functional Encryption.}
A secret-key functional encryption scheme $\fe$ consists of the following (basic) algorithms:

\begin{itemize}
    \item $\fe.\setup(1^\lambda) \to (\pp,\msk)$:
     a polynomial time algorithm that takes the security parameter as input and outputs public parameters $\pp$ and a master secret key $\msk$.
    
\item $\fe.\keygen(\msk, f) \to \sk{}_f$:
    a polynomial time algorithm that takes as input the master secret key $\msk$ and a
function description $f$ and outputs a corresponding secret key $\sk{}_f$.
    
    \item $\fe.\enc(\msk, m, r) \to \ct$:
    a polynomial time algorithm that takes the master secret key $\msk$, a message $m$ and randomness $r$,
outputs a ciphertext $\ct$.
    
    \item $\fe.\dec(\sk{}_f, \ct) \to m$:
    a polynomial time algorithm that takes a secret key $\sk{}_f$ and ciphertext encrypting
message $m$ and outputs a result $y$.
\end{itemize}

\medskip

We note that $\setup$ and $\keygen$ may additionally take randomness, but we omit that in our description for simplicity.  To construct quantum money,
we need the following additional properties:

\paragraph{Re-randomization}

We additionally need the following algorithm:

\begin{itemize}
    \item $\fe.\rerand(\pp, \ct_{m,r}, r_\delta) \to \ct_{m,r'}$: takes in public parameters $\pp$, a ciphertext $\ct_{m,r} = \fe.\enc(\msk, m, r)$, a string $r_\delta$; outputs a new ciphertext $\ct_{m,r'}$.
    
\end{itemize}

We also require the $\rerand$ algorithm to be \emph{invertible}: there exists an efficient $\rerand^{-1}$ such that if $\ct_{m,r'} \gets \rerand(\pp, \ct_{m,r}, r_\delta)$, we can compute $\ct_{m,r} \gets \rerand^{-1}(\pp, \ct_{m,r'}, r_\delta)$.

\paragraph{Randomness Recoverability}
In order for the re-randomization algorithm(and its inverse) to be useful in the context of our scheme, we also need an efficient procedure to recover the randomness from the ciphertext:
\begin{itemize}
    \item $\recoverrand(\msk, \ct_{m,r}) \to r$: a deterministic algorithm takes in the the master secret key and a ciphertext $\ct_{m,r}$; outputs the randomness $r$.
\end{itemize}
Note that in our scheme, it suffices to let $\recoverrand$ output both the message $m$ and randomness $r$. Therefore, we can just let $\recoverrand(\msk, \cdot)$ and the decryption using the master secret key be the same algorithm.

\paragraph{Obliviously Sampleable Ciphertexts.}

There exists a bijective function $G:\{0,1\}^{n+\ell} \to \mathcal{C}$ where $\mathcal{C}$ is the space of all possible ciphertexts, $n$ is the length of message and $\ell$ is the length of the randomness used in encryption. We also give out the inverse function $G^{-1}: \mathcal{C} \to \{0,1\}^{n+\ell}$.

Note that this function $G$ is independent of the encryption function $\fe.\enc(\msk,\cdot)$, and, informally speaking, $G$ should not be ``useful'' to any adversary attempting to attack the scheme.   One simple example of $G$ is the identity function, when the ciphertexts are all possible strings in $\{0,1\}^{n+m}$. However, we define $G$ more generally since it may not be the case that the ciphertext space is dense.

\begin{remark}
An alternative requirement to obliviously sampleable ciphertexts is to require that the encryptions of random messages are statiscally close to uniform random strings.  We note that this is essentially equivalent to $G$ being the identity function.
\end{remark}

\paragraph{Correctness.}
A functional encryption scheme is correct for a family of functions $\mathcal{F}$ if for all deterministic $f \in \mathcal{F}$, all messages $m \in \mathcal{M}$, all randomness $r \in \mathcal{R}$:

    \begin{align*}
         & \Pr\left[ \dec(\sk{}_f, \ct) = f(m) \middle| \begin{array}{cc} 
              & (\msk,\pp) \gets \setup(1^\lambda), \\
              & \sk{}_f \gets \keygen(\msk, f) \\
              & \ct \gets \enc(\msk, m, r)
         \end{array}\right] \geq 1-\negl(\lambda)  
     \end{align*} 

Note that the above correctness should also hold for re-randomized ciphertexts: that is, for all $f$, all messages $m \in \mathcal{M}$, all randomness $r, r_\delta \in \mathcal{R}$:

    \begin{align*}
         & \Pr\left[ \dec(\sk{}_f, \ct') = f(m) \middle| \begin{array}{cc} 
              & (\msk, \pp) \gets \setup(1^\lambda, k), \\
              & \sk{}_f \gets \keygen(\msk, f) \\
              & \ct \gets \enc(\msk, m, r) \\
            &  \ct' \gets \rerand(\pp, \ct, r_\delta)
         \end{array}\right] \geq 1-\negl(\lambda)  
     \end{align*}

Regarding encryption security, we do not require the usual indistinguishability based definition. We instead present the following "Hardness of path finding" definition, analogous to \ref{assump:pathfinding}:

\begin{definition}[(Single-key) Ciphertext  Path-Finding Security Game]
\label{def:fe_pathfinding}

The above $\fe$ scheme is secure if for all functions $f$ and all admissible non-uniform
QPT $\adv_1,\adv_2$ with quantum advice $\{\langle{\psi_\aux}_\lambda\}\rangle_{\lambda \in \N}$, there exists a
 negligible function $\negl(\cdot)$  such that for all $\lambda \in \N$:

 \begin{align*}
         & \Pr\left[ p = \adv_2(1^\lambda, \pp,\ct_1, |\st\rangle)
         \,\middle| \begin{array}{cc} 
              & (\msk,\pp) \gets \setup(1^\lambda)\\
             &  (\ct_{m_0, r_0} |\st\rangle) \gets \adv_1(1^\lambda, \pp, \sk{}_f \gets \keygen(\msk, f)) \\
              & \ct_{m_0,r_1} \gets \enc(\msk, m_0, r_1), r_1 \gets \mathcal{R}, r_1 \neq r_0        \end{array}\right] \leq \negl(\lambda)  
     \end{align*} 

The above $p$ needs to be a path between $\ct_{m_0, r_0}$ and  $\ct_{m_0, r_1}$. That is, a sequence of inputs of the form $(\ct, r_\delta)$ to the rerandomization algorithm $\fe.\rerand(\pp, \cdots)$ that starts from $\ct_{m_0, r_0}$ and ends with
 $\ct_{m_0, r_1}$.

$\adversary$ is admissible if and only if it makes a single key query to the oracle $\keygen(\msk,\cdot)$. 


Note that in the functional key query phase of the above security game, the adversary may attempt send a superposition query to the challenger; but the challenger can measure the register that stores the circuit $f$'s description so that it does not get to query on a superposition of circuits. 

\begin{remark}
Informally speaking, the above game allows the adversary to choose a ciphertext, and then the challenger chooses a ciphertext which is an encryption of the same message as the adversarially chosen ciphertext.  We note that this is a stronger notion of security than if the challenger sampled two random ciphertexts of the same (randomly chosen) message, and that this strengthened notion of security is essential for our path-finding assumption to hold.
\end{remark}

\end{definition}

\paragraph{Knowledge of Path Assumption for Rerandomizable FE}
We also make analogous assumptions to \ref{assump:kop1} on the knowledge of re-randomization path between two random encryptions of the same message:

\begin{assumption}
\label{assump:kop_fe}
Let $\adv$ be a quantum polynomial time adversary $\adv$ that is unitary (in the above sense where there are no measurements except the output registers), given $\pp \gets \setup(1^\lambda), \sk{}_H \gets \keygen(\msk, H)$.

Let $E$ be a quantum polynomial time extractor that is given $\adv$'s final output as well as its final state. Let $(\ct_{1},\ct_{2})$ be the $\adv$'s output, and $p$ be the output of $E$. 
	
Let $B$ be the event that (1) $H(\ct_{1})=H(\ct_{2})$, but (2) $p$ is not a path between $\ct_{1}$ and $\ct_{2}$. In other words, $B$ is the event that $\adv$ outputs two elements with the same invariant but $E$ fails to find a path between them.

The \emph{knowledge of path assumption for rerandomizable FE} is that, for any quantum polynomial time unitary $\adv$, there exists a quantum polynomial time $E$ such that $\Pr[B]$ is negligible.
\end{assumption}

Other variants of the assumption \ref{assump:kop2},\ref{assump:invertinvert} can be made analogously.

\begin{remark}
We can in fact modify the above requirement of $H(\ct_{1})=H(\ct_{2})$ into "$\ct_1$ and $\ct_1$ are encryptions of the same message $m$". As we will see later, the case that they are not of the same message can be ruled out by collision resistance of $H$.
\end{remark}

\subsection{Quantum Money from Re-randomizable FE}

Our candidate quantum money construction from re-randomizable FE follows from the framework of walkable invariants in \ref{sec:qm_scheme_invariant}.  We will describe its instantiation with a rerandomizable FE $\fe$ with the properties described previously, as well as a collision resistant hash function.

\paragraph{Setup.} the setup algorithm runs the $\fe$ setup and samples $H \gets \mathcal{H}$, computes $\sk{}_H \gets \fe.\keygen(\msk, H)$ of collision resistant hash function $H$. Publishes $\fe.\pp, \sk{}_H$.

\paragraph{Minting.}
First, prepare a uniform superposition over all strings of length $n+\ell$, where $n$ is the message length and $\ell$ is the randomness length $\sum_{x \in \{0,1\}^{n+\ell}} | x \rangle$.
\begin{itemize}
    
    \item  Compute the oblivious sampling function G in an output register to obtain $\sum_{x \in \{0,1\}^{n+\ell}} | x \rangle | G(x) \rangle$.
    
    \item Use the inverse of the oblivious sampling function, $G^{-1}$ to remove the input register $\sum_{x \in \{0,1\}^{n+\ell}} | x + G^{-1}(G(x)) \rangle | G(x) \rangle = \sum_{x \in \{0,1\}^{n+\ell}}  | G(x) \rangle$. Now equivalently, we obtain a uniform superposition of all possible ciphertexts $\sum_{m,r}  | \ct_{m,r} \rangle$.
    
    \item Compute $\fe.\dec(\sk{}_H, \cdot)$ coherently on the state above, measure the output register to obtain serial number $y$ and money state:
    \[|P_y\rangle:=\frac{1}{\sqrt{|P_y|}} \sum_{m: H(m) = y;  r}| \ct_{m,r}\rangle\]
    where $P_y$ is the set of pre-images of $y$.
\end{itemize}

\paragraph{Verification. }
The verification procedure is the same as the one described in \ref{sec:qm_scheme_invariant}.  Here, the permutation $\sigma$  is the rerandomization operation $\fe.\rerand(\pp, \cdot, r_\delta)$, specified by randomness $r_\delta$ ($\sigma^{-1}$ corresponds to $\invertre$, accordingly). The "orbit" $O_m$ for message $m$ corresponds to all possible encryptions of $m$.

\medskip

We note that correctness and security should follow immediately from the invariant money scheme if the path-finding assumptions and knowledge of path assumptions hold.

\subsection{Candidate Construction for Re-randomizable FE}
\label{sec:candidate_FE}

We briefly sketch a candidate construction for the FE scheme with the above properties we need.   We leave the construction as a sketch because we do not know how to fully instantiate it, but think a full, concrete instantiation from reasonably trusted assumptions is excellent future work.

\begin{itemize}
    \item The encryption procedure is
a permutation $P(\msk_\eval, \cdot)$ with puncturable secret key $\msk_\eval$.
To show the path-finding security defined in \ref{def:fe_pathfinding}, we (informally) characterize the security of the puncturable permutation we use as follows:
\begin{itemize}
    \item Let us denote the secret key as $\msk_\eval$.
The adversary $\adv$ is allowed to make queries on both forward evaluations $P(\msk_\eval, \cdot)$ as well as inversion evaluations $P^{-1}(\msk_\eval, \cdot)$; it is also given a "suffix" rerandomization program described as the above rerandomization program in the $\fe$ scheme. $\adv$ can then submit a challenge "prefix" $m$. The challenger samples a random $r$ and appends it to $m$\footnote{More specifically, $r$ is sampled through rejection sampling so that it does not collide with any suffix in previously queried $m||r'$ and $P(\msk_\eval, m||r')$ with the challenge prefix $m$. $\adv$ is not allowed to query on the challenge value $\eval_{m,r}$ after the challenge phase. These queries can all be seen as classical for the same reason in definition \ref{def:fe_pathfinding}.},  computes $\eval_{m,r} = P(\msk_\eval, m||r)$ and puncture the key at value $\eval_{m,r}$. It gives $\eval_{m,r}$ and the punctured key $\msk_\eval^*$ to $\adv$. $\adv$ finally outputs a value $v$ and wins if and only if $v  = P^{-1}(\msk_\eval, \eval_{m,r})$. 

\item Note that the usual indistinguishability based security notion(i.e. pseudorandomness of the evaluation at punctured points) does not work here, since $\adv$ knows the evaluation starts with prefix $m$; we therefore 
rely on a search-type security. 
\end{itemize}

\item To encrypt, we apply
$P(\msk_\eval, \cdot)$ on the concatenation of message $m$ and randomness $r$. We consider the master secret key $\msk$ for FE to contain both the (forward) evaluation key of the permutation and the key for inversion.

\item The functional decryption key for a function $f$ is an obfuscated program that hardcodes $\msk_\eval$ and function $f$: on input ciphertext $P(\msk_\eval, m||r)$, it decrypts the ciphertext to obtain $m$ using $\msk_\eval$; then outputs $f(m)$. The functional decryption procedure is hence simply running the obfuscation program on the ciphertext.

\item Randomness recoverability follows from invertibility of the permutation.
The rerandomization algorithm is also an obfuscated program that hardcodes $\msk_\eval$ and takes input ciphertext $P(\msk_\eval, m||r)$ and $r_\delta$: it inverts the input ciphertext to obtain both $m$ and $r$; computes $r' = r \oplus r_\delta$; finally outputs the re-encryption $P(\msk_\eval, m||r')$. This re-randomizarion procedure is clearly invertible.

\item Oblivious sampleability follows from the fact that encryptions of random messages in the above scheme are uniform random strings.  This follows from the fact that $P$ is a permutation.

\item For our CRHF $H$,  we would like a post-quantum CRHF candidate, for example the SIS hash function\footnote{Note that using a collapsing hash function as the invariant in an invariant money scheme does not lead to an attack: intuitively, it only "collapses" the superposition over different orbits, but the superposition of all elements in the same orbit remains hard to clone, which our verification essentially checks.}.

\end{itemize}



Most of the above programs are constructible assuming indistinguishability obfuscation\cite{garg2016candidate,dottling2021universal,boneh2017constrained_invert,boneh2017constraining}. The permutation we need is trickier to handle: one possible notion we can take use is prefix-constrained PRP discussed in \cite{boneh2017constrained_invert} and its construction remains an open problem. Besides, we need the evaluation key to be puncturable. \cite{boneh2017constrained_invert} also pointed out that puncturable PRP is impossible for the usual security notion of pseudorandomness at punctured points. But since we do not require our permutation to have such strong indistinguishability based security, the impossibility does not apply here.


 

\paragraph{Ciphertext Path-Finding security}
Now we base the path-finding security (see definition \ref{def:fe_pathfinding}) on the permutation above. We first consider a weaker security called selective security: the adversary has to commit to the challenge messages $m_0,m_1$ at the beginning of the security game, before seeing the public parameters $pp$. 


\begin{itemize}
    \item After the adversary $\adv_1$ submits $\ct_{m_0,r_0}$ to the reduction and the reduction submits it to an inversion oracle of the permutation challenger to obtain $(m_0, r_0)$.  The reduction then submits $m_0$ as its challenge prefix to the permutation challenger.

 \item The challenger samples $r_1$, append it to $m_1$. It computes a value $P(\msk_\eval,m_0||r_1)$ and puncture the key at this value. Let us call this value $\ct_{m_0, r_1}$ and the punctured key $\msk^* = \puncture(\msk_\eval, \ct_{m_0, r_1})$. 

\item The reduction receives $\ct_{m_0, r_1}$ and the punctured key $\msk^*$. It can therefore prepare the following program's obfuscation as the functional decryption key on the adversary's query $f$:

\begin{itemize}
    \item Input: $\ct$
    \item Harcoded: $\msk^*, \ct_{m_0, r_1}, y = f(m_0)$
    \item If $\ct = \ct_{m_0, r_1}$: output $y$.
    \item Else: compute  $(m||r) \gets P(\msk^*, \ct)$; output $f(m)$.
\end{itemize}

The reduction also sends $\ct_{m_0, r_1}$ as the challenge ciphertext.

\item Suppose the adversary is able to produce a path between $\ct_{m_0, r_0}$ and $\ct_{m_0, r_1}$, then the reduction can use the path to compute $r_1$ and thus knows $m_0||r_1$, which is the evaluation of $P^{-1}(\msk^*, \cdot)$ at the punctured point $\ct_{m_0, r_1}$, while presumably it should not be able 
to compute this value.

\end{itemize}

In the quantum money scheme, we need to give out the functional key before the adversary hands in the forged money states(i.e. the challenge ciphertext) and thus we need adaptive security instead of selective security. This can be achieved through complexity leveraging (the reduction needs to guess the correct $m_0$; guessing $r_0$ is not necessary) with subexponential security assumption, as it is conventionally dealt with for FE \cite{boneh2004ibe}.




\paragraph{Quantum Lightning Security.}
We discuss the security of the above quantum lightning scheme based on functional encryption.  

First, the verification correctness is satisfied since the re-randomization procedure is an invertible permutation on the ciphertexts of the same message indexed by the randomness, and therefore \ref{sec:verification_main} applies.

For security: suppose there is an adversary that produces two valid money states with the same serial number and suppose measuring both of the money states in computational basis, there are possible two events for the money states it produced:
\begin{itemize}
    \item Event 1: both states produce two ciphertexts $\ct_{m,r_1}, \ct_{m,r_2}$, which are encryptions of the same message, with all but negligible probability.
    
    \item Event 2: with non-negligible probability, two money states give encryptions of different messages, $\ct_{m_1,r_1}, \ct_{m_2, r_2}$, where $H(m_1) = H(m_2)$.
\end{itemize}
First, we can observe  the probability that Event 2 happens is negligible: otherwise such an adversary will help break the collision resistance of $H$ (this case can also be covered by the knowledge-of-path assumption \ref{assump:kop_fe} if we do not require the two outputs $(\ct_1,\ct_2)$ by $\adv$ to be of the same message. But we can rule it out completely here by collision resistance).

To rule out Event 1, we first take use of the hardness of ciphertext path-finding security \ref{def:fe_pathfinding} that we have shown.
Then combining with the knowledge-of-path assumptions defined in \ref{assump:kop_fe}, we would be able to argue that Event 1 happens with negligible probability, just as shown in section \ref{sec:security_from_kop_proof}, thus ruling out any quantum lightning adversary. The proof would be largely identical so we omit it here.

Ideally we would want to rule out Event 1 without the knowledge-of-path assumptions.
 However, such a proof is likely to be beyond the power of existing functional encryption or even indistinguishability obfuscation techniques.

\section{Instantiation from Classical Oracles and Group Actions}
\label{sec:groupaction_instantiation}

We present yet another instantiation of invariant quantum money.  In this case, we present a \emph{classical} oracle-based scheme, whereas some previous lightning schemes are based on quantum oracles. 

This scheme provides an alternative view on both of our isogenies over elliptic curve instantiation and our functional encryption instantiation: group actions can be viewed as an abstraction for certain isogeny-based cryptography, and the oracles we will use are abstractions of the algorithms in the functional encryption scheme.

We show that any adversary that can break this scheme can be used to solve the discrete logarithm problem over (quantum-accessible) generic group actions, assuming the knowledge of path property.




\subsection{Preliminaries: Cryptographic Group Actions}
\label{prelim:groupaction}

First and foremost we  provide some preliminaries for group actions.

We define cryptographic group actions following Alamati \emph{et al.}~\cite{AC:ADMP20}, which are based on those of Brassard and Yung~\cite{C:BraYun90} and Couveignes~\cite{EPRINT:Couveignes06}.  Our presentation is borrowed from~\cite{cryptoeprint:2022/1135}.  

\begin{definition}{\textbf{\textnormal{(Group Action)}}}
	A group $G$ is said to \emph{act on} a set $X$ if there is a map $\star  : G \times X \to X$ that satisfies the following two properties:
	
	\begin{enumerate}
		\item {Identity:} If $e$ is the identity of $G$, then $\forall x\in X$, we have $e \star  x = x$.
		\item {Compatibility:} For any $g,h\in G$ and any $x\in X$, we have $(g h) \star  x=g \star  (h \star  x)$.
	\end{enumerate}
\end{definition}
\noindent We may use the abbreviated notation $(G,X, \star )$ to denote a group action.  We extensively consider group actions that are \emph{regular}:

\begin{definition}
A group action $\left(G, X, \star \right)$ is said to be \emph{regular} if, for every $x_{1}, x_{2} \in X$, there exists a \emph{unique} $g \in G$ such that $x_{2} = g \star x_{1}$.
\end{definition}

We emphasize that most results in group action-based cryptography have focused on regular actions. As emphasized by~\cite{AC:ADMP20}, if a group action is regular, then for any $x\in X$, the map $f_x:g\mapsto g \star  x$ defines a bijection between $G$ and $X$; in particular, if $G$ (or $X$) is finite, then we must have $|G|= |X|$. 

In this paper, unless we specify otherwise, we will work with \emph{effective} group actions (EGAs).  An effective group action $\left(G, X, \star \right)$ is, informally speaking, a group action where all of the (well-defined) group operations and group action operations are efficiently computable, there are efficient ways to sample random group elements, and set elements have unique representation. 

In this work we will also use the \emph{group action discrete logarithm} problem.

\begin{definition}{\textbf{\textnormal{(Group Action Discrete Logarithm)}}}~\label{def:GADL}
	Given a group action $(G,X, \star )$ and distributions $(\mathcal{D}_X,\mathcal{D}_G)$, the group action discrete logarithm problem is defined as follows:  sample $g \leftarrow \mathcal{D}_{G}$ and $x \leftarrow \mathcal{D}_{X}$, compute $y = g \star x$, and create the tuple $T = \left(x, y \right)$.  We say that an adversary solves the group action discrete log problem if, given $T$ and a description of the group action and sampling algorithms, the adversary outputs $g$. 
\end{definition}

\subsubsection{A Generic Group Action Framework}~
\label{sec:GGA}
In this section, we present the generic group action framework from \cite{cryptoeprint:2022/1135}. Their framework is based on the generic group framework of Shoup~\cite{EC:Shoup97}. The following is taken mostly verbatim from \cite{cryptoeprint:2022/1135}.

Let $G$ be a group of order $n$, let $X$ be a set that is representable by bit strings of length $m$, and let $\left(G, X, \star \right)$ be a group action.  We define additional sets $S_{G}$ and $S_{X}$ such that they have cardinality of at least $n$ and $2^m$, respectively.  We define \emph{encoding functions} of $\sigma_{G}$ and $\sigma_{X}$ on $S_{G}$ and $S_{X}$, respectively, to be injective maps of the form $\sigma_{G} : G \rightarrow S_{G}$ and $\sigma_{X}:  X \rightarrow S_{X}$.

A generic algorithm $\mathcal{A}$ for $\left(G, X, \star \right)$ on $\left(S_{G}, S_{X} \right)$ is a probabilistic algorithm that behaves in the following way.  It takes as input two \emph{encoding lists} $\left( \sigma_{G} \left(g_{1} \right), ... , \sigma_{G} \left(g_{k} \right) \right)$ and $\left(\sigma_{X} \left( x_{1} \right), ... , \sigma_{X} \left(x_{k'} \right) \right)$ where each $g_{i} \in G$ and $x_{i} \in X$ and where $\sigma_{G}$ and $\sigma_{X}$ are encoding functions of $G$ on $S_{G}$ and $X$ on $S_{X}$, respectively.  As the algorithm executes, it may consult two oracles, $\mathcal{O}_{G}$ and $\mathcal{O}_{X}$.

The oracle $\mathcal{O}_{G}$ takes as input two strings $y, z$ representing group elements and a sign ``+'' or ``--'', computes $\sigma_{G} \left(\sigma_G^{-1}(y) \pm \sigma_G^{-1}(z) \right)$. The oracle $\mathcal{O}_{X}$ takes as input a string $y$ representing a group element and string $z$ representing a set element, and computes $\sigma_{X} \left( \sigma_G^{-1}(y) \star \sigma_X^{-1}(z) \right)$. As is typical in the literature, we can force all queries to be on either the initial encoding lists or the results of previous queries by making the string length $m$ very long. We typically measure the running time of the algorithm by the number of oracle queries.

\medskip

It is also possible extend the generic group action model to the quantum setting, where we allow \emph{quantum} queries to the oracles. We model quantum queries in the usual way: $\mathcal{O}_{G}\sum_{y,z,\pm,w}\alpha_{y,z,\pm,w}|y,z,\pm,w\rangle=\sum_{y,z,\pm,w}\alpha_{y,z,\pm,w}|y,z,\pm,w\oplus\mathcal{O}_{G}(y,z,\pm)\rangle$ and $\mathcal{O}_X\sum_{y,z,w}\alpha_{y,z,w}|y,z,w\rangle=\sum_{y,z,w}\alpha_{y,z,w}|y,z,w\oplus \mathcal{O}_X(y,z)\rangle$.

\subsubsection{Post-Quantum Instatiation of Group Actions}


Some isogeny based group actions are CSIDH \cite{castryck2018csidh},
CSI-FiSh \cite{beullens2019csi} as well as its derivatives/applications \cite{de2020threshold}. 
We refer the readers to \cite{alamati2020cryptographic} for a detailed discussion on the classification of various isogeny protocols into group action
definitions.

Some other group actions as post-quantum candidates are not isogeny-based, for example
\cite{ji2019general}.


\begin{remark}
A few very recent works 
\cite{cryptoeprint:2022/975,cryptoeprint:2022/1026,robert2022breaking}
break SIDH by showing how to solve
the discrete log problem. However, the attack exploits certain extra points that are made
public in SIDH, and these points are exactly one of the reasons that SIDH is \emph{not} a group action.
In particular, the attack does not seem to apply to CSI-FISH or CSIDH, the main instantiations
of group actions.
\end{remark}




\subsection{The Oracles}
We begin by defining some oracles that we will use.

 Let $n_{0}$, $n_{1},  n_{2}$ be positive integers. 
Let $G$ be a prime-order cyclic group such that $|G| >> n_{2}$, and suppose we have a generic \emph{regular} group action defined by $\left(G, X, \star \right)$.  We also require $n_{0} >> n_{1} + |G|$.

We work in the Generic Group Action framework defined in the previous section ~\ref{sec:GGA}. We will sometimes assume that the oracles $\cO_G, \cO_X$ are given implicitly. 


To implement our invariant function. we first need a collision-resistant function(or a random oracle):
\[
H: \{0,1\}^{n_1} \rightarrow \left\{0, 1\right\}^{n_{2}}
\]

With this in mind, we can define other functions as well.  We define a set of four functions as follows.  These will be our ``setup'' and ``utility'' oracles. We need $|S| = 2^{n_0}$; $P,Q$ are bijective funtions (we will generally model these as random oracles subject to certain constraints. ).
\[
Q:  \left\{0, 1\right\}^{n_{0}}  \rightarrow S\]
\[
Q^{-1} :  S \rightarrow \left\{0, 1\right\}^{n_{0}}
\]
\[
P:  \left\{0, 1\right\}^{n_{1}} \times X  \rightarrow S
\]
\[
P^{-1} :  S  \rightarrow \left\{0, 1\right\}^{n_{1}} \times X
\]
We include a few extra restrictions.  We first require that, for all tuples of bit strings $\xv \in \left\{ 0, 1 \right\}^{m}$ and elements $g \in G$, we have $Q^{-1} \left(Q \left( \xv, g \right) \right)$ and $P^{-1} \left(P \left( \xv , g\right) \right)$.  Moreover, we require that the \emph{output sets} of $P$ and $Q$ be \emph{identical}, although the mappings themselves are random subject to this constraint (and thus, almost certainly different).

We next define our invariant oracle as follows.  Let $S$ be the set of bit strings in the output space of $P$ and $Q$.
\[
I:  S \rightarrow \left\{0, 1\right\}^{n_{2}}
\]
Note that we are assuming that $I$ takes as input an element $e_{\xv,x} = P(\xv, x)$ of $S$ and returns the invariant $H(\xv)$ computed in an output register. Note that we can instantiate oracle $I$ as follows: first apply $P^{-1}$ on $e_{\xv,x}$ to get back to $\xv||x$, then apply the function $H$ on $\xv$. Since $I$ can only be accessed as an oracle, functionality of $P^{-1}$ is not given out.


Finally, we must define our ``random walking'' function $R$:
\[
R:  S \times G \rightarrow S
\]

We define $R$'s functionality as: $\forall x \in X, \xv \in \{0,1\}^{n_1}, g \in G: R \left( P \left( \xv, x \right), g \right) = P \left( \xv, g \star x \right)$.  $R$ can be implemented using $P^{-1}$ on element $e_{\xv,x} = P(\xv,x)$ to recover $(\xv,x)$; then it calls the group action oracle $\cO_G$ to apply $g \star x$; finally it outputs the result by applying $P$ on $(\xv, g \star x)$. Naturally, to apply a reverse random walk, one would use $g^{-1}$.


\subsection{Quantum Money from Oracles and Group Actions}

As before, the quantum money(lightning) scheme in this section follows from the framework of walkable invariants in \ref{sec:qm_scheme_invariant}.  We will define the scheme and then prove that it instantiates a secure walkable invariant.

\paragraph{Setup} the setup algorithm generates all of the oracles defined in the previous section.  It then publishes $Q$, $Q^{-1}$, $H$, $I$ and $R$, as well as a description of all the parameters and the relevant oracles for the generic group action $\left(G, X, \star \right)$.
Note that $P, P^{-1}$ are kept secret.

\paragraph{Minting}
First, prepare a uniform superposition over all strings of length $n_{0}$,  giving us the following:  $\sum_{s \in \{0,1\}^{n_{0}}} | s \rangle$.  
\begin{itemize}
    
    \item  Compute the oblivious sampling function $Q$ in an output register to obtain $\sum_{s \in \{0,1\}^{n_{0}}} | s \rangle | Q(s) \rangle$.
    
    \item Use the inverse of the oblivious sampling function, $Q^{-1}$ to remove the input registers $\sum_{s \in \{0,1\}^{n_{0}}} | s \rangle$.  This allows us to obtain a uniform superposition of all possible values  $\sum_{s \in \{0,1\}^{n_{0}}}| Q(s) \rangle$. Equivalently, we have obtained $\sum_{e \in S}| e \rangle = \sum_{\xv \in \{0,1\}^{n_1},  x \in X}| P \left(\xv , x \right) \rangle$.
    
    \item Compute the invariant function $I$ coherently on the state above, and then measure the output register to obtain serial number $y$ and money state:
\[|M_y\rangle:=\frac{1}{\sqrt{|M_y|}} \sum_{\xv : H(\xv ) = y;  x}| P \left(\xv , x \right) \rangle\]
    where $M_y$ is the set of pre-images of $y$.
\end{itemize}


\paragraph{Verification}
The verification procedure is straightforward and the same as the one in \ref{sec:qm_scheme_invariant}.  Note that this is even simpler than defined above because we required $G$ to be a prime-order cyclic group.

Here, the permutation $\sigma$ is the rerandomization oracle $R$ specified by randomness $g \in G$. The "orbit" $O_{\xv}$ corresponds to all possible evaluations of $P$ on a fixed bit string $\xv$.

\subsection{Quantum Lightning Security}


We also have a reasonably reliable hard problem to reduce our path-finding hardness to, namely, the group action discrete log problem.

To prove security, we will first show that the Path-Finding hardness holds in the GGA framework.  

\begin{definition}[Path-Finding Game for Group Actions]
\label{assump:pathfinding_group} Consider an adversary $\adv$ playing the following game: 
	\begin{itemize}
	    \item Give the adversary access to $Q,Q^{-1},R, I$ (and the oracles in GGA framework).
		\item The adversary outputs an $e_{\xv,x} \in S$.
		\item The challenger then computes a random $e_{\xv,z} \in O_{\xv}$, where the  "orbit" $O_{\xv}$ corresponds to the set of all possible evaluations of $P$ on a fixed bit string $\xv$ . 
		\item The adversary wins if it can output a path $p$ between $e_{\xv,x}$ to $e_{\xv,z}$, where a path is a sequence of \emph{classical} queries to oracle $R$. 
	\end{itemize}

\end{definition}

We next show that for all quantum polynomial-time adversaries $\adv$, the probability $\adv$ wins in the above game is negligible.

\begin{lemma}
\label{lem:pathfinding_groupaction}
Any adversary that breaks the Path-Finding game in \ref{assump:pathfinding_group} can be used to solve discrete logarithm on $\left(G, X, \star \right)$ (definition \ref{def:GADL}).
\end{lemma}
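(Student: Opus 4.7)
The plan is to reduce group-action discrete logarithm to the path-finding game: given a DL challenge pair of encodings $\tilde{X}=\sigma_X(\tilde{x})$ and $\tilde{Z}=\sigma_X(\tilde{z})$ with $\tilde{z}=g^\star\star\tilde{x}$ and $g^\star$ unknown, the reduction $\mathcal{B}$ will run the path-finding adversary $\mathcal{A}$ as a subroutine by simulating all of its oracles, embed the DL challenge into the challenger's response to $\mathcal{A}$, and read $g^\star$ off of $\mathcal{A}$'s output path. The key observation is that, inside the generic group action framework, $\mathcal{A}$ can touch set elements only through the encoding strings and the oracles $\cO_G,\cO_X,Q,Q^{-1},R,I$, so $\mathcal{B}$ is free to simulate $P$ and $P^{-1}$ lazily in terms of those encodings and never needs to know $g^\star$ in the clear.

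Concretely, $\mathcal{B}$ maintains a table $\Ts$ implementing $P^{-1}$ lazily, with entries $(e,\xv,y)\in S\times\{0,1\}^{n_1}\times S_X$ meaning $P^{-1}(e)=(\xv,\sigma_X^{-1}(y))$. Whenever a brand-new $e$ must be labelled, $\mathcal{B}$ picks fresh uniform $\xv$ and a fresh uniform $g\in G$ and inserts $(e,\xv,\cO_X(\sigma_G(g),\tilde{X}))$, so that the implicit underlying set element is $g\star\tilde{x}$. The oracle $I(e)$ is answered by fetching $\xv$ from the record and returning $H(\xv)$; the oracle $R(e,h)$ is answered by fetching $(e,\xv,y)$, computing $y'=\cO_X(\sigma_G(h),y)$, scanning $\Ts$ for an existing record of the form $(e^\ast,\xv,y')$ (returning that $e^\ast$ if found), and otherwise allocating a fresh $e'$ and inserting $(e',\xv,y')$. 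The bijection $Q,Q^{-1}$ is simulated as an independent uniformly random bijection $\{0,1\}^{n_0}\leftrightarrow S$ via standard (quantum) lazy sampling. When $\mathcal{A}$ eventually outputs its chosen element $e_1$, $\mathcal{B}$ fetches its record $(e_1,\xv,\sigma_X(g_1\star\tilde{x}))$, samples a uniform $g_2\in G$, sets $y_2:=\cO_X(\sigma_G(g_2),\tilde{Z})$ (which equals $\sigma_X(g_2g^\star\star\tilde{x})$ without $\mathcal{B}$ ever manipulating $g^\star$ in the clear), inserts a fresh record $(e_2,\xv,y_2)$ into $\Ts$, and returns $e_2$ as the challenger's response. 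Since $g_2 g^\star$ is uniform in $G$, the underlying set element of $e_2$ is uniform in $X$, matching the honest challenger's distribution exactly.

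Once $\mathcal{A}$ outputs a classical path $p=(h_1,\dots,h_k)$, $\mathcal{B}$ forms $h:=h_k\cdots h_1$ using $\cO_G$ and outputs the candidate $g^\star:=g_2^{-1}hg_1$. For correctness, note that $\mathcal{A}$ wins precisely when iteratively applying the simulated $R$ to $e_1$ with $h_1,\dots,h_k$ ends at $e_2$; by the construction of $R$, this is equivalent to $\cO_X(\sigma_G(hg_1),\tilde{X})=y_2=\cO_X(\sigma_G(g_2),\tilde{Z})$, which by regularity of the group action forces $hg_1=g_2g^\star$ and hence $g^\star=g_2^{-1}hg_1$. The main technical obstacle I anticipate is arguing that the lazy simulation above is statistically indistinguishable from the idealized oracles in the view of a \emph{quantum} adversary: this should go through by running the lazy sampling inside a Zhandry-style compressed-oracle framework and bounding the only bad event, namely that two records which should have collapsed to the same $e$ are instead assigned distinct fresh strings, by a union bound of order $\mathrm{poly}/|G|$ against the number of records (negligible, since $|G|$ is super-polynomial). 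Because the GGA framework already hides $g^\star$ from $\mathcal{A}$ except through our simulated interface, this loss propagates into a negligible additive gap between $\mathcal{B}$'s discrete-log success probability and $\mathcal{A}$'s path-finding success probability.
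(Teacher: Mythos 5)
Your reduction is correct, and its heart is the same observation the paper relies on: a winning path is a sequence of classical $R$-queries, so composing the corresponding group elements yields the unique (by regularity) group element carrying the hidden set element behind $e_{\xv,x}$ to the one behind $e_{\xv,z}$, which is exactly a group-action discrete logarithm. The difference is in how much of the reduction is made explicit. The paper's proof is a two-sentence remark: it identifies the hidden pair $(x,z)$ inside the path-finding game itself with the DL instance and declares the extraction immediate, never discussing how an external challenge $(\tilde{X},\tilde{Z})$ is embedded or how the secret oracle $P$ and the derived oracles $I,R,Q,Q^{-1}$ are simulated by the reduction. You instead give the full black-box reduction: lazy tables standing in for $P^{-1}$, every fresh label planted at a known offset $g\star\tilde{x}$, the challenger's response planted at $g_2\star\tilde{z}$, and the answer read off as $g^\star=g_2^{-1}hg_1$ (you should also handle the path possibly running from $e_2$ to $e_1$, which just means trying $g_2^{-1}h^{-1}g_1$ as well). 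This buys an actual reduction to Definition~\ref{def:GADL} with the standard random self-reduction of group-action DL built in, and it surfaces the one genuine technical issue the paper silently skips: the adversary's oracle access is quantum, so the lazy simulation must be justified. Your appeal to a compressed-oracle-style argument is the right instinct but is the one step that remains a sketch, and it is somewhat more delicate than you suggest, since $Q,Q^{-1}$ is an invertible random bijection (quantum lazy sampling of two-way permutations is not handled by the plain compressed-oracle framework) and $R$ is a structured, consistency-constrained oracle rather than a random function; the paper sets no standard here to compare against, since it omits the simulation entirely, but a complete write-up would need to discharge this point.
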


\begin{proof}
This is almost immediate due to the construction of our scheme. Suppose there exists an adversary that outputs a path, i.e. a sequence of classical oracle queries made to $R$ to get from $e_{\xv, x}$ to $e_{\xv, z}$, then one can output a sequence of group elements $g_1, \cdots, g_k$ such that $z = (g_k \cdots g_1) \star x$.
\end{proof}


We give a variant for the Knowledge of Path assumption \ref{assump:kop1} in the generic group action framework:

\begin{assumption}\label{assump:kop1_group}Let $\adv$ be a quantum polynomial time adversary $\adv$ that is unitary (in the above sense where there are no measurements except the output registers), given oracle access to $Q,Q^{-1},R, I$.

Let $E$ be a quantum polynomial time extractor that is given $\adv$'s final output as well as its final state. Let $(e_x,e_z)$ be the $\adv$'s output, and $p$ be the output of $E$. 
	
Let $B$ be the event that (1) $I(e_x)=I(e_z)$, but (2) $p$ is not a path between $x$ and $z$. In other words, $B$ is the event that $\adv$ outputs two elements with the same invariant but $E$ fails to find a path between them.

The \emph{knowledge of path assumption for group actions} is that, for any quantum polynomial time unitary $\adv$, there exists a quantum polynomial time $E$ such that $\Pr[B]$ is negligible.
\end{assumption}

The above event $B$ can again be divided into two cases: (1) $\adv$'s outputs $e_x, e_z$ are in different "orbits", i.e. $e_x = P(\xv_1,x)$ and $e_z = P(\xv_2,z)$ for some $\xv_1 \neq \xv_2, H(\xv_1) = H(\xv_2)$; (2)
$e_x, e_z$ are in the same orbit. The first case is ruled out by the collision resistance of $H$. Thus we can  modify the above assumption into requiring that $e_x. e_z$ are in the same orbit.

We can also analogously transform \ref{assump:kop2}, \ref{assump:invertinvert} to the group action version.

Therefore, using the path-finding hardness for group actions shown above \ref{lem:pathfinding_groupaction} together with the above assumptions, the security proof will be almost identical to section \ref{sec:security_from_kop_proof}, ruling out any quantum lightning adversary.

.


\section{Instantiation Using Knots}
\label{sec:knot_instantiation}
The inspiration for our invariant scheme was the construction of quantum money from knots in~\cite{ITCS:FGHLS12}.  In this section, we explain how this construction can be modelled in our framework and what assumptions on knots need to be true in order for our security proof to apply. While it's possible that their verification procedure may not satisfy correctness(the Markov chain does not mix) and the security property is challenging to investigate, we believe an alternative view would provide insight into proving/breaking their scheme.

\paragraph{On Knots.}  Recall that a knot is an embedding of the circle into three-dimensional Euclidean space, or more informally, as the authors of ~\cite{ITCS:FGHLS12} nicely put it, a loop of string in three dimensions.  Informally speaking, two knots are considered to be \emph{equivalent} if they can be morphed into each other without ``cutting the string''.  We say that a function is a \emph{knot invariant} if it has the same value on all equivalent knots.

Knots (of a certain maximum size) can be represented by \emph{planar grid diagrams}, which are $d \times d$ grids containing exactly $d$ Xs and $d$Os such that each row and column of the grid have exactly one X and one O, and there is no space in the grid with both an X and an O.  Alternatively, knots can be represented by a tuple of two disjoint permutations of size $d$.  We note that knots with smaller numbers of ``elements'' (i.e. with less than $d$ Xs and Os) can be gracefully represented on $d \times d$ grids by leaving certain columns and rows blank.

\emph{Reidemeister moves}~\cite{reidemeister1927elementare} are three types of local moves that can be applied to a planar grid diagram with the properties that any two knots that are reachable from each other by Reidemeister moves are equivalent and that any two equivalent knots must be reachable by the application of a finite number of Reidemeister moves (although the best bound on the minimum number of moves is enormous~\cite{coward2014upper}).

It will be useful to think of Reidemeister moves as a set $R$ acting on the set of planar grid diagrams of size $d$ $S_{d}$.  While each move $r \in R$ has an inverse, it is unfortunately impossible to model this interaction as a group action because the application of Reidemeister moves may not be associative.  

We defer a full description of knots, Reidemeister moves, and associated concepts to~\cite{ITCS:FGHLS12}; we will not need the full formalism to convey our ideas here.

\subsection{A General Description of the~\cite{ITCS:FGHLS12} Scheme}
We next provide a general description of the quantum money from knots scheme in~\cite{ITCS:FGHLS12}.  We omit some of the details of the scheme (some of which are important for its security) so we can present the scheme in its most general form.

\paragraph{Minting.}  To mint a note, start by constructing a \emph{specific} superposition over grid diagrams of size $d$, which we refer to as $| S_{d} \rangle$ over $S_{d}$.  In particular, a uniform distribution over all possible knots represented by $d \times d$ planar grid diagrams is \emph{not} chosen for security reasons.  Instead, knots are weighted in the superposition based on the number of ``elements'' present in the grid diagram representation according to a Gaussian distribution centered on $\frac{d}{2}$. 

After generating $| S_{d} \rangle$, compute a knot invariant $A$ on the superposition $| S_{d} \rangle$, store it in an adjacent register, and then measure it, getting some value $y$.  The serial number is the value of this invariant, and the money state becomes $\frac{1}{\sqrt{|\sum_{A \left( S_{d} \right) = y} |}} \sum_{A \left( S_{d} \right) = y} | S_{d} \rangle$.  In~\cite{ITCS:FGHLS12}, the authors use the Alexander polynomial as the knot invariant of choice.

\paragraph{Verification.}  To verify, a quantum verification procedure based on a classical Markov chain is applied.  Essentially, many randomly chosen Reidemeister moves are applied to the money state superposition with the restriction that Reidemeister moves that would expand the planar grid diagram beyond $d$ dimensions are ignored.  This (somewhat) simulates a (not necessarily uniform) random walk on the graph of equivalent knots.

In other words, the verification procedure is almost identical to our process described in section~\ref{sec:invariant_main}.  The set of all Reidemeister moves on $d \times d$ grid diagrams form something very close to a set of permutations $\sigma_{i}:  S_{d} \rightarrow S_{d}$; it turns out to be possible to construct permutations on grid diagrams from Reidemeister moves such that the permutation set enables all possible Reidemeister moves.\footnote{To be precise, we just need to pair Reidemeister moves that increase the number of elements in the grid diagram with a corresponding move that ``undoes'' the transformation, and choose which move we apply based on the number of elements in the grid diagram.} 

\subsection{Correctness Properties}
In order to fit in our quantum money framework, we require two essential properties of a scheme:  efficient generation of superpositions and mixing walks.

\paragraph{Efficient Generation of Superpositions.}  The authors of~\cite{ITCS:FGHLS12} show clearly how to generate the appropriate superpositions.  It is a slightly long presentation so we defer to their paper.

\paragraph{Mixing Walks.}  In the context of our framework, ``orbits'' are knots and ``permutations'' are Reidemeister moves.  Our framework requires uniform superpositions over orbits, while the construction in~\cite{ITCS:FGHLS12} uses non-uniform superpositions.  However, if the mixing process preserves the starting superposition as the authors of~\cite{ITCS:FGHLS12} conjecture, then their knot scheme could be fit into our framework with relatively minor modifications to the framework.

However, a formal proof that mixing occurs would be quite difficult and likely involve solving several longstanding open problems in knot theory.  A uniform mixing process for knots could only work in polynomial time if the number of Reidmeister moves between all equivalent knots of a certain size were bounded.  Unfortunately, the only known bound for the number of Reidemeister moves between arbitrary knots invoves a ``tower of exponentials'' function~\cite{coward2014upper}.  While it is known that the number of Reidemeister moves between unknot\footnote{The unknot is the simplest possible knot:  just a circle.} representations of a certain size is polynomially bounded~\cite{lackenby2015polynomial}, there are no known results for more general knots.

It may be the case that the restriction to $d \times d$ grids may make proving mixing easier, but we could not effectively utilize this fact.

We emphasize that the lack of a good mixing algorithm may not constitute an attack:  it could still be the case that it is hard for an adversary to find a superposition that is unchanged by the existing mixing algorithm.  But these sorts of statements and proofs are outside of our framework.

\subsection{Security Properties}
If we want to build secure quantum lightning from knots using our framework, we would need to show that two assumptions are true:  the hardness of path-finding assumption and the knowledge of path assumption.

\paragraph{Hardness of Path-finding.}  It has long been conjectured that the \emph{knot equivalence problem}--in other words, distinguishing whether or not two knots are equivalent or not--is hard~\cite{dynnikov2003recognition}.  The hardness of path assumption in this scheme would correspond to actually finding the Reidemeister move set necessary to transform one knot into another equivalent knot, and any efficient algorithm for this would immediately imply an efficient algorithm for the knot recognition problem.  It is known that determining if two knots are equivalent or not is decidable~\cite{lackenby2016elementary}, but it is not even known if the problem is in NP.  Presumably if there were polynomial-length numbers of moves between equivalent knots of certain sizes, the problem would be in NP, but this is not known.

On the other hand, Lackenby~\cite{lackenbyannounce} has recently announced a quasipolynomial algorithm for recognizing the unknot.  If this could be generalized to other knot equivalences (which may or may not be possible), it would spell trouble for basing cryptographic primitives on the hardness of knot equivalence.

\paragraph{Knowledge of Path Assumption.}  It is more difficult to assess the knowledge of path assumption over knots.  Intuitively, this asks whether or not it is possible to create two equivalent knots without knowing a path between them (in particular, the path must only involve knots that fit on $d \times d$ planar grids).  We certainly do not see an easy way to do this, but to our knowledge no one has ever studied this problem.

\end{document}

